\newtheorem{theorem}{Theorem}
\newtheorem{proposition}[theorem]{Proposition}
\newtheorem{lemma}[theorem]{Lemma}
\newtheorem{corollary}[theorem]{Corollary}
\newtheorem*{theorem*}{Theorem}
\newtheorem*{conjecture*}{Conjecture}
\theoremstyle{remark}
\newtheorem{remark}[theorem]{Remark}
\newtheorem{example}[theorem]{Example}
\newcommand{\GL}{{\mathrm{GL}}}
\newcommand{\AGL}{{\mathrm{AGL}}}
\newcommand{\w}{{\mathrm{w_H}}}
\newcommand{\sgn}{{\mathrm{sgn}}}
\newcommand{\Aut}{{\mathrm{Aut}}}
\newcommand{\Aff}{\mathbb{A}}
\newcommand{\Lr}{\mathcal{L}_r}
\newcommand{\Sr}{\mathrm{S}_r}
\newcommand{\Si}{\mathrm{S}_i}
\newcommand{\Sj}{\mathrm{S}_j}
\newcommand{\M}{\mathcal{M}}
\newcommand{\basis}{\EuScript{B}(\ell,m;r)}
\newcommand{\fstar}{\EuScript{F}^*(\ell,m;r)}
\newcommand{\fstarmin}{\EuScript{F}^*_{\rm min}(\ell,m;r)}
\newcommand{\tsM}{{\mathsf{t}}_{\sigma}({\mathcal{M}})}%
\newcommand{\tpM}{{\mathsf{t}}_{\pi}({\mathcal{N}})}
\newcommand{\BtM}{{\mathsf{B}}_{{\mathcal{M}}, \sigma}}
\newcommand{\teM}{{\mathsf{t}}_{\epsilon}({\mathcal{M})}}
\newcommand{\ttM}{{\mathsf{t}}_{\tau}({\mathcal{M})}}
\newcommand{\Ll}{\mathcal{L}_{\ell}}
\newcommand{\LM}{\mathcal{L}}
\newcommand{\Z}{\mathbb{Z}}
\newcommand{\Adelta}{\mathbb{A}^{\delta}\left({\mathbb F}_q\right)}
\newcommand{\Ad}{\mathbb{A}^{\delta}}
\newcommand{\F}{\mathbb{F}_q}
\newcommand{\Fq}{\mathbb{F}_q}
\newcommand{\fqd}{{\mathbb{F}_q}[T]_{\le d}}
\newcommand{\fqds}{{\mathbb{F}_q}[T_1, \dots , T_s]_{\le (d_1, \dots , d_s)}}
\newcommand{\mdts}{M_{(d_1, \dots , d_s)}[T_1, \dots , T_s]}
\newcommand{\Fqm}{\mathbb{F}_q^m}
\newcommand{\C}{C^\mathbb{A}(\ell,m)}
\newcommand{\Cr}{C^\mathbb{A}(\ell,m;r)}
\newcommand{\Cl}{C^\mathbb{A}(\ell,m;\ell)}
\newcommand{\Mon}{\mathbb{M}(\ell,m)}
\newcommand{\RMon}{\overline{\mathbb{M}}(\ell,m)}
\newcommand{\FM}{\overline{\mathrm{F}\mathbb{M}}(\ell,m;r)}
\newcommand{\rp}{\mathfrak{R}(\ell,m)}
\newcommand{\Ev}{\mathrm{Ev}}
\newcommand{\RM}{\mathrm{RM}}
\newcommand{\Term}{\mathrm{Term}}
\newcommand{\supp}{\mathrm{supp}}
\newcommand{\full}{\mathsf{F}}
\newcommand{\Bnot}{\mathsf{B}_0}
\newcommand{\rec}{\mathsf{R}}
\begin{document}

\title{Duals of Affine Grassmann Codes and their Relatives}
%\thanks{Peter Beelen and Tom H{\o}holdt are with the Department of Mathematics, Technical University of
%Denmark, DK 2800, Lyngby, Denmark. E-mail addresses: {\tt p.beelen@mat.dtu.dk} and {\tt T.Hoeholdt@mat.dtu.dk} \\ \indent Sudhir R. Ghorpade is with the Department of Mathematics, Indian Institute of Technology Bombay, Powai, Mumbai 400076. E-mail address: {\tt srg@math.iitb.ac.in}} }
%\author{Peter Beelen, Sudhir R. Ghorpade, and Tom H{\o}holdt, \textit{F\lowercase{ellow}, IEEE}}

\author{Peter Beelen}
\address{Department of Mathematics, Technical University of Denmark, \newline \indent
DK 2800, Lyngby, Denmark.}
\email{p.beelen@mat.dtu.dk}

\author{Sudhir R. Ghorpade}
\address{Department of Mathematics,
Indian Institute of Technology Bombay,\newline \indent
Powai, Mumbai 400076, India.}
\email{srg@math.iitb.ac.in}

\author{Tom H{\o}holdt}
%\footnote{Fellow, IEEE}}
\address{Department of Mathematics, Technical University of Denmark, \newline \indent
DK 2800, Lyngby, Denmark.}
\email{T.Hoeholdt@mat.dtu.dk}

\date{\today}

\begin{abstract}
Affine Grassmann codes are a variant of generalized
Reed-Muller codes and are closely related to Grassmann codes. These codes were introduced in a recent work
%(IEEE Trans. Information Theory}, vol. 56, pp. 3166--3176, 2010.
\cite{BGH}. Here we consider, more generally, affine Grassmann codes of a given level. We explicitly determine the
dual of an affine Grassmann code of any level and compute its minimum distance.
Further, we ameliorate the results of \cite{BGH} concerning the automorphism group of affine Grassmann codes.
Finally, we prove that affine Grassmann codes %of any level
and their duals have the property that they are linear codes generated by their minimum-weight codewords.
This provides a clean analogue of a corresponding result for generalized Reed-Muller codes.
% by Delsarte, Ding and Key. In particular

\end{abstract}
\maketitle
%\tableofcontents

\section{Introduction}

%Affine Grassmann codes are linear error correcting codes that can be viewed as a variant of Reed-Muller codes. These were introduced in \cite{BGH}.
Fix %a prime power $q$
a finite field $\F$ with $q$ elements
and positive integers $\ell, \ell'$ with $\ell\le \ell'$; set
$$
m = \ell + \ell' \quad \mbox{and} \quad \delta = \ell \ell'.
$$
%Let $X= \left(X_{ij}\right)$ be a $\ell \times \ell'$ matrix whose entries are algebraically independent indeterminates over the finite field $\F$ with $q$ elements. Let ${\mathcal F}(\ell,m)$ be the $\Fq$-linear space generated by all the minors of $X$
% the corresponding affine Grassmann code $\C$ [where $m:=\ell+\ell'$] is the
%linear code given by the image of the evaluation map from the space of linear polynomials in the minors of a generic
%$\ell \times \ell'$ matrix to at all points of the corresponding $\ell\ell'$-dimensional affine space over $\Fq$. %a finite field.
Briefly put, the affine Grassmann code $\C$ is the $q$-ary linear code obtained by evaluating
linear polynomials in the minors of a generic
$\ell \times \ell'$ matrix $X$ at all points of the $\delta$-dimensional affine space of $\ell \times \ell'$ matrices with entries in $\F$. Evidently, when $\ell =1$, this gives the first
order generalized Reed-Muller code ${\rm RM}(1,\ell')$.  However, in general, $\C$ is only a subcode of the $\ell^{\rm th}$ order generalized Reed-Muller code ${\rm RM}(\ell, \delta)$. The length $n$ and the dimension $k$ of $\C$ are given by
$$
n = q^{\delta} \quad \mbox{and} \quad k = \binom{m}{\ell}.
$$
Affine Grassmann codes were introduced in \cite{BGH}, where the following %properties were proved.
was shown.
\begin{itemize}
	\item The minimum distance of $\C$ is
	\begin{equation}
\label{eq:dlm}
	d(\ell,m) := q^{\delta-\ell^2}\prod_{j=0}^{\ell-1}(q^\ell-q^j) = q^{\delta}\prod_{i=1}^{\ell}\left(1 - \frac{1}{q^i}\right).
	\end{equation}
\item
The (permutation) automorphism group of %the affine Grassmann code
$\C$ contains a subgroup isomorphic to the semidirect product $M_{\ell\times\ell'}(\F) \rtimes_{\theta} \GL_{\ell'}(\F)$ of the additive group of $\ell\times\ell'$ matrices over $\F$ with the multiplication group of $\ell'\times\ell'$ nonsingular matrices over $\F$, where  $\theta: GL_{\ell'}(\F) \to \Aut(M_{\ell\times\ell'}(\F))$ is the homomorphism defined by $\theta(A)(\mathbf{u}):= \mathbf{u}A^{-1}$.
\item
The minimum-weight codewords of $\C$ are precisely the evaluations of leading maximal minors (formed by the $\ell$ rows and the first $\ell$ columns) of
$X'$, where $X'=XA^{-1}+\mathbf{u}$ for some $A \in \GL_{\ell'}(\F)$ and $\mathbf{u} \in M_{\ell \times \ell'}(\F)$.
\item
The number of minimum-weight codewords of $\C$ is given, in terms of the Gaussian binomial coefficients (defined below for any $a\ge b\ge0$),
%nonnegative integers $a,b$),
by
\begin{equation}
\label{eq:Ad}
%A_d:=
(q-1)q^{\ell^2} {{\ell'}\brack{\ell}}_q \quad \text{where} \quad
{{a}\brack{b}}_q := \frac{(q^{a}-1)(q^{a}-q)\cdots (q^{a}-q^{{b}-1})}{(q^{b}-1)(q^{b}-q)\cdots (q^{b}-q^{{b}-1})}.
%\frac{(q^{\ell'}-1)(q^{\ell'}-q)\dots (q^{\ell'}-q^{{\ell}-1})}{(q^{\ell}-1)(q^{\ell}-q)\dots (q^{\ell}-q^{{\ell}-1})}.
\end{equation}
\end{itemize}
In this paper we continue the study of affine Grassmann codes and give an explicit description of the dual of $\C$. As a result, it will be seen that affine Grassmann codes are almost always self-orthogonal. Moreover, we
determine precisely the minimum distance
of ${\C}^{\perp}$ and show that it is at most $4$.
%show that the minimum distance
%of ${\C}^{\perp}$ is either $4$ or $3$ according as $q=2$ or $q\ge 3$. %and it is $3$ if $q\ge 3$.
Thus, it is seen that the parity check matrix of $\C$ is rather sparse and that an affine Grassmann code
may be regarded as a LDPC code. Further, following a suggestion by an anonymous referee of \cite{BGH},
we augment the abovementioned result on the automorphism group of $\C$
by showing that $\Aut(C)$ contains, in fact, a larger group that is essentially obtained by taking the product
of the general linear group $\GL_{\ell}(\F)$ with
the semidirect product $M_{\ell\times\ell'}(\F) \rtimes_{\theta} \GL_{\ell'}(\F)$. It will also be seen that
%, contrary to the expectation of the said referee,
the full automorphism group can, in fact, be even larger. Finally, we show that the affine Grassmann codes as well as their duals have the property that the minimum-weight codewords generate the code. This can be viewed as an analogue of the classical result that binary Reed-Muller codes are generated by their minimum-weight codewords; see, e.g., MacWilliams and Sloane \cite[Ch. 13, \S 6]{MW}. Such a result is not true, in general, for $q$-ary generalized Reed-Muller codes, and in this case, a complete characterization of generation by the  minimum-weight codewords was obtained by Ding and Key \cite[Thm. 1]{DK} (see also part (v) of Proposition~\ref{pro:RMbasics} below). 
A special case $\ell=1$ of our results corresponds to their result %of Ding and Key \cite{DK}
for the generalized Reed-Muller codes $\RM(1,\, \delta)$ and $\RM(\delta(q-1)-2, \, \delta)$.

Following a suggestion of D. Augot, we shall consider in this paper a mild generalization of $\C$ obtained by choosing a nonnegative integer $r\le \ell$ and then restricting the function space to linear polynomials in the $i\times i$ minors of $X$ for $i\le r$. %$0\le i \le r$.
The resulting linear codes are denoted by $\Cr$
and called \emph{affine Grassmann codes} of \emph{level $r$}. Note that the first order Reed-Muller codes of length $q^{\delta}$ as well as the affine Grassmann codes are special cases; indeed,  $C^\mathbb{A}(\ell,m;1) = {\rm RM}(1, \delta)$ and $\Cl = \C$.
Moreover, by varying the levels, we obtain a nice filtration,  compatible with the  Reed-Muller filtration:

\smallskip

\begin{center}
\begin{tabular}{ccccccc}
$C^\mathbb{A}(\ell,m;1)$ & \! \! $\subset$ \! \!  & $C^\mathbb{A}(\ell,m;2)$ &\! \! $\subset$ \! \!& \! \! $\cdots$ \!\! & \!\! $\subset$ \!\!&  $C^\mathbb{A}(\ell,m;\ell)$ \\
$\|$ & & $\bigcap$ & &  & & $\bigcap$ \\
$\RM(1, \delta)$ & \! \! $\subset$ \! \! & $\RM(2, \delta)$ & \! \! $\subset$ \! \! & \! \! $\cdots$ & \! \! $\subset$ \! \! & $\RM(\ell, \delta)$
 \end{tabular}
\end{center}

In general, for any
$r\ge 0$, the length $n$ and the dimension $k_r$ of $\Cr$ are given by
\begin{equation}
\label{eq:nkr}
n = q^{\delta} \quad \mbox{and} \quad k_r = \sum_{i=0}^{r} {\binom{\ell}{i}}{\binom{\ell'}{i}},
	\end{equation}
whereas the formula \eqref{eq:dlm} generalizes nicely to the following:
\begin{equation}
\label{eq:dlmr}
\text{minimum distance of } \Cr  = q^{\delta}\prod_{i=1}^{r}\left(1 - \frac{1}{q^i}\right).
	\end{equation}
%Moreover, by introducing the level, one obtains a nice filtration of affine Grassmann codes analogous to that t
%for the generalized Reed-Muller codes of varying orders.
The augmentation of the result concerning the automorphism group, an explicit description of the dual, determination of the minimum distance of the dual, and the result concerning generation by minimum-weight codewords are all obtained more generally, in the case of affine Grassmann codes of any given level. However, for the duals $\Cr^\perp$, it is shown that generation by minimum-weight codewords is valid for $r=1$ and $r=\ell$, but not, in general, for $1<r<\ell$.

\section{Preliminaries}
\label{sec:prelim}

%Denote, as usual, by $\F$ the finite field with $q$ elements. Fix positive integers $\ell$ and $\ell'$ and
Let $X= \left(X_{ij}\right)$ be  a $\ell \times \ell'$ matrix whose entries are algebraically independent indeterminates over $\F$. By $\rec$ we denote the integral rectangle $[1,\ell]\times [1, \ell']$, i.e.,
$$
\rec :=\left\{(i,j)\in \Z^2 : 1\le i\le \ell \text{ and }  1\le j \le {\ell}'\right\}.
$$
By $\F[X]$ we denote the polynomial ring in the  $\ell\ell'$ variables $X_{ij}$  (where $(i,j)$ vary over $\rec$)
% ($1\le i\le \ell, \; 1\le j \le {\ell}'$)
with coefficients in $\F$. The set of all monomials in $\F[X]$ will be denoted by $\Mon$. Note that every $\mu \in \Mon$ is of the form
$$
\mu = %\mathop{\prod_{1\le i \le \ell}}_{1\le j \le \ell'} X_{ij}^{\alpha_{ij}}
\prod_{(i,j)\in \rec} X_{ij}^{\alpha_{ij}}
\quad \text{for some nonnegative integers }\alpha_{ij} % \; (1\le i\le \ell, \; 1\le j \le {\ell}').
$$
The exponents $\alpha_{ij}$ ($(i,j)\in \rec$) are uniquely determined by $\mu$ and their sum is denoted by $\deg \mu$; also, we write $\deg_{X_{ij}}\mu = \alpha_{ij}$.
We say that the monomial $\mu$ is  \emph{reduced} (resp: \emph{squarefree}) if $0\le   \deg_{X_{ij}}\mu  \le q-1$ %$0\le \alpha_{ij} \le q-1$
(resp: $0\le   \deg_{X_{ij}}\mu  \le 1$)
for all $(i,j)\in \rec$. %$i=1, \dots , \ell$ and $j= 1, \dots , \ell'$.
These two notions coincide when $q=2$. The set of
all reduced monomials in $\F[X]$ will be denoted by $\RMon$ and the $\F$-linear space generated by $\RMon$ will be denoted by $\rp$. Elements of $\rp$ are called \emph{reduced polynomials}. There is a natural surjective map from $\Mon$ to $\RMon$ that sends a monomial $\mu$
to the unique monomial $\bar\mu$ obtained from $\mu$ as follows: whenever an exponent $ \alpha_{ij}$ of $X_{ij}$ is $\ge q$,  replace it by $r_{ij}$, where $r_{ij}\equiv  \alpha_{ij} ({\rm mod} \; q-1)$ and $1\le r_{ij} \le q-1$. %is the remainder obtained when that exponent is divided by $q$.
This map extends by $\F$-linearity to a surjective $\F$-vector space homomorphism $\F[X]\to \rp$, which may be referred to as the \emph{reduction  map}. We will denote the image of %a polynomial
$f\in \F[X]$ under the reduction map by $\bar f$, and call $\bar f$ the \emph{reduced polynomial} corresponding to $f$.

The set $\Mon$ is obviously a $\F$-basis of $\F[X]$ and hence every $f\in \F[X]$ can be uniquely written as $\sum_{\mu\in\Mon} c_{\mu}\mu$, where $c_{\mu}\in \F$ for each $\mu\in \Mon$ and $c_{\mu}=0$ for all except finitely many $\mu$'s. A monomial $\mu$ for which $c_{\mu}\ne 0$ will be referred to as a \emph{term} of $f$, and we let
$$
\Term(f):=\left\{\mu\in \Mon : c_{\mu}\ne 0\right\}.
$$
Note that $\Term(f)$ is the empty set if and only if $f$ is the zero polynomial. For $0\ne f\in \F[X]$, the (total) degree and the degree in the %$(i,j)^{\rm th}$
variable $X_{ij}$ are given by
$$
\deg f  := \max\{\deg \mu : \mu \in \Term(f)\}  \mbox{ and }   \deg_{X_{ij}} f := \max\{\deg_{X_{ij}}\mu : \mu \in \Term(f)\}.
$$
%The following formulae are well-known and may be used tacitly.
%$$
%\#\left\{\mu\in\Mon : \deg \mu =i\right\} = \binom{\delta+i-1}{i} \mbox{ and } \#\left\{\mu\in\RMon : \deg \mu =i\right\} = \binom{\delta}{i}.
%$$

We shall denote the space of all $\ell\times \ell'$ matrices with entries in $\F$ by $\Adelta$, or simply by ${\mathbb A}^{\delta}$.
%Indeed, for fixed positive integers $\ell$ and $\ell'$, this space can be readily identified with the $\delta$-dimensional affine space over $\Fq$,
%where $\delta = \ell \ell'$, as before.
%%In particular,
%It is clear that for any $f\in \F[X]$ (and in particular, any $f\in {\mathcal F}(\ell,m)$)
%and $P\in \Aff^{\delta}$, the element $f(P)$ of $\F$ is well-defined. Now let us
Fix an enumeration $P_1,P_2,\dots,P_{q^\delta}$ of $\Aff^{\delta}$. The map
%\begin{definition}
%The evaluation map of $\F[X]$ is the map
%$\Ev: \F[X]\to  \F^{q^{\delta}}$ defined by
%$\Ev(f):= \left(f(P_1),\dots,f(P_{q^\delta}) \right)$.
$$
\Ev: \F[X]\to  \F^{q^{\delta}} \quad \mbox{ defined by } \quad
\Ev(f):= \left(f(P_1),\dots,f(P_{q^\delta}) \right)
$$
will be referred to as the \emph{evaluation map} of $\F[X]$.
%$$
%\begin{array}{rrcl}
%\Ev: & \F[X_{11},\dots,X_{\ell\ell'}] &  \to    & \F^{q^{\delta}},\\
%\\
%     &            f         & \mapsto & (f(P_1),\dots,f(P_{q^\delta})).
%\end{array}$$
%\end{definition}
%
%It is well known
It is clear that the evaluation map $\Ev$ defined above is a surjective linear map, and also that $\Ev(f) =\Ev(\bar f)$ for every $f\in \F[X]$.
% Also, it is well-known that the kernel
%of $\Ev$ is %given by the ideal $\langle X_{ij}^q-X_{ij} \, | \, 1\le i \le \ell,\, 1\le j \le \ell'\rangle$
%the ideal of $\F[X]$ generated by $\left\{ X_{ij}^q-X_{ij} \, : \, 1\le i \le \ell,\; 1\le j \le \ell' \right\}$,
%and that this kernel contains no
%nonzero polynomial having degree $< q$ in each of the variables.
Thus, the restriction of $\Ev$ to $\rp$ is also surjective. In fact,  it is well-known that this restriction is injective as well.
(See, e.g., \cite[p. 11]{Joly}.)
In other words, reduced polynomials can be identified with functions from $\Aff^{\delta}$ to $\Fq$.
%is such that $\Ev(f)=0$ and the degree of $f$ in each of the variables is $< q$, then $f$ is the zero polynomial.

\begin{remark}
\label{multred}
Although the reduction map from $\Fq[X]$ onto $\rp$  %$f\mapsto \bar{f}$ of $\Fq[X]\to \rp$ %that associates to a polynomial its
is $\Fq$-linear, it is not multiplicative, i.e., $\overline{fg}$ need not be equal to $\bar{f}\bar{g}$, in general.
In fact, the product of reduced monomials need not be a reduced monomial.
However, if $f, g\in \Fq[X]$ are polynomials in disjoint sets of variables, then $\overline{fg} = \bar{f}\bar{g}$.
\end{remark}

Recall that by a \emph{minor} of $X$ of order $i$ we mean the determinant of an $i\times i$ submatrix of $X$. A minor of $X$ of order $i$ is
sometimes referred to as an $i \times i$ minor of $X$.
For $0\le i\le \ell$, let $\Delta_i(\ell,m)$ be the subset of $\F[X]$ consisting of all $i \times i$ minors of $X$, where, as per standard conventions, the only $0\times 0$ minor of $X$ is $1$.  For $0\le r\le \ell$, we define
$$
%\Delta(\ell,m) := \bigcup_{i=0}^{\ell} \Delta_i(\ell,m)
%\quad \mbox{and for $0\le r\le \ell$,} \quad
\Delta(\ell,m;r) := \bigcup_{i=0}^{r} \Delta_i(\ell,m) %^{(r)}
%{\mathcal F}(\ell,m) = \mbox{the $\F$-linear  subspace of $\F[X]$ generated by } \Delta(\ell,m).
$$
%We denote by
and ${\mathcal F}(\ell,m;r)$ to be the $\F$-linear  subspace of $\F[X]$ generated by $\Delta(\ell,m;r)$.
%As in \cite{BGH}, we set  $\Delta(\ell,m) := \Delta(\ell,m; \ell)$ and ${\mathcal F}(\ell,m) :={\mathcal F}(\ell,m;\ell)$.
Often $\Delta(\ell,m; \ell)$ and ${\mathcal F}(\ell,m;\ell)$ will just be denoted by $\Delta(\ell,m)$ and ${\mathcal F}(\ell,m)$, respectively.
Observe that $\deg_{X_{ij}} {\mathcal M} \le 1 $ for all
${\mathcal M}\in \Delta(\ell,m)$ and  $(i,j)\in \rec$.
%$i=1, \dots , \ell$, and $j=1, \dots , \ell'$.
In particular,  ${\mathcal F}(\ell,m) \subseteq \rp$. Next, we record the %The
following basic result. It is  an
easy consequence of Lemma 2 of \cite{BGH} and its proof together with Lemma 3 of \cite{BGH}.

\begin{proposition}
\label{pro:dimFmlr}
%The elements of $\Delta(\ell,m;r)$ are linearly independent
For every $r\in \{0,1, \dots, \ell\}$, the elements of $\Delta(\ell,m;r)$ are linearly independent.
In particular,
%Consequently,  the vector space dimension of ${\mathcal F}(\ell,m;r)$ (and in particular, ${\mathcal F}(\ell,m)$) is given by
$$
\dim_{\F}{\mathcal F}(\ell,m;r)= \sum_{i=0}^{r} {\binom{\ell}{i}}{\binom{\ell'}{i}}
\quad \mbox{and} \quad
\dim_{\F}{\mathcal F}(\ell,m)=\binom{m}{\ell}.
$$
\end{proposition}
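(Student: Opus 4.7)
The plan is to establish linear independence by exhibiting, for each minor, a monomial in its expansion that singles it out from all other minors, and then to deduce the two dimension formulas by pure counting.

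First I would observe that every $i \times i$ minor is a homogeneous polynomial of degree $i$ in $\F[X]$, so that the proposed basis $\Delta(\ell,m;r) = \bigcup_{i=0}^{r}\Delta_i(\ell,m)$ is partitioned by degree. Hence linear independence for the whole set reduces to linear independence within each $\Delta_i(\ell,m)$. The case $i = 0$ is trivial (there is only $1 \in \Delta_0(\ell,m)$), and we may assume $i \ge 1$.

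For the core step, I would index an element of $\Delta_i(\ell,m)$ by a pair $(R,C)$, with $R = \{r_1 < \cdots < r_i\} \subseteq [1,\ell]$ and $C = \{c_1 < \cdots < c_i\} \subseteq [1,\ell']$, and use the Leibniz expansion
$$[R\,|\,C] \; = \; \sum_{\sigma\in \Si} \sgn(\sigma)\, X_{r_1,c_{\sigma(1)}}X_{r_2,c_{\sigma(2)}}\cdots X_{r_i,c_{\sigma(i)}}.$$
Every term appearing here is a squarefree monomial from which one can read off $R$ (the set of first indices) and $C$ (the set of second indices). In particular, the diagonal term $\mu_{R,C} := X_{r_1,c_1}X_{r_2,c_2}\cdots X_{r_i,c_i}$ appears in $[R\,|\,C]$ with coefficient $1$, and in no other $i \times i$ minor $[R'|C']$ with $(R',C') \ne (R,C)$. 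Given a putative linear relation $\sum_{(R,C)} \lambda_{R,C}\,[R\,|\,C] = 0$, reading off the coefficient of $\mu_{R_0,C_0}$ for each $(R_0,C_0)$ immediately forces $\lambda_{R_0,C_0} = 0$. This proves linear independence.

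The dimension count is then straightforward: since $|\Delta_i(\ell,m)| = \binom{\ell}{i}\binom{\ell'}{i}$, we obtain $\dim_{\F}{\mathcal F}(\ell,m;r) = \sum_{i=0}^{r}\binom{\ell}{i}\binom{\ell'}{i}$, and for $r = \ell$ the Vandermonde--Chu identity $\sum_{i=0}^{\ell}\binom{\ell}{i}\binom{\ell'}{i} = \binom{\ell+\ell'}{\ell}$ yields $\binom{m}{\ell}$. The main obstacle is, frankly, quite mild: it is only the observation that each term in the determinant expansion of a minor encodes its row and column sets, so the minors across all sizes have pairwise disjoint term supports. Everything else is bookkeeping.
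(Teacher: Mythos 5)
Your proof is correct and follows essentially the same route as the argument the paper defers to (Lemmas 2 and 3 of \cite{BGH}): distinct minors have pairwise disjoint sets of terms, since any term of $[R\,|\,C]$ determines both $R$ and $C$, and the dimension formulas then follow by counting together with the Vandermonde identity. Nothing further is needed.
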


Thanks to Proposition \ref{pro:dimFmlr}, every  $f\in {\mathcal F}(\ell,m)$ is a unique $\F$-linear combination of the elements of $\Delta(\ell,m)$,
say  $f=\sum_{{\mathcal M} \in \Delta(\ell,m)}a_{\mathcal M}{\mathcal M}$, where $a_{\mathcal M}\in \F$ for every ${\mathcal M}\in{\Delta}(\ell,m)$. We define
the \emph{support} of $f$ %, denoted by $\supp(f)$,
to be the set %$\supp(f):=\{{\mathcal M}\in \Delta(\ell,m) \, | \, a_{{\mathcal M}} \neq 0\}$.
% as follows:
$$\supp(f):=\{{\mathcal M}\in \Delta(\ell,m) \, : \, a_{{\mathcal M}} \neq 0\}.$$
Note that the support of $f$ is the empty set if and only if $f$ is the zero polynomial. Also note that for $0\le r\le \ell$ and $f\in {\mathcal F}(\ell,m;r)$, the sets $\supp(f)$ and $\Term(f)$ coincide only when $r\le 1$.

%With this in view, we make the following definition.
%It will be convenient to introduce the notion of the support of an element $f \in {\mathcal F}(\ell,m)$. There is according to the previous lemma exactly one way to write $f$ as a linear combination of elements from $\Delta(\ell,m)$. Therefore the following definition makes sense:
For any  %$r\in \{0,1, \dots, \ell\}$,
nonnegative integer $r\le \ell$,
the image of ${\mathcal F}(\ell,m;r)$ under the evaluation map $\Ev$ will be denoted by
$\Cr$ and called the \emph{affine Grassmann code of level} $r$. As in \cite{BGH}, we will write $\C=\Cl$ and refer to this simply as the
\emph{affine Grassmann code} (corresponding to the fixed parameters $\ell$ and $\ell'$, or equivalently, $\ell$ and $m$).
The following result is a consequence of Proposition \ref{pro:dimFmlr}. %and generalizes Lemma 7 of \cite{BGH} ; the
Its proof is similar to that of Lemma 7 of \cite{BGH}, %\cite[Lem. 7]{BGH}
and is hence omitted.

\begin{proposition}
For each $r\in \{0,1, \dots, \ell\}$,
the affine Grassmann code of level $r$ %$\C$
is a nondegenerate linear code of length $n$ %$q^{\delta}$
and dimension $k_r$ given by \eqref{eq:nkr}.
%$\sum_{i=0}^{r} {\binom{\ell}{i}}{\binom{\ell'}{i}}$.
%$\binom{m}{\ell}$.
\end{proposition}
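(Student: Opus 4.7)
The plan is to extract the three claims (length, dimension, nondegeneracy) and verify each as a short consequence of what has already been set up. The length is immediate: $\Cr$ is by definition the image of $\mathcal{F}(\ell,m;r)$ under $\Ev$, and $\Ev$ produces vectors in $\F^{q^{\delta}}$, so $n = q^{\delta}$. Linearity of $\Cr$ is also immediate, since $\Cr$ is the image of the $\F$-vector space $\mathcal{F}(\ell,m;r)$ under the $\F$-linear map $\Ev$.

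For the dimension, the first step is to note the chain of inclusions $\mathcal{F}(\ell,m;r) \subseteq \mathcal{F}(\ell,m) \subseteq \rp$, which was observed just before Proposition \ref{pro:dimFmlr}: every minor of $X$ has degree at most $1$ in every variable $X_{ij}$, so every $\F$-linear combination of minors is a reduced polynomial. The text has already recorded that the evaluation map $\Ev$ restricted to $\rp$ is injective (this is essentially the fact that a reduced polynomial in $\ell\ell'$ variables is determined by its function $\Adelta \to \F$; see \cite[p.~11]{Joly}). Therefore $\Ev$ is injective on $\mathcal{F}(\ell,m;r)$, and so
$$
\dim_{\F}\Cr = \dim_{\F}\mathcal{F}(\ell,m;r) = \sum_{i=0}^{r}\binom{\ell}{i}\binom{\ell'}{i} = k_r
$$
by Proposition \ref{pro:dimFmlr}, giving the formula \eqref{eq:nkr}.

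For nondegeneracy, it suffices to check that no coordinate position of $\Cr$ is identically zero. Since the unique $0\times 0$ minor of $X$ is $1$, the constant polynomial $1$ lies in $\Delta(\ell,m;r)$ and hence in $\mathcal{F}(\ell,m;r)$ for every $r \ge 0$. Its evaluation $\Ev(1) = (1,1,\dots,1)\in \Cr$ has every coordinate equal to $1 \ne 0$, so each of the $q^{\delta}$ coordinate positions carries a nonzero value for some codeword. Thus $\Cr$ is nondegenerate.

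I do not expect any genuine obstacle here: the whole argument is a bookkeeping application of Proposition \ref{pro:dimFmlr} together with the standard injectivity of $\Ev$ on $\rp$, which is exactly why the authors remark that the proof is similar to that of \cite[Lemma 7]{BGH} and omit it. The only point worth double-checking when writing it out is that $\mathcal{F}(\ell,m;r)$ really does sit inside $\rp$ for every $r$, i.e., that each $i\times i$ minor has degrees at most $1$ in the $X_{ij}$'s; this is clear from the Leibniz expansion of the determinant.
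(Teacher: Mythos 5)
Your proof is correct and is exactly the argument the paper has in mind (the proof is omitted there with a pointer to Lemma 7 of \cite{BGH}): length and linearity are immediate from the definition of $\Ev$, the dimension follows from Proposition~\ref{pro:dimFmlr} together with the injectivity of $\Ev$ on $\rp$ and the inclusion ${\mathcal F}(\ell,m;r)\subseteq\rp$, and nondegeneracy follows from the presence of the $0\times 0$ minor $1$. No gaps.
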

%
%As usual (and as in the introduction), for any positive integers $d,e$, we denote by $M_{d\times e}(\F)$ the additive group of all $d \times e$ matrices with entries in $\F$, and by $\GL_d(\F)$ the multiplicative group of all invertible $d \times d$ matrices with entries in $\F$.

Finally, in this section we review some basic facts about generalized Reed-Muller codes, which will be useful in the sequel.
First, recall that for any nonnegative integer $r \le \delta(q-1)$, the $r^{\rm th}$ order generalized Reed-Muller code of length $q^{\delta}$,
denoted $\RM_q(r, \delta)$ or simply $\RM(r, \delta)$, is the image of $\{\mu \in \rp: \deg \mu \le r\}$ under the evaluation map $\Ev$.
Some of its fundamental properties are the following.

\begin{proposition}
\label{pro:RMbasics}
Let $r$ be a  nonnegative integer $\le \delta(q-1)$, and let $Q, R$ be unique integers such that $\delta(q-1)-r = Q(q-1)+R$ and $0\le R < q-1$. Then:  %we have the following.
\begin{enumerate}
	\item[{\rm (i)}] %The dimension of $\RM(r, \delta)$ is $\sum_{i=0}^r \sum_{k=0}^{\delta} (-1)^k {\binom{\delta}{k}}{\binom{\delta+i-kq-1}{i-kq}}$.
$\RM(r, \delta)$ is nondegenerate linear code of length $q^{\delta}$ and
$$
\dim \RM(r, \delta) = \sum_{i=0}^r \sum_{j=0}^{\delta} (-1)^j {\binom{\delta}{j}}{\binom{\delta+i-jq-1}{i-jq}}.
$$
In particular, if $r\le q-1$, then the dimension of $\RM(r, \delta)$ is ${\binom{\delta+r}{r}}$.
\smallskip
\item[{\rm (ii)}] The minimum distance of $\, \RM(r, \delta)$ is $(R+1)q^Q$ and the number of minimum-weight codewords of $\, \RM(r, \delta)$ is
given, in terms of the Gaussian binomial coefficients (defined in \eqref{eq:Ad} above), by
$$
\begin{cases}
\displaystyle{\left(q^{\delta-Q+1}-q^{\delta-Q}\right){{\delta}\brack{Q}}_q} & \text{ if } \; R=0, \\ \\
\displaystyle{\left(q^{\delta}-q^{\delta-Q-1}\right){{\delta}\brack{Q+1}}_q \binom{q}{R+1}} & \text{ if } \; R>0.
%q^{\delta-Q}\left(q-1\right){{\delta}\brack{Q}}_q & \text{ if } R=0, \\
%q^{\delta-Q-1}\left(q^{Q+1}-1\right){{\delta}\brack{Q+1}}_q \binom{q}{R+1} & \text{ if } R>0.
\end{cases}
$$
\smallskip
\item[{\rm (iii)}]
If $r\ge 1$, then the (permutation) automorphism group of $\, \RM(r, \delta)$ is isomorphic to the affine general linear %nonhomogeneous
group $\AGL_{\delta}(\Fq)$ of %affine
transformations $\F^{\delta}\to \F^{\delta}$ of the form $\mathbf{x}\mapsto M\mathbf{x}+\mathbf{u}$, where $M\in \GL_{\delta}(\Fq)$ and $\mathbf{u}\in \F^{\delta}$.
\smallskip
\item[{\rm (iv)}]
The dual of $\, \RM(r, \delta)$ is %can be identified with
$\, \RM(\delta(q-1)-r-1, \delta)$.
\smallskip
\item[{\rm (v)}]
Write $q=p^t$, where $p$ is a prime number and $t\ge 1$. Then
$\RM(r, \delta)$ is generated by its minimum-weight codewords if and only if %one (or more) of the following conditions hold:
$\, \delta=1$ or $t=1$ or $r < p$ or $r>(\delta -1)(q-1) + p^{t-1} -2$.
\end{enumerate}
\end{proposition}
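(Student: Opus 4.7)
The plan is to handle parts (i)--(v) individually; since Proposition~\ref{pro:RMbasics} is a compilation of well-known results about generalized Reed-Muller codes reproduced for later use, for several parts the proof amounts to citing the correct reference.

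For part (i), nondegeneracy is immediate from $\Ev(1)=(1,\ldots,1)\in \RM(r,\delta)$, so no coordinate of every codeword can be forced to zero. For the dimension, I would use the injectivity of $\Ev$ on $\rp$ (recalled just before Remark~\ref{multred}) to identify $\RM(r,\delta)$ with the space of reduced polynomials of total degree at most $r$ in $\delta$ variables. The dimension thus equals the number of tuples $(a_1,\ldots,a_\delta)$ of nonnegative integers with $0\le a_k\le q-1$ and $a_1+\cdots+a_\delta\le r$. Counting by total degree $i$ and applying inclusion-exclusion on the upper-bound constraints $a_k\le q-1$ produces the inner sum $\sum_{j=0}^{\delta} (-1)^j \binom{\delta}{j}\binom{\delta+i-jq-1}{i-jq}$, and summing over $i=0,\ldots,r$ gives the stated formula. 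When $r\le q-1$ only the $j=0$ term contributes and $\sum_{i=0}^{r}\binom{\delta+i-1}{i}=\binom{\delta+r}{r}$ by the hockey-stick identity.

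For parts (ii) and (iv), I would invoke the classical Kasami--Lin--Peterson minimum distance formula for $\RM(r,\delta)$ and the duality theorem $\RM(r,\delta)^\perp = \RM(\delta(q-1)-r-1,\delta)$ of Delsarte--Goethals--MacWilliams, both standard (see, e.g., Assmus--Key or MacWilliams--Sloane); the enumeration of minimum-weight codewords in (ii) is likewise classical and appears in the same references. Part (iii) is the Berger--Charpin theorem: the inclusion $\AGL_\delta(\F)\subseteq \Aut(\RM(r,\delta))$ is routine, since an affine change of coordinates of $\Aff^\delta$ permutes the $q^\delta$ evaluation points and preserves the total degree of reduced polynomials, and so stabilizes the function space defining $\RM(r,\delta)$; the reverse inclusion is substantially harder and I would simply quote Berger--Charpin. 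Finally, part (v) is exactly Theorem~1 of Ding--Key cited in the introduction.

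The main obstacle is (iii): elementary arguments yield $\AGL_\delta(\F)\subseteq \Aut(\RM(r,\delta))$, but ruling out a strictly larger permutation automorphism group is delicate and is not something one would reprove in a paper using this as a preliminary fact. All other parts either reduce to a clean combinatorial count, as in (i), or are well-catalogued theorems that can simply be cited.
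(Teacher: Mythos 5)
Your proposal is correct and matches the paper's treatment: the paper offers no proof of Proposition~\ref{pro:RMbasics}, simply citing \cite[\S 5.4]{AK} for (i) and (iv), \cite{DGM} for (ii), \cite{BC} for (iii), and \cite{DK} for (v), exactly the sources you identify. Your added inclusion--exclusion count of reduced monomials of degree at most $r$ for part (i), and your remark that only the easy inclusion $\AGL_{\delta}(\F)\subseteq\Aut(\RM(r,\delta))$ admits an elementary argument, are both accurate but go beyond what the paper itself supplies.
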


A proof of the assertions in %parts (i) and (iii) of
Proposition \ref{pro:RMbasics} can be found, for example, in: \cite[\S 5.4]{AK} (parts (i) and (iv)),
\cite{DGM} (part (ii)),  \cite{BC} (part (iii)), and \cite{DK} (part (v)).
%, part (ii) in \cite{DGM}, and part (iv) in \cite{BC}.
%REFERENCES:

\section{Minimum distance}
\label{sec:mindist}

For a positive integer $r\le \ell$, we shall denote by $\Lr$ the $r^{\rm th}$ leading principal minor of the $\ell\times \ell'$ matrix $X$.
In other words, $\Lr$ is the determinant of the submatrix of $X$ formed by the first $r$ rows and the first $r$ columns. %(where, by convention,
Also, we set $\LM_0 := 1$. % if $r=0$).
Often we write $\Ll$ simply as $\LM$ and refer to it as the \emph{leading maximal minor}.

\begin{theorem}
\label{thm:mindistClmr}
Let $r$ be a nonnegative integer $\le \ell$. Then \begin{equation}
\label{eq:dlmr2}
\text{minimum distance of } \Cr  = q^{\delta}\prod_{i=1}^{r}\left(1 - \frac{1}{q^i}\right).
\end{equation}
Moreover, $\Ev\left(\Lr\right)$ is a minimum-weight codeword of $\Cr$.
\end{theorem}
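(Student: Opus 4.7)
The plan is to prove the theorem in two steps. First, I would compute the weight of $\Ev(\Lr)$ directly. Since $\Lr$ depends only on the top-left $r\times r$ block $Y$ of $X$, the value $\Lr(M)$ for $M\in\mathbb{A}^\delta(\F)$ equals the determinant of the corresponding block of $M$, and this is nonzero precisely when that block lies in $\GL_r(\F)$. Because the other $\delta-r^2$ entries of $M$ are unrestricted, the weight of $\Ev(\Lr)$ is $|\GL_r(\F)|\cdot q^{\delta-r^2}=q^\delta\prod_{i=1}^r(1-q^{-i})$. This gives the upper bound on the minimum distance and, once the matching lower bound is in place, identifies $\Ev(\Lr)$ as a minimum-weight codeword.

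For the lower bound I would argue by induction on $\ell$, allowing $r\in\{0,1,\ldots,\ell\}$ at each stage. The boundary sub-cases are handled directly: $r=0$ (constants), $r=1$ (where $\Cr=\RM(1,\delta)$ and Proposition~\ref{pro:RMbasics}(ii) gives the distance $q^{\delta-1}(q-1)$), and $r=\ell$ (where the bound matches the known $d(\ell,m)$ from \cite{BGH}). For the remaining range $1<r<\ell$, I would decompose a nonzero $f\in\mathcal{F}(\ell,m;r)$ as $f=g+h$ with $g$ in the span of $r\times r$ minors and $h\in\mathcal{F}(\ell,m;r-1)$, and expand along row~1: $f=\sum_{j=1}^{\ell'}X_{1,j}\,e_j+c$ with coefficients $e_j\in\mathcal{F}(\ell-1,m-1;r-1)$ (arising from Laplace cofactors of minors touching row~1) and $c\in\mathcal{F}(\ell-1,m-1;r)$ (collecting minors not involving row~1). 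For each of the $q^{(\ell-1)\ell'}$ choices of the other rows, the row-1 slice of $f$ is an affine form on $\F^{\ell'}$ contributing $0$, $q^{\ell'-1}(q-1)$, or $q^{\ell'}$ nonzeros, according to whether all $e_j$ and $c$ vanish, some $e_j$ is nonzero, or all $e_j$ vanish but $c$ does not. Applying the outer inductive hypothesis to $\mathcal{F}(\ell-1,m-1;r)$ (controlling the stratum where $c$ vanishes) and to $\mathcal{F}(\ell-1,m-1;r-1)$ (controlling the stratum where all $e_j$ vanish) should combine to yield the desired bound.

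The principal obstacle will be the combinatorial step: a naive bound counting only the stratum with nonzero linear part yields $q^\delta(1-q^{-1})\prod_{i=1}^{r-1}(1-q^{-i})$, which falls strictly short of $q^\delta\prod_{i=1}^r(1-q^{-i})$ whenever $r\ge 2$. To close this gap I would need to extract additional weight from the stratum where all $e_j$ vanish but $c$ does not, which requires a joint bound on the zero locus of $(e_1,\ldots,e_{\ell'})$ and that of $c$. Normalizing $g$ via the $\GL_\ell(\F)\times\GL_{\ell'}(\F)$ action (from the paper's refined automorphism group result) so that $g$ involves an $r\times r$ minor using row~1 with nonzero coefficient---thereby forcing at least one $e_j$ to be a nonzero element of $\mathcal{F}(\ell-1,m-1;r-1)$---is likely a necessary preliminary to closing the induction cleanly.
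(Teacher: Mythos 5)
Your computation of $\w\bigl(\Ev(\Lr)\bigr)=q^{\delta-r^2}\#\GL_r(\F)$ is correct and is exactly how the paper obtains the upper bound. The problem is the lower bound, where your proof has a genuine, and indeed self-acknowledged, gap: in the row-expansion $f=\sum_j X_{1j}e_j+c$, the stratification by whether some $e_j(P)\ne 0$ gives only $\w(\Ev(f))\ge N_1(q^{\ell'}-q^{\ell'-1})+N_2q^{\ell'}$, and bounding $N_1$ alone by the inductive hypothesis yields $q^{\delta}(1-q^{-1})\prod_{i=1}^{r-1}(1-q^{-i})$, which is strictly smaller than the target $q^{\delta}\prod_{i=1}^{r}(1-q^{-i})$ for $r\ge 2$. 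Closing this requires a quantitative joint statement about the locus where all the $e_j$ vanish but $c$ does not, and you do not supply one. The normalization you propose (using the $\GL_{\ell}(\F)\times\GL_{\ell'}(\F)$ action to force some $e_j\not\equiv 0$) only guarantees that $N_1$ is large; it says nothing about $N_2$ on the complementary stratum, so it does not repair the deficit. As written, the induction does not close.

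The missing idea is much simpler than a joint zero-locus estimate, and it is what the paper does: there is no induction on $\ell$ at all. Given $0\ne f\in\mathcal F(\ell,m;r)$, let $s\le r$ be the largest size of a minor in $\supp(f)$, pick $\M\in\supp(f)\cap\Delta_s(\ell,m)$ with corresponding $s\times s$ submatrix $Y$, and specialize all $\delta-s^2$ variables outside $Y$ arbitrarily. Each such specialization is a linear combination of minors of $Y$ in which the coefficient of $\det Y$ is still $a_{\M}\ne 0$ (no other minor of size $\le s$ can produce the full minor of $Y$ upon specialization), so by the linear independence of minors it is a nonzero element of $\mathcal F(s,2s)$ and has weight at least $d(s,2s)=q^{s^2}\prod_{i=1}^{s}(1-q^{-i})$, the square case already known from \cite{BGH}. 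Summing over all $q^{\delta-s^2}$ specializations and using $s\le r$ gives the bound in one step. If you want to salvage your inductive architecture you would have to prove the joint bound you allude to; otherwise I recommend replacing the lower-bound argument by this specialization to the known square case.
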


\begin{proof}
Let $f\in {\mathcal F}(\ell,m;r)$ be such that $f\ne 0$. Then there is a nonnegative integer $s\le r$ such that $\supp(f)\cap \Delta_s(\ell,m)$ is nonempty, but $\supp(f)\cap \Delta_i(\ell,m)$ is empty for each $i>s$. %\le r$.
Choose a minor $\M\in \supp(f)\cap \Delta_s(\ell,m)$ and let $Y$ be the corresponding $s\times s$ submatrix of $X$. In view of Proposition \ref{pro:dimFmlr}, any specialization $\tilde f$ of $f$, obtained by substituting arbitrary values in $\Fq$ for the $\delta - s^2$ variables not occurring in $Y$, is a nonzero linear combination of minors of $Y$. It follows that %Moreover, the (Hamming) weight of the cod
$$
\w\left(\Ev(f)\right) \ge d(s, 2s) q^{\delta - s^2},
$$
where $\w(c)$ denotes the the (Hamming) weight of a codeword $c$ and $d(s,2s)$ denotes the minimum distance of the affine Grassmann code $C^\mathbb{A}(s,2s)$ corresponding to the $s\times s$ matrix $Y$. Using \eqref{eq:dlm} (i.e., Theorem 16 of \cite{BGH}) with $\ell=\ell'=s$, we see that
$$
\w\left(\Ev(f)\right) \ge \left(q^{s^2}\prod_{i=1}^{s}\left(1 - \frac{1}{q^i}\right)\right) q^{\delta - s^2}
\ge q^{\delta}\prod_{i=1}^{r}\left(1 - \frac{1}{q^i}\right).
$$
On the other hand, it is readily seen that $\Lr \in {\mathcal F}(\ell,m;r)$ and
$$
\w\left(\Ev\left(\Lr\right)\right) = q^{\delta - r^2} \# \GL_r(\F) =  q^{\delta - r^2}\prod_{j=0}^{r-1}\left(q^r-q^j\right) = q^{\delta}\prod_{i=1}^{r}\left(1 - \frac{1}{q^i}\right).
$$
This yields \eqref{eq:dlmr2} and also shows that $\Ev\left(\Lr\right)$ is a minimum-weight codeword. % of $\Cr$.
%the desired formula for the minimum distance of $\Cr$.
\end{proof}

%\begin{remark}
%\label{rem:minwtcdwrds}
%The above proof shows that if $r$ is any nonnegative integer $\le \ell$, then $\Ev\left(\Lr\right)$ is a minimum-weight codeword of $\Cr$.
It may be tempting to believe that, as in the case of affine Grassmann codes, every minimum-weight codeword  of $\Cr$ is essentially of the form $\Lr$, i.e., it is equal to $\Ev\left({\mathcal{L}'}\right)$, where
${\mathcal{L}'}$ is the $r^{\rm th}$ leading principal minor
of the $\ell\times \ell'$ matrix $X'$, where $X'= XA^{-1} + \mathbf{u}$ for some $A\in \GL_{\ell'}(\F)$ and $\mathbf{u} \in M_{\ell \times \ell'}(\F)$. However, the following example shows that if $r< \ell$, then this need not be the case even when $X'$ is, more generally, of the form
$BXA^{-1} + \mathbf{u}$, where $A, \mathbf{u}$ are as above and $B\in \GL_{\ell}(\F)$.

\begin{example}
Assume that $\ell \ge 2$ and let $\left(c_{ij}\right)$ be any $\ell\times \ell'$ matrix over $\Fq$ of rank $\ge 2$. Then some $2\times 2$ minor of $\left(c_{ij}\right)$ is nonzero. Consider $C^\mathbb{A}(\ell,m;1) = {\rm RM}(1, \delta)$. We know from Reed-Muller theory (or alternatively, Remark 11 of \cite{BGH}) that any linear polynomial in $\F[X]$ in which some $X_{ij}$ occurs with a nonzero coefficient gives rise to a minimum-weight codeword. In particular, $\Ev(f)$ is a minimum-weight codeword of $C^\mathbb{A}(\ell,m;1)$, where
$$
f = \sum_{i=1}^{\ell} \sum_{j=1}^{\ell'} c_{ij}X_{ij}.
$$
However, $f$ is not the first leading principal minor of any $\ell\times \ell'$ matrix of the form $BXA + \mathbf{u}$. Indeed if this were the case for some $B=\left(b_{ij}\right)\in \GL_{\ell}(\F)$, $A=\left(a_{ij}\right)\in \GL_{\ell'}(\F)$, and $\mathbf{u}=\left(u_{ij}\right)\in M_{\ell \times \ell'}(\F)$, then
$$
 \sum_{i=1}^{\ell} \sum_{j=1}^{\ell'} c_{ij}X_{ij} = u_{11} +  \sum_{i=1}^{\ell} \sum_{j=1}^{\ell'} b_{1i}X_{ij}a_{j1}.
$$
Consequently, $u_{11}=0$ and $ c_{ij} =b_{1i}a_{j1}$ for $1\le i\le \ell$ and $1\le j \le\ell'$. But this is a contradiction since it is readily seen that every $2\times 2$ minor of the $\ell\times \ell'$ matrix $\left( b_{1i}a_{j1}\right)$ is always zero.
\end{example}

\section{Automorphisms}
\label{sec:automorph}

Recall that the \emph{(permutation) automorphism group} $ \Aut(C)$ of a code $C\subseteq \F^n$ is the set of all permutations $\sigma$ of $\{1,\dots,n\}$ such that $(c_{\sigma(1)},\dots,c_{\sigma(n)}) \in C$ for all $c=(c_1,\dots,c_n) \in C$. Evidently, $\Aut(C)$ is a subgroup of the symmetric group $S_n$ on $\{1,\dots,n\}$. In this section, we shall observe that the result %from \cite{BGH}
stated in the introduction about the automorphism groups of affine Grassmann codes being large can be extended a little further.

%; more precisely, we shall show that $\Aut\left(\C\right)$ contains a subgroup of order
%\begin{equation}
%\label{eq:orderAut}
%q^{\delta}\prod_{i=0}^{\ell-1}(q^\ell-q^i) = n \prod_{i=0}^{\ell-1}(q^\ell-q^i) = q^{\ell^2}d(\ell,m) ,
%\end{equation}
%where $n$ and $d(\ell,m)$ denote the length and the minimum distance of $\C$.
%where $n$ denotes the length of $\C$ and, as before, $d(\ell,m)$ denotes its minimum distance.
%In our case, the code is the image of
%$$
%\begin{array}{rrcl}
%\Ev: & {\mathcal F}(\ell,m) &  \to    & \F^{q^{\delta}},\\
%\\
%     &            f         & \mapsto & c_f=(f(P))_{P\in\A^\delta}.
%\end{array}$$

For any $B \in \GL_{\ell}(\F)$, $A \in \GL_{\ell'}(\F)$, and ${\bf u} \in M_{\ell \times \ell'}(\F)$, define
%The transformations
$$
\psi_{{\bf u},A,B}: \Adelta \to \Adelta
$$
to be the affine transformation given by
$$
\psi_{{\bf u},A,B}(P)=BPA^{-1}+{\bf u} \quad \mbox{ for }  P=(p_{ij})_{1\le i \le \ell, \; 1\le j \le \ell'}\in \Adelta,
$$
%where $A \in \GL_{\ell'}(\F)$ and ${\bf u} \in M_{\ell \times \ell'}(\F)$.
It is clear that the transformation
$\psi_{{\bf u},A,B}$ gives a bijection of %the affine space
$\Adelta  =\{P_1, \dots , P_n\}$ onto itself, and hence % will permute $c_f$, i.e.,
%$\big(f(\psi_{{\bf u},A,B}(P))\big)_{P\in \Aff^\delta}$ will be a permutation of $(f(P))_{P\in \Aff^\delta}$ for any $f\in \F[X]$;
there is a unique permutation $\sigma$ of $\{1,\dots,n\}$ %$S_n$
such that
$$
\left(\psi_{{\bf u},A,B}(P_1), \dots , \psi_{{\bf u},A,B}(P_n)\right) = \left(P_{\sigma(1)}, \dots ,P_{\sigma(n)}\right).
$$
%for a unique permutation $\sigma$ of $S_n$;
We shall denote this permutation $\sigma$ by $\sigma_{{\bf u},A,B}$ and for any $c=(c_1, \dots , c_n)\in \F^n$, we will often write
$\sigma_{{\bf u},A,B}(c)$ for the $n$-tuple $(c_{\sigma(1)},\dots,c_{\sigma(n)})$.

\begin{lemma}\label{lem:sigmauAB}
Let $r$ be a nonnegative integer $\le \ell$ and let  $B \in \GL_{\ell}(\F)$, $A \in \GL_{\ell'}(\F)$, and ${\bf u} \in M_{\ell \times \ell'}(\F)$. Then $\sigma_{{\bf u},A,B} \in \Aut\left(\Cr\right)$.
\end{lemma}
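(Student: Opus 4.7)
The plan is to reduce the claim to a statement about minors: namely, that for any $i \times i$ minor $\mathcal{M}$ of $X$ with $i \le r$, the polynomial $\mathcal{M}(BXA^{-1}+{\bf u})$ lies in $\mathcal{F}(\ell,m;r)$.

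First I would unravel the definitions. For $f \in \mathcal{F}(\ell,m;r)$, writing $\sigma = \sigma_{{\bf u},A,B}$, the defining relation $\psi_{{\bf u},A,B}(P_i) = P_{\sigma(i)}$ gives
$$
\sigma_{{\bf u},A,B}\bigl(\Ev(f)\bigr) = \bigl(f(P_{\sigma(1)}),\dots,f(P_{\sigma(n)})\bigr) = \Ev\bigl(f \circ \psi_{{\bf u},A,B}\bigr).
$$
Thus it suffices to prove that $f \circ \psi_{{\bf u},A,B} \in \mathcal{F}(\ell,m;r)$, and by $\F$-linearity we may assume $f = \mathcal{M}$ is an $i\times i$ minor of $X$ with $0 \le i \le r$. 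As a polynomial in $X$, the composition $\mathcal{M} \circ \psi_{{\bf u},A,B}$ is the corresponding $i \times i$ minor of the new $\ell \times \ell'$ matrix $BXA^{-1} + {\bf u}$, so what we really need is:

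\medskip
\noindent\emph{Key assertion.} Every $i \times i$ minor of $BXA^{-1} + {\bf u}$ is an $\F$-linear combination of $j \times j$ minors of $X$ for $0 \le j \le i$, and therefore lies in $\mathcal{F}(\ell,m;i) \subseteq \mathcal{F}(\ell,m;r)$.

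\medskip
I would establish this in two steps. Set $Y := BXA^{-1}$ and fix row indices $I$ and column indices $J$ of size $i$. First, applying multilinearity of the determinant to the rows of $(Y+{\bf u})[I;J]$ gives
$$
\det\bigl((Y+{\bf u})[I;J]\bigr) = \sum_{S \subseteq I} \det(M_S),
$$
where $M_S$ has rows taken from $Y$ at positions in $S$ and from ${\bf u}$ at positions in $I\setminus S$; Laplace expansion along the $Y$-rows then rewrites each $\det(M_S)$ as a signed sum of products $\det(Y[S;T])\,\det({\bf u}[I\setminus S;J\setminus T])$ over $T \subseteq J$ with $|T|=|S|$. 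Since $|S|\le i$, this shows that $i\times i$ minors of $Y+{\bf u}$ are $\F$-linear combinations of $j\times j$ minors of $Y$ with $j \le i$. Second, the Cauchy--Binet formula gives
$$
\det\bigl((BXA^{-1})[S;T]\bigr) = \sum_{|K|=|L|=|S|} \det(B[S;K])\,\det(X[K;L])\,\det(A^{-1}[L;T]),
$$
expressing each $j\times j$ minor of $Y$ as an $\F$-linear combination of $j\times j$ minors of $X$. Chaining the two identities proves the key assertion.

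There is no genuine obstacle here, since every step is a bookkeeping manipulation with determinants; the only care needed is in tracking signs and index sets when combining multilinearity, Laplace, and Cauchy--Binet. Once the key assertion is in hand, the conclusion that $\mathcal{M}\circ \psi_{{\bf u},A,B} \in \mathcal{F}(\ell,m;r)$, and hence $\sigma_{{\bf u},A,B}\in\Aut(\Cr)$, is immediate.
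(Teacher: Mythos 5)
Your proof is correct and follows essentially the same route as the paper: reduce to showing that each $i\times i$ minor of $BXA^{-1}+{\bf u}$ lies in ${\mathcal F}(\ell,m;i)\subseteq{\mathcal F}(\ell,m;r)$, with Cauchy--Binet handling the multiplications. The only difference is that the paper outsources the effect of $XA^{-1}+{\bf u}$ to Lemma 18 of \cite{BGH} and applies Cauchy--Binet only to $BX$, whereas you prove the translation step directly via row-multilinearity and Laplace expansion; this is just a self-contained version of the same argument.
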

\begin{proof}
%Let $s$ be any integer with $0\le s \le r$. every $s\times s$
From Lemma 18 of \cite{BGH} and with its proof, we know that if $f= f(X)$ is in ${\mathcal F}(\ell,m;r)$, then $f(XA^{-1}+{\bf u})\in {\mathcal F}(\ell,m;r)$.
%The case when $B=I_{\ell}$, the $\ell\times \ell$ identity matrix, is proved in \cite[Lemma 18]{BGH}. Thus we may assume without loss of generality that $A=I_{\ell'}$ and ${\bf u} = {\bf 0}$.
Now consider the product $BX$ and let $s$ be any integer with $0\le s \le r$. Observe that
any $s\times s$ minor of $BX$ is of the form $\det(B_sX^s)$, where $B_s$ is a $s\times \ell$ submatrix of $B$ and $X^s$ is a $\ell\times s$ submatrix of $X$. Hence by the Cauchy-Binet formula (cf. \cite[Lemma 10]{BGH}), every $s\times s$ minor of $BX$ is a $\F$-linear combination of $s\times s$ minors of $X$. Consequently, %Observe that
if $f \in {\mathcal F}(\ell,m;r)$, then $f(BXA^{-1}+{\bf u})\in {\mathcal F}(\ell,m;r)$. Moreover,
$$
\sigma_{{\bf u},A,B}\left(\Ev(f) \right)= \big(f(\psi_{{\bf u},A,B}(P))\big)_{P\in \Adelta}=\Ev\left(f(BXA^{-1}+{\bf u})\right).
$$
It follows that $\sigma_{{\bf u},A,B} \in \Aut(C)$, where $C = \Cr = \Ev \left({\mathcal F}(\ell,m;r)\right)$.
\end{proof}

Notice that $\psi_{\mathbf{0}, I_{\ell'},I_{\ell}}$ is the identity transformation of $\Aff^{\delta}$, where $\mathbf{0}$ denotes the zero matrix in
$M_{\ell \times \ell'}(\F)$ and $I_{\ell'}$ (resp: $I_{\ell}$) denotes the $\ell' \times \ell'$ (resp: $\ell \times \ell$) identity matrix over $\F$.   Moreover, given any $A, A' \in \GL_{\ell'}(\F)$, $B, B' \in \GL_{\ell}(\F)$, and
${\bf u}, \mathbf{v} \in M_{\ell \times \ell'}(\F)$, we have
%$$
%(PB^{-1}+{\mathbf{v}})A^{-1}+{\mathbf{u}}=P(AB)^{-1}+{\mathbf{v}}A^{-1}+{\mathbf{u}}
%$$
%and hence
\begin{equation}\label{eq:psiAB}
\psi_{{\mathbf{u}},A,B}\circ\psi_{{\mathbf{v}},A',B'}= \psi_{\mathbf{w},AA', BB'}
\quad \mbox{and} \quad \psi_{{\mathbf{u}},A,B}^{-1} = \psi_{\mathbf{u'},A^{-1}, B^{-1}},
\end{equation}
where $\mathbf{w}:= B{\mathbf{v}}A^{-1}+{\mathbf{u}}$ and $\mathbf{u'} = -B^{-1}\mathbf{u}A$.
%This leads to the following observation-cum-definition.
%The set $${\mathfrak G}(\ell,m)=\{\phi_{{\mathbf{u}},A} \, | \, A \in \GL_{\ell'}(\F), \, {\mathbf{u}}\in M_{\ell\times\ell'}(\F)\}$$ is a group with respect to composition and it holds that  since  We determine the group structure in the following proposition.
%
%\begin{definition}
%\label{def:Slm} This shows
It follows that %the set
$$
{\mathfrak H}(\ell,m):= \{\psi_{{\mathbf{u}},A, B} \, : \, A \in \GL_{\ell'}(\F), \; B\in \GL_{\ell}(\F), \mbox{ and }  {\mathbf{u}}\in M_{\ell\times\ell'}(\F)\}
$$
%forms
is a group with respect to composition of maps. %This group will be denoted by ${\mathfrak H}(\ell,m)$.
%\end{definition}
We determine the group structure of ${\mathfrak H}(\ell,m)$ in the following result, which is an analogue of Proposition 20 of \cite{BGH}.

\begin{proposition}\label{prop:groupstructure}
Let $\Gamma(\ell,\ell')$ denote the factor group $G/Z$, where  $G$ is the direct product
$\GL_{\ell}(\F)\times \GL_{\ell'}(\F)$ and $Z$ is the normal subgroup of $G$ given by $\{\left(\lambda I_{\ell},\; \lambda I_{\ell'}\right) : \lambda\in \F^*\}$. Then
as a group ${\mathfrak H}(\ell,m)$ is isomorphic to the semidirect product $M_{\ell\times\ell'}(\F) \rtimes_{\theta} \Gamma(\ell,\ell')$, where $\theta:  \Gamma(\ell,\ell') \to Aut(M_{\ell\times\ell'}(\F))$ is a group homomorphism defined by $\theta\left((B,A)Z\right)(\mathbf{u}):=B\mathbf{u}A^{-1}$.
\end{proposition}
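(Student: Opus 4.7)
The plan is to construct a surjective homomorphism from the ``larger'' semidirect product
$\tilde G := M_{\ell\times\ell'}(\F) \rtimes_{\tilde\theta} \bigl(\GL_\ell(\F) \times \GL_{\ell'}(\F)\bigr)$
onto $\mathfrak H(\ell,m)$, where $\tilde\theta(B,A)(\mathbf u) := B\mathbf u A^{-1}$, and then identify its kernel with the copy of $Z$ sitting inside $\{\mathbf 0\} \times \GL_\ell(\F)\times\GL_{\ell'}(\F) \subset \tilde G$. The first isomorphism theorem, together with the observation that $\tilde\theta$ factors through $\Gamma(\ell,\ell')$, will then give the desired isomorphism.

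First I would define $\Phi : \tilde G \to \mathfrak H(\ell,m)$ by $\Phi(\mathbf u, B, A) := \psi_{\mathbf u, A, B}$. Identity \eqref{eq:psiAB} gives
$\psi_{\mathbf u, A, B} \circ \psi_{\mathbf v, A', B'} = \psi_{B\mathbf v A^{-1}+\mathbf u,\, AA',\, BB'}$,
which is precisely the image under $\Phi$ of the semidirect-product multiplication
$(\mathbf u, B, A)(\mathbf v, B', A') = (\mathbf u + B\mathbf v A^{-1},\, BB',\, AA')$.
So $\Phi$ is a group homomorphism, and it is surjective directly from the definition of $\mathfrak H(\ell,m)$.

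Next I would compute $\ker \Phi$. A triple $(\mathbf u, B, A)$ lies in the kernel iff $BPA^{-1}+\mathbf u = P$ for every $P \in \Adelta$; taking $P = \mathbf 0$ forces $\mathbf u = \mathbf 0$, and the condition reduces to $BP = PA$ for all $P \in M_{\ell\times\ell'}(\F)$. Testing this against the standard basis matrices $E_{ij}$ and comparing entries yields $B = \lambda I_\ell$ and $A = \lambda I_{\ell'}$ for a single scalar $\lambda \in \F^*$. Hence $\ker \Phi = \{(\mathbf 0, \lambda I_\ell, \lambda I_{\ell'}) : \lambda \in \F^*\}$, which is precisely the image of $Z$ in $\tilde G$.

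Finally I would observe that for any $(B,A) \in Z$ and any $\mathbf u$, we have $\tilde\theta(B,A)(\mathbf u) = \lambda \mathbf u \lambda^{-1} = \mathbf u$, so $\tilde\theta$ descends to a well-defined action $\theta$ of $\Gamma(\ell,\ell')$ on $M_{\ell\times\ell'}(\F)$ by the formula in the statement. The canonical projection
$\tilde G \twoheadrightarrow M_{\ell\times\ell'}(\F) \rtimes_\theta \Gamma(\ell,\ell')$
is a surjective homomorphism with kernel equal to $\ker \Phi$, so the two quotients are canonically isomorphic and the proposition follows. The main obstacle is the ``scalar matrix'' step in the kernel calculation: the rest is routine bookkeeping of semidirect-product structures, but the substantive algebraic fact that $BP = PA$ for \emph{all} rectangular $P \in M_{\ell\times\ell'}(\F)$ forces $A$ and $B$ to be the \emph{same} scalar multiple of the identity (even though $\ell \neq \ell'$ in general) has to be checked carefully by a direct entry-wise argument.
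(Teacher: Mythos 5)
Your proposal is correct and is essentially the paper's argument: the paper defines the map directly on $M_{\ell\times\ell'}(\F) \rtimes_{\theta} \Gamma(\ell,\ell')$ via $\left({\mathbf{u}},(B,A)Z\right)\mapsto \psi_{{\mathbf{u}},A,B}$ and checks injectivity by exactly your kernel computation (take $P=\mathbf 0$ to kill $\mathbf u$, then test against the matrices $E_{rs}$ to force $B$ and $A$ to be a common scalar multiple of the identity). Lifting to the full product $\GL_\ell(\F)\times\GL_{\ell'}(\F)$ and quotienting afterwards by the first isomorphism theorem is only a cosmetic reorganization of the same proof.
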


\begin{proof}
It is easy to check that $\theta$ is well-defined and that it is a group homomorphism.
Let $\eta: M_{\ell\times\ell'}(\F) \rtimes_{\theta} \Gamma(\ell,\ell') \to {\mathfrak H}(\ell,m)$ be the map given by
$\left({\mathbf{u}},\; (B,A)Z\right)\mapsto \psi_{{\mathbf{u}},A,B}$. Clearly, $\eta$ is well-defined and surjective.
Moreover, from \eqref{eq:psiAB} it is readily seen that $\eta$ is a group homomorphism. Finally, suppose $\left({\mathbf{u}},\; (B,A)Z\right)$ is in the
kernel of $\eta$ %identity transformation of $\Aff^{\delta}$
for some $B \in \GL_{\ell}(\F)$, $A \in \GL_{\ell'}(\F)$, and ${\bf u} \in M_{\ell \times \ell'}(\F)$. Then
\begin{equation}
\label{eq:bpainv}
BPA^{-1} + {\mathbf{u}} = P \quad \text{ for all } P \in \Adelta. %=(p_{ij})_{1\le i \le \ell, \; 1\le j \le \ell'}\in \Adelta.
\end{equation}
Taking $P$ to be the zero matrix in \eqref{eq:bpainv}, we obtain $\mathbf{u}=\mathbf{0}$. Next, write $B=(b_{ij})$ and $A^{-1}=(a'_{ij})$ and
let us fix any $r,s\in \Z$ with $1\le r\le \ell$ and $1\le s\le \ell'$. Taking $P$ to be the $\ell \times \ell'$ matrix $E_{rs}$, with $1$ in $(r,s)^{\rm th}$ spot and $0$ elsewhere, in \eqref{eq:bpainv}, we obtain
\begin{equation}
\label{eq:birasj}
b_{ir}\,a'_{sj} = \begin{cases} 1 & \text{ if } (i,j)=(r,s), \\ 0 & \text{ if } (i,j)\ne (r,s), \end{cases} \quad \text{ for } 1\le i \le \ell \text{ and } 1\le j \le \ell'.
\end{equation}
%where we have let $B=(b_{ij})$ and $A^{-1}=(a'_{ij})$.
In particular, $b_{rr}\ne 0$ and $a'_{ss}\ne 0$. % for all $i=1, \dots , \ell$ and $j=1, \dots , \ell'$.
Now taking $j=s$ in \eqref{eq:birasj}, we obtain $b_{ir}=0$ for $i\ne r$. Likewise, $a'_{sj}=0$ for $j\ne s$. It follows that $B$ and $A^{-1}$ are diagonal matrices. Furthermore, thanks to \eqref{eq:birasj}, we have $b_{11}a'_{11}=\dots = b_{\ell\ell}a'_{11}=1$  and $b_{11}a'_{11}=\dots = b_{11}a'_{\ell'\ell'}=1$,
and therefore %scalar matrices of the form
$B=\lambda I_{\ell}$ and $A^{-1}=\lambda^{-1}I_{\ell'}$ for some $\lambda\in \F^*$. This shows that the coset of $(B, A)$ in $G/Z$ is the identity element. Thus $\eta$ is an isomorphism.
\end{proof}

%We now show the main theorem of this section.
It may be noted that $C^\mathbb{A}(\ell,m;0)$ is a one-dimensional code of length $n=q^{\delta}$ spanned by $(1,1,\dots, 1)$ and thus its automorphism group is the full symmetric group $S_n$. For affine Grassmann codes of level $r\ge 1$, one has the following partial result, which extends Theorem 21 of \cite{BGH}.

\begin{theorem}
\label{thm:autom}
Let $r$ be a positive integer $\le \ell$.
Then the  automorphism group of  $\Cr$ contains a subgroup isomorphic to ${\mathfrak H}(\ell,m)$. In particular, %the order of
\begin{equation}
\label{eq:Autsize}
\# \Aut\left(\Cr\right) \ge \frac{q^{\delta}}{q-1} \left(\prod_{i=0}^{\ell-1}(q^\ell-q^i)\right)\left(\prod_{j=0}^{\ell'-1}(q^{\ell'}-q^j)\right).
\end{equation} %is greater than or equal to the quantity in \eqref{eq:orderAut}.
\end{theorem}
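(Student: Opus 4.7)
The plan is to exhibit an injective group homomorphism $\Phi: {\mathfrak H}(\ell,m) \to \Aut(\Cr)$; once that is in hand, the cardinality bound \eqref{eq:Autsize} will drop out of the order formula for ${\mathfrak H}(\ell,m)$ supplied by Proposition \ref{prop:groupstructure}.

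First I would define $\Phi$ by sending $\psi_{{\bf u},A,B}$ to $\sigma_{{\bf u},A,B}$. This is well-defined because $\sigma_{{\bf u},A,B}$ is determined purely by the action of $\psi_{{\bf u},A,B}$ on $\Adelta = \{P_1,\dots,P_n\}$, and Lemma \ref{lem:sigmauAB} guarantees that the image lies in $\Aut(\Cr)$. That $\Phi$ respects composition is immediate from the definition of $\sigma_{{\bf u},A,B}$: if $\psi(P_i) = P_{\sigma(i)}$ and $\psi'(P_i) = P_{\sigma'(i)}$, then $(\psi \circ \psi')(P_i) = P_{\sigma(\sigma'(i))}$, so $\Phi$ is a group homomorphism into the symmetric group $S_n$, hence into $\Aut(\Cr)$.

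Next I would establish injectivity. If $\Phi(\psi_{{\bf u},A,B})$ is the identity permutation of $\{1,\dots,n\}$, then $\psi_{{\bf u},A,B}(P) = P$ for every $P \in \Adelta$, i.e. $BPA^{-1} + {\bf u} = P$ for all such $P$. Specializing first to $P = \mathbf{0}$ and then to $P = E_{rs}$ for each pair $(r,s) \in \rec$, exactly as in the closing paragraph of the proof of Proposition \ref{prop:groupstructure}, forces ${\bf u} = \mathbf{0}$, $B = \lambda I_\ell$, and $A = \lambda I_{\ell'}$ for some $\lambda \in \F^*$. Consequently $\psi_{{\bf u},A,B}$ is the identity element of ${\mathfrak H}(\ell,m)$, proving injectivity.

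Finally, combining Proposition \ref{prop:groupstructure} with the standard orders $|M_{\ell\times\ell'}(\F)| = q^\delta$, $|Z| = q-1$, and $|\GL_s(\F)| = \prod_{i=0}^{s-1}(q^s - q^i)$, a direct calculation yields
$$
\#\,{\mathfrak H}(\ell,m) \;=\; \frac{q^\delta}{q-1}\left(\prod_{i=0}^{\ell-1}(q^\ell - q^i)\right)\left(\prod_{j=0}^{\ell'-1}(q^{\ell'}-q^j)\right),
$$
and the bound \eqref{eq:Autsize} follows from injectivity of $\Phi$. The only genuinely nontrivial step is the injectivity verification, but the matrix-unit argument needed there has essentially been executed inside the proof of Proposition \ref{prop:groupstructure}; thus the theorem amounts to a short assembly of the preceding results, and no serious obstacle is anticipated.
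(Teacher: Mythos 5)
Your proposal is correct and follows essentially the same route as the paper: the map $\psi_{{\bf u},A,B}\mapsto\sigma_{{\bf u},A,B}$ is shown to be an injective group homomorphism into $\Aut\left(\Cr\right)$ using Lemma \ref{lem:sigmauAB}, and the size bound comes from the order of ${\mathfrak H}(\ell,m)$ via Proposition \ref{prop:groupstructure}. The only (immaterial) difference is in the injectivity step: you pass directly from the identity permutation to $BPA^{-1}+{\bf u}=P$ for all $P$, whereas the paper first deduces $f(BPA^{-1}+{\bf u})=f(P)$ for all $f$ and then specializes $f$ to the $1\times 1$ minors; both yield \eqref{eq:bpainv} and hence triviality of the kernel.
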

\begin{proof}
In view of Lemma \ref{lem:sigmauAB}, % and Proposition \ref{prop:groupstructure},
$\psi_{{\mathbf{u}},A,B} \mapsto \sigma_{{\mathbf{u}},A,B}$ gives a natural map from ${\mathfrak H}(\ell,m)$
into $\Aut\left(\Cr\right)$. It is readily seen that this map is a group homomorphism. %Thus to complete the proof,
So it suffices to show that this homomorphism is injective. To this end, suppose $\sigma_{{\mathbf{u}},A,B}$ is the identity permutation for some
$B \in \GL_{\ell}(\F)$, $A \in \GL_{\ell'}(\F)$, and  ${\mathbf{u}}\in M_{\ell\times\ell'}(\F)$.
Then $\sigma_{{\mathbf{u}},A,B}(\Ev(f))=\Ev(f)$ for all $f \in {\mathcal F}(\ell,m)$, i.e.,
$$
f(BPA^{-1}+{\mathbf{u}})=f(P) \quad \mbox{ for all $f \in {\mathcal F}(\ell,m)$ and all $P \in \Adelta$}.
$$
% we have that , that is to say for all $f \in {\mathcal F}(\ell,m)$ and all $P \in \Adelta$ we have that $f(PA^{-1}+{\mathbf{u}})=f(P)$.
By letting $f$ vary over all possible $1\times 1$ minors, %this implies
we see that
\eqref{eq:bpainv} holds. Hence $\psi_{{\mathbf{u}},A,B}$ is the identity transformation of
$\Aff^{\delta}$. Finally, \eqref{eq:Autsize} follows from Proposition \ref{prop:groupstructure}.
%proceedings as in the proof of Proposition \ref{prop:groupstructure}, we see that $\mathbf{u}=\mathbf{0}$ and
%the coset of $(B, A)$ in $G/Z$ is the identity element.
\end{proof}

\begin{remark}
\label{rem:transposeaut}
It may be tempting to believe that $\Aut\left(\Cr\right)$ is isomorphic to ${\mathfrak H}(\ell,m)$ for any $r\ge 1$. But already when $r=1$, we know from part (iii)~of Proposition \ref{pro:RMbasics} that $\Aut\left(C^\mathbb{A}(\ell,m;1)\right) = \Aut\left(\RM(1, \delta)\right)\simeq \AGL_{\delta}(\F)\simeq \F^{\delta} \rtimes \GL_{\delta}(\F)$, and the latter is, in general, much larger that ${\mathfrak H}(\ell,m)$. Even when
$r=\ell=\ell'>1$, one can see  as follows that $\Aut\left(\Cr\right) = \Aut\left(\C\right)$ can be larger than ${\mathfrak H}(\ell,m)$. Consider the
permutation $\sigma$ of $S_n$ induced by the transpose map, i.e., $\sigma\in S_n$ such that $(P_1^T, \dots , P_n^T) =
\left(P_{\sigma(1)}, \dots ,P_{\sigma(n)}\right)$. It is clear that the minors of $X^T$ are minors of $X$,
%(although for nonprincipal minors, the resulting polynomial can change)
and hence $\sigma$ is an automorphism of $\C$. If $\sigma$ were equal to $\sigma_{{\mathbf{u}},A,B}$ for some
$B \in \GL_{\ell}(\F)$, $A \in \GL_{\ell'}(\F)$, and  ${\mathbf{u}}\in M_{\ell\times\ell'}(\F)$, then as in the proof of Theorem \ref{thm:autom}, we obtain
$$
BPA^{-1} + {\mathbf{u}} = P^T \quad \text{ for all } P =(p_{ij})_{1\le i \le \ell, \; 1\le j \le \ell'}\in \Adelta.
$$
Taking $P$ to be the zero matrix, we conclude that ${\mathbf{u}}= {\mathbf{0}}$. Further, since linear polynomials are reduced and hence determined by the corresponding $\F$-valued function on $\Adelta$, we see that $BXA^{-1}=X^T$. In particular, writing $B=(b_{ij})$ and $A^{-1} = \left(a'_{ij}\right)$, we see that
$$
X_{ii} = \sum_{r=1}^{\ell}\sum_{s=1}^{\ell} b_{ir}X_{rs}a'_{si} \quad \text{ for } i=1, \dots , \ell.
$$
Consequently, for any $i,r,s\in\{1, \dots , \ell\}$, we obtain $b_{ii} a'_{ii} = 1$ and $b_{ir} a'_{si} = 0$ if $(r,s)\ne (i,i)$. This, in turn,  implies that $B$ and $A^{-1}$ are diagonal matrices. But then the $(i,j)^{\rm th}$ entry of $BXA^{-1}$ is $b_{ii}X_{ij}a'_{jj}$, which can not always be $X_{ji}$ since $\ell >1$. This shows that $\sigma$ does not belong the subgroup of $\Aut\left(\C\right)$ corresponding to ${\mathfrak H}(\ell,m)$.
At any rate, the complete determination of %the automorphism group
$\Aut(\C)$ and more generally, $\Aut\left(\Cr\right)$ for $1< r\le \ell$, remains an open question.
\end{remark}
%It appears plausible that $\Aut(\C)={\mathfrak G}(\ell,m)$. We leave
 % for future investigation.
%It is future work to determine if $\Aut(\C)={\mathfrak G}(\ell,m)$.

\section{Duality}
\label{sec:dual}

In this section we shall explicitly determine the dual of any affine Grassmann code and compute its minimum distance. Let us begin by observing that the monomial
$$
\full := %\mathop{\prod_{1\le i \le \ell}}_{1\le j \le \ell'}
\prod_{(i,j)\in \rec} X_{ij}^{q-1} = \prod_{i=1}^{\ell} \prod_{j=1}^{\ell'} X_{ij}^{q-1}
$$
is reduced and that $\mu \in \Mon$ is a reduced monomial if and only if $\mu$ divides $\full$. We may refer to $\full$ as the \emph{full product}.
%As noted earlier, for $1\le r\le \ell$, the length of $\Cr$ is $n=q^{\delta}$ and this coincides with the number of reduced monomials:
Note that for $0\le r\le \ell$,
$$
\dim \rp =\#\, \RMon = \sum_{s=0}^{\ell} \binom{\delta}{s} (q-1)^s = q^{\delta} = \text{length}\left(\Cr\right)
$$
and also that
\begin{equation}
\label{eq:dimCrp}
\dim \Cr^{\perp} = n - \dim \Cr = q^{\delta} - \sum_{i=0}^{r} {\binom{\ell}{i}}{\binom{\ell'}{i}}.
\end{equation}
The usual ``inner product'' on $\F^{q^\delta}$ corresponds to the symmetric bilinear form $\langle \;, \;\rangle$ on the $\Fq$-linear space
$\rp$ given by
$$
\left\langle f,g\right\rangle :=  \sum_{P\in \Ad} f(P)g(P) =   \sum_{P\in \Ad} fg(P) =  \sum_{P\in \Ad} \overline{fg}(P) ,  \quad \text{ for } f,g\in \rp.
$$
The dual of $\Cr$ corresponds, via the $\Fq$-linear isomorphism $f\mapsto \Ev(f)$ of $\rp \to \F^{q^\delta}$, to the subspace
\begin{equation}
\label{eq:dualinrp}
\left\{f\in \rp : \left\langle f,{\mathcal M}\right\rangle = 0 \text{ for all } {\mathcal M} \in \Delta(\ell,m;r)\right\}
\end{equation}
of $\rp$. We shall now proceed to determine an explicit $\F$-basis of this subspace. The first step is to recall the following
well-known result (cf. \cite[Lem. 1.6]{DGM}).
%The following result is well-known; see, e.g., \cite[Lem. 1.6]{DGM}.

\begin{proposition}
\label{prop:muPzero}
Let $\mu \in \RMon$ be a reduced monomial. Then
$$
 \sum_{P\in \Ad} \mu (P) = \begin{cases} 0 & \text{ if }\, \mu \ne \full, \\ (-1)^{\delta} & \text{ if }\, \mu = \full.\end{cases}
$$
\end{proposition}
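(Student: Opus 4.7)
The proof I would give is a completely standard character-sum computation, and the whole strategy is to factor the sum coordinate-by-coordinate.

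First, I would write $\mu = \prod_{(i,j)\in\rec} X_{ij}^{\alpha_{ij}}$ with $0\le \alpha_{ij}\le q-1$ since $\mu$ is reduced, and observe that a point $P\in\Adelta$ is a tuple $(p_{ij})_{(i,j)\in\rec}$ with each $p_{ij}\in\Fq$ ranging independently. Therefore the sum factors as
$$
\sum_{P\in\Ad} \mu(P) \;=\; \sum_{(p_{ij})\in\Fq^{\delta}} \prod_{(i,j)\in\rec} p_{ij}^{\alpha_{ij}} \;=\; \prod_{(i,j)\in\rec} \Bigl(\sum_{p\in\Fq} p^{\alpha_{ij}}\Bigr).
$$

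Next, I would establish the one-variable identity
$$
\sum_{p\in\Fq} p^{\alpha} \;=\; \begin{cases} 0 & \text{if } 0\le \alpha \le q-2, \\ -1 & \text{if } \alpha = q-1, \end{cases}
$$
using the convention $0^{0}=1$. The case $\alpha=0$ gives $q\cdot 1 = 0$ in $\Fq$. For $\alpha\ge 1$ the term at $p=0$ vanishes, so it suffices to sum over $\Fq^{*}$, which is cyclic of order $q-1$. If $(q-1)\mid \alpha$, i.e.\ $\alpha=q-1$ in the reduced range, every summand is $1$ and the sum is $q-1=-1$. Otherwise, picking a generator $g$ of $\Fq^{*}$, the sum is the geometric series $\sum_{k=0}^{q-2} (g^{\alpha})^{k} = (g^{\alpha(q-1)}-1)/(g^{\alpha}-1) = 0$, the denominator being nonzero precisely because $g^{\alpha}\ne 1$.

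Combining the two, the product over $(i,j)\in\rec$ is zero as soon as some exponent $\alpha_{ij}$ is strictly less than $q-1$, i.e.\ as soon as $\mu\ne\full$. If instead $\mu=\full$, then every one of the $\delta$ factors equals $-1$, giving $(-1)^{\delta}$. This is exactly the statement of the proposition.

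There is no real obstacle here: the only mild subtlety is the $0^{0}=1$ convention needed for the $\alpha=0$ factors and the bookkeeping that $\rec$ has cardinality $\delta=\ell\ell'$, which is precisely the exponent appearing in $(-1)^{\delta}$.
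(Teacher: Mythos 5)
Your proof is correct and is the standard argument: the paper itself does not prove this proposition but simply cites it as well known (referring to Lemma 1.6 of Delsarte--Goethals--MacWilliams), and the factorization into one-variable power sums over $\Fq$, together with the identity $\sum_{p\in\Fq}p^{\alpha}=0$ for $0\le\alpha\le q-2$ and $=-1$ for $\alpha=q-1$, is exactly the classical derivation. The handling of the $0^{0}=1$ convention and the count $\#\rec=\delta$ are both in order, so nothing is missing.
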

%It may be noted that the product of reduced monomials need not be a reduced monomial, and in particular,
We have noted in Remark \ref{multred} that $\overline{fg}$ need not be equal to
$\bar f \bar g$ for arbitrary $f,g\in \Fq[X]$. The following useful lemma shows what is the best that we can do in a special case.
%when the product of two reduced monomial

\begin{lemma}
\label{lem:munuF}
Let $\mu, \nu \in \RMon$ be such that $\overline{\mu \nu} = \full$. Then there is a divisor $\nu'$ of $\nu$ such that $\mu\nu'=\full$.
Moreover, if $\nu$ is squarefree and if $q>2$, then %we can take $\nu'=\nu$, i.e.,
$\mu\nu=\full$.
\end{lemma}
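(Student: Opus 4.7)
The plan is to reduce the claim to a variable-by-variable comparison of exponents. I would write $\mu=\prod_{(i,j)\in\rec}X_{ij}^{\alpha_{ij}}$ and $\nu=\prod_{(i,j)\in\rec}X_{ij}^{\beta_{ij}}$ with $0\le \alpha_{ij},\beta_{ij}\le q-1$ (since $\mu,\nu\in\RMon$), so that $\mu\nu=\prod X_{ij}^{\alpha_{ij}+\beta_{ij}}$ with each total exponent lying in $[0,2(q-1)]$. Reduction can act non-trivially only at positions where $\alpha_{ij}+\beta_{ij}\ge q$, and there it replaces the exponent by $\alpha_{ij}+\beta_{ij}-(q-1)$.

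Imposing the hypothesis $\overline{\mu\nu}=\full$ position by position, the reduced exponent of $X_{ij}$ is forced to equal $q-1$. I would then observe that this leaves exactly two mutually exclusive possibilities at each $(i,j)$: either $\alpha_{ij}+\beta_{ij}=q-1$ (no reduction), or $\alpha_{ij}+\beta_{ij}=2(q-1)$ forcing $\alpha_{ij}=\beta_{ij}=q-1$ (the unique value in $[q,2q-2]$ that reduces to $q-1$, as it is the only multiple of $q-1$ in that range). With this dichotomy in hand, the natural candidate is $\nu':=\prod X_{ij}^{q-1-\alpha_{ij}}$: in the first case $q-1-\alpha_{ij}=\beta_{ij}$, while in the second $q-1-\alpha_{ij}=0\le\beta_{ij}$. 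Hence $\nu'$ divides $\nu$ monomially, and by construction $\mu\nu'=\full$.

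For the second assertion, squarefreeness of $\nu$ forces $\beta_{ij}\in\{0,1\}$, while $q>2$ gives $q-1\ge 2$. Together these exclude the second scenario (which requires $\beta_{ij}=q-1$), leaving only $\alpha_{ij}+\beta_{ij}=q-1<q$ at every $(i,j)$; consequently no reduction occurs at all, and $\mu\nu=\full$ literally.

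The main (and really the only) obstacle is the ambiguity produced by the reduction map, so the key step is pinning down exactly which pairs $(\alpha_{ij},\beta_{ij})$ reduce to $q-1$. Once this dichotomy is isolated, both conclusions follow at once, and the role of the hypothesis $q>2$ becomes transparent: it is precisely what rules out the branch in which $\nu$ would need to carry a square factor in order to balance $\mu$ up to $\full$.
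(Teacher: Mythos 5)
Your proof is correct and follows essentially the same route as the paper: both arguments compare exponents variable by variable, deduce $\alpha_{ij}+\beta_{ij}\ge q-1$ from $\overline{\mu\nu}=\full$, and take $\nu'=\prod_{(i,j)}X_{ij}^{q-1-\alpha_{ij}}$. Your explicit dichotomy ($\alpha_{ij}+\beta_{ij}=q-1$ or $=2(q-1)$) is a slightly sharper phrasing of the same computation and makes the squarefree case immediate, where the paper instead argues by contradiction.
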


\begin{proof}
For $(i,j)\in \rec$, %$1\le i\le \ell$ and $1\le j \le \ell'$,
let $\alpha_{ij} = \deg_{X_{ij}} \mu$ and  $\beta_{ij} = \deg_{X_{ij}} \nu$. Since
%$0\le \alpha_{ij}, \beta_{ij}\le q-1$,
$\mu$ and $\nu$ are reduced, it follows that
$$
\deg_{X_{ij}} \overline{\mu \nu} =  \begin{cases}  \alpha_{ij} + \beta_{ij} & \text{ if }\,  \alpha_{ij} + \beta_{ij} \le q-1, \\
 \alpha_{ij} + \beta_{ij} - q+ 1  & \text{ if }\,  \alpha_{ij} + \beta_{ij} \ge q.\end{cases}
$$
On the other hand, since $\overline{\mu \nu} = \full$, we see that $\alpha_{ij} + \beta_{ij} \ge q-1$ for all $(i,j)\in \rec$. %It follows that
Hence
$
\nu' :=  \prod_{(i,j)\in \rec} %\mathop{\prod_{1\le i \le \ell}}_{1\le j \le \ell'}
X_{ij}^{q-1-\alpha_{ij}}
$
is a divisor of $\nu$ and it clearly satisfies $\mu\nu'=\full$. Finally, suppose $\nu$ is squarefree and $q>2$, but $\nu'\ne \nu$. Then there is a variable $X_{ij}$ that divides $\nu$, but not %does not divide
$\nu'$. Now since $\mu\nu'=\full$, we see that %$X_{ij}^{q-1}\mid \mu$.
$\deg_{X_{ij}}\mu = q-1$. But then $\deg_{X_{ij}}\mu\nu = q$, which contradicts the assumption that $\overline{\mu \nu} = \full$, since $q>2$.
\end{proof}

Given any nonnegative integer $r\le \ell$, define
$$
\FM:=\left\{\frac{\full}{t} \, : \, t\in \Term({\mathcal M}) \text{ for some } {\mathcal M}\in \Delta(\ell,m;r)\right\}.
$$
It is clear that elements of $\FM$ are reduced monomials; we shall refer to them as
%Moreover, it will be seen that evaluations of every reduced monomial except those in $\FM$ are in the dual of $\Cr$. With this in view,
%Elements of $\FM$ are clearly reduced, and they may be referred to as the
\emph{forbidden monomials} with respect to the affine Grassmann code of level $r$. This terminology is justified by the following result.

\begin{lemma}
\label{lem:NFM}
Let $r$ be a nonnegative integer $\le \ell$ and let
$\mu\in \RMon$ be such that $\mu\not\in\FM$. Then $\Ev(\mu)\in \Cr^{\perp}$.
\end{lemma}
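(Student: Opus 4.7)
The plan is to translate $\Ev(\mu)\in \Cr^{\perp}$, via the identification in \eqref{eq:dualinrp}, into the assertion that $\langle \mu, \mathcal{M}\rangle = 0$ for every $\mathcal{M}\in \Delta(\ell,m;r)$. Fix such $\mathcal{M}\in \Delta_s(\ell,m)$ with $s\le r$ and expand it as a signed sum $\mathcal{M}=\sum_{t\in \Term(\mathcal{M})} c_t\, t$, where each term $t$ is a squarefree monomial in the variables of the fixed $s\times s$ submatrix defining $\mathcal{M}$. Then
$$\langle \mu, \mathcal{M}\rangle \;=\; \sum_{P\in \Ad}\overline{\mu\mathcal{M}}(P) \;=\; \sum_{t\in \Term(\mathcal{M})} c_t \sum_{P\in \Ad}\overline{\mu t}(P),$$
and Proposition \ref{prop:muPzero} forces each inner sum to vanish unless $\overline{\mu t}=\full$. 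So the claim reduces to showing that \emph{no} term $t$ of $\mathcal{M}$ can satisfy $\overline{\mu t}=\full$.

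The strategy is contrapositive: suppose some $t\in \Term(\mathcal{M})$ did satisfy $\overline{\mu t}=\full$. Since $t$ is squarefree, Lemma \ref{lem:munuF} (applied with $\nu=t$) produces a divisor $t'$ of $t$ with $\mu t' = \full$, that is, $\mu = \full/t'$. The goal is then to exhibit a minor in $\Delta(\ell,m;r)$ having $t'$ among its terms, which would put $\mu$ into $\FM$, contradicting the hypothesis.

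The key combinatorial observation for the last step is that every divisor of a term of an $s\times s$ minor is again a term of a smaller minor. Concretely, if the submatrix defining $\mathcal{M}$ sits on rows $i_1<\cdots<i_s$ and columns $j_1<\cdots<j_s$ and $t=\pm \prod_{k=1}^{s} X_{i_k j_{\sigma(k)}}$ for some permutation $\sigma$, then $t'$ is the monomial $\prod_{k\in K} X_{i_k j_{\sigma(k)}}$ for some subset $K\subseteq\{1,\dots,s\}$, and this is visibly (up to sign) a term of the $|K|\times|K|$ minor of $X$ on rows $\{i_k:k\in K\}$ and columns $\{j_{\sigma(k)}:k\in K\}$. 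Because $|K|\le s\le r$, this minor lies in $\Delta_{|K|}(\ell,m)\subseteq \Delta(\ell,m;r)$, so $\mu=\full/t'\in\FM$, contradicting $\mu\notin\FM$. Hence every contribution to $\langle\mu,\mathcal{M}\rangle$ is zero, finishing the proof.

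The only nonroutine point is the identification in the previous paragraph; once one sees that $\FM$ is automatically closed under the passage from a term of a minor to a term of any of its sub-minors, the argument collapses to a single application of Lemma \ref{lem:munuF}. There is no need to treat the characteristic $2$ case separately, since the first conclusion of that lemma is enough and we do not require $t'=t$.
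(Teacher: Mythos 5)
Your proof is correct and follows essentially the same route as the paper's: reduce to $\langle\mu,t\rangle=0$ for each term $t$ of each minor, apply Proposition \ref{prop:muPzero}, invoke the first conclusion of Lemma \ref{lem:munuF} to produce the divisor $t'$, and use the fact that $\FM$ is closed under passing to divisors of terms of minors. The only difference is that you spell out the sub-minor identification explicitly, which the paper leaves as a one-line remark.
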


\begin{proof}
%It suffices to show that $\mu$ is in the subspace of $\rp$ given by \eqref{eq:dualinrp}. To this end,
Let ${\mathcal M}\in \Delta(\ell,m;r)$ and let $t\in \Term({\mathcal M})$. Now $\overline{\mu t}$ is reduced and if it were equal to $\full$, then by Lemma \ref{lem:munuF},
$\mu =\full/t'$ for some divisor $t'$ of $t$. But this contradicts the assumption that $\mu\not\in\FM$ because the divisor of a term of a
minor in $\Delta(\ell,m;r)$ is also a term of a minor in $\Delta(\ell,m;r)$. Thus, in view of Proposition~\ref{prop:muPzero}, we obtain
$\left\langle \mu , t \right\rangle = \sum_{P\in \Ad} \overline{\mu t}(P)=0$. Consequently, $\mu$ is in the subspace of $\rp$ given by \eqref{eq:dualinrp}, and so $\Ev(\mu)\in \Cr^{\perp}$.
\end{proof}

Already, we have enough information to show that affine Grassmann codes are almost always self-orthogonal. More precisely, we have the following.

\begin{theorem}
\label{selforth}
Let $r$ be a nonnegative integer $\le \ell$. Then the affine Grassmann code $\Cr$ of level $r$ is self-orthogonal if and only if $(\ell,m;r;q)$ is different from $(1,2;1;2)$, $(1,2;1;3)$ and $(1,3;1;2)$.
%Let $r$ be a nonnegative integer $\le \ell$.
\end{theorem}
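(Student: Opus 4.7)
The plan is to reduce self-orthogonality of $\Cr$ to deciding when the full product monomial $\full$ can appear with nonzero coefficient in $\overline{\mathcal{M}\mathcal{N}}$ for some $\mathcal{M}, \mathcal{N}\in\Delta(\ell,m;r)$. Since $\Cr$ is spanned by $\{\Ev(\mathcal{M}): \mathcal{M}\in\Delta(\ell,m;r)\}$, self-orthogonality is equivalent to $\langle\mathcal{M},\mathcal{N}\rangle = 0$ for all such pairs; by Proposition~\ref{prop:muPzero}, this inner product equals $(-1)^{\delta}$ times the coefficient of $\full$ in $\overline{\mathcal{M}\mathcal{N}}$.

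The main observation is that each minor is a signed sum of squarefree monomials of degree at most $\ell$, so every term of $\mathcal{M}\mathcal{N}$ has all variable-exponents in $\{0,1,2\}$. For $q\ge 4$, the exponent $q-1\ge 3$ demanded by $\full$ cannot be reached, so every such $\Cr$ is self-orthogonal. For $q=3$, $\full=\prod X_{ij}^2$ forces some term of $\mathcal{M}\mathcal{N}$ to have every variable at exponent exactly $2$; that term has degree $2\delta$ but is at most $2\ell$, so $\delta\le\ell$ and hence $\ell=\ell'=1$. Since $r=1$ is the only choice making $X_{11}$ available, the direct check $\langle X_{11}, X_{11}\rangle = \sum_{a\in\mathbb{F}_3}a^2 = 2 \ne 0$ produces the exception $(1,2;1;3)$; for $r=0$ one has $\langle 1,1\rangle=q^{\delta}\equiv 0$, and all other $q=3$ parameters are handled by the degree argument.

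The interesting case is $q = 2$, where $\full$ is the squarefree product of all $X_{ij}$ and the condition $\overline{t_1 t_2}=\full$ on terms $t_1,t_2$ of $\mathcal{M},\mathcal{N}$ becomes the combinatorial condition $\supp(t_1)\cup\supp(t_2) = \rec$. In any row $a\in[1,\ell]$, each term of a minor contributes at most one cell, so the union $\supp(t_1)\cup\supp(t_2)$ covers at most two columns of row $a$. Hence $\ell'\ge 3$ automatically forces self-orthogonality. For $\ell'=1$ (so $\ell=1$), $\langle X_{11},X_{11}\rangle = 1$ produces the exception $(1,2;1;2)$; for $\ell'=2$ with $\ell=1$, $\langle X_{11},X_{12}\rangle = 1$ produces $(1,3;1;2)$.

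The step I expect to be most delicate is the remaining sub-case $\ell=\ell'=2$, $q=2$, where the combinatorial obstruction does not rule out bad term-pairs. For $\mathcal{M}=\mathcal{N}=\LM_2$, both $(X_{11}X_{22}, X_{12}X_{21})$ and $(X_{12}X_{21}, X_{11}X_{22})$ cover $\rec$; one must invoke the signs to see that their contributions $(+1)(-1)$ and $(-1)(+1)$ sum to $-2\equiv 0\pmod{2}$, so the coefficient of $\full$ in $\overline{\LM_2^2}$ vanishes. Any other pair in this range involves a $0\times 0$ or $1\times 1$ minor and is ruled out by the support-size bound $|\supp(t_1)\cup\supp(t_2)|\le 3 < 4 = \delta$. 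Combined with the trivial identity $\langle 1, 1\rangle = q^\delta \equiv 0$ for $r=0$, this completes the classification.
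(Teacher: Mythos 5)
Your proof is correct and follows essentially the same route as the paper: both reduce self-orthogonality to whether the full product $\full$ can appear in the reduction of a product of two minors (via Proposition~\ref{prop:muPzero}), dispose of all but a handful of parameter sets by a degree/support bound on such products, and settle $(\ell,\ell';r;q)=(2,2;2;2)$ by the same sign-cancellation check. The only cosmetic difference is that you certify the three exceptional cases by exhibiting explicit pairs of minors with nonzero inner product, whereas the paper observes there that $\dim \Cr > \dim \Cr^{\perp}$.
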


\begin{proof}
First, if $r=0$, then $\Cr$ is the one-dimensional code spanned by the all $1$-vector $(1, 1, \dots , 1)$ in $\Fq^{q^\delta}$ and this is clearly self-orthogonal. Now suppose $r\ge 1$.
%Also, if $1\le r \le \ell\le \ell'$ and $(q-1)\ell'\le 2$, then the only possible values of
%$(\ell,\ell';r;q)$ are $(1,1;1;2)$, $(1,1;1;3)$, $(1,2;1;2)$, and $(2,2;2;2)$. For the first $3$ values, one finds $\dim \Cr > \dim \Cr^{\perp}$,
%and hence $\Cr$ is not self-orthogonal in these cases. When $r=\ell=\ell'=q=2$, the code $\Cr$ is spanned by the evaluations of
%$1, X_{11}, X_{12},X_{21}, X_{22}$ and the %$2\times 2$
%minor $B := X_{11}X_{22}+X_{12}X_{21}$. We have noted earlier that all except the last are in $\Cr^{\perp}$.
%A direct verification shows that $\langle B,B\rangle =0$, since $q=2$. Thus $\Cr$ is self-orthogonal when $r=\ell=\ell'=q=2$.
%Finally, suppose $r\ge 1$ and moreover, $r<\ell$ or  $(q-1)\ell' > 2$.
Observe that if $\mu \in \FM$ is any forbidden monomial, then
$$
\deg \mu \ge \deg \full - r  \ge \deg \full - \ell  = \left[(q-1)\ell'-1\right] \ell  \ge \left[(q-1)\ell'-1\right] r. %\ge \ell \ge r,
$$
%and moreover, $\deg \mu > r$ if $r<\ell$ or  $(q-1)\ell' > 2$.
In particular, if $(q-1)\ell' > 2$, then %we see that $r<\ell$ or if  Hence
no reduced monomial of degree $\le r$ is forbidden.
On the other hand, %the linear code
$\Cr$ is spanned by the evaluations of minors of size $\le r$, which, in turn, are $\F$-linear combinations of %squarefree
reduced monomials of degree $\le r$.
Hence by Lemma \ref{lem:NFM}, we can conclude that %we obtain
$\Cr \subseteq \Cr^{\perp}$ when $(q-1)\ell' > 2$.
Now suppose $(q-1)\ell'\le 2$. Since $1\le r\le\ell\le \ell'$, the only possible values of
$(\ell,\ell';r;q)$ are $(1,1;1;2)$, $(1,1;1;3)$, $(1,2;1;2)$, and $(2,2;2;2)$. For the first $3$ values, one finds $\dim \Cr > \dim \Cr^{\perp}$,
and hence $\Cr$ is not self-orthogonal in these cases. When $r=\ell=\ell'=q=2$, the code $\Cr$ is spanned by the evaluations of
$1, X_{11}, X_{12},X_{21}, X_{22}$ and the %$2\times 2$
minor $B := X_{11}X_{22}+X_{12}X_{21}$. %We have noted earlier that all except possibly the last
The first $5$ are non-forbidden reduced monomials; hence by Lemma \ref{lem:NFM}, they are in $\Cr^{\perp}$.
A direct verification shows that $\langle B,B\rangle =0$, since $q=2$. Thus $\Cr$ is self-orthogonal when $r=\ell=\ell'=q=2$.
\end{proof}

Although the non-forbidden monomials give rise to linearly independent elements of the dual of an affine Grassmann code, they fail to span it. To extend these to a basis, one needs to add certain binomials such as the polynomial $B$ in the proof of Theorem \ref{selforth}.
A general definition of these binomials is given below. %which are defined below.

First, let us introduce some notation, which will be useful in the sequel. For any nonnegative integer $r\le \ell$,  denote, as usual, by $\Sr$ the set of all permutations  of $\{1, \dots , r\}$. Further, given any $r\times r$ minor ${\mathcal M}$ of $X$ and any $\sigma \in \Sr$, %we shall
denote by $\tsM$ the \emph{signed term} of ${\mathcal M}$ corresponding to the permutation $\sigma$. For example, ${{\mathsf{t}}_{\sigma}({\mathcal{L}_r})}= \sgn(\sigma)X_{1\sigma(1)}\cdots X_{r\sigma(r)}$, where
%if ${\mathcal M}=
${\mathcal L}_r$ is the $r^{\rm th}$ leading principal minor of $X$.
%, then $\tsM= \sgn(\sigma)X_{1\sigma(1)}\cdots X_{r\sigma(r)}$.
We will denote by $\epsilon$ the identity permutation and, by abuse of language, regard it as an element of $\Sr$ % permutation of $\{1, \dots , r\}$
for every nonnegative integer $r$. In particular, for any minor ${\mathcal M}$ of $X$, the corresponding signed term
$\teM$ is precisely the product of the variables on the principal diagonal of the submatrix corresponding to ${\mathcal M}$.
%As usual, let $S_r$ denote the set of all %non-identity
%permutations $\sigma$ of $\{1, \dots , r\}$.
%In case $2\le r\le \ell$, the set $S_r^*$ is nonempty and %for  any non-identity permutation $\sigma$ of $\{1, \dots , r\}$, we define
Define
$$
\BtM := \frac{\full}{\teM} - \frac{\full}{\tsM}  \quad \text{ for ${\mathcal M}\in \Delta_r(\ell,m)$ and } \sigma\in \Sr.
$$
Clearly, $\BtM=0$ if $\sigma=\epsilon$ and in particular, if $r\le 1$. If $r\ge 2$ and if $\sigma$ is a non-identity permutation, then $\BtM$ is a reduced polynomial with exactly two terms, each of which is a forbidden monomial up to a sign. We may refer to $\BtM$ as the \emph{binomial} corresponding to the minor ${\mathcal M}$ and the %non-identity
permutation~$\sigma$.

\begin{lemma}
Let $i,r$ be integers such that $0\le i \le r \le \ell$, and let
${\mathcal M} \in \Delta_i(\ell,m)$ and $\sigma\in \Si$. % be a permutation of $\{1, \dots , i\}$. %\in S_i^*$,
Then $\Ev\left(\BtM\right) \in \Cr^{\perp}$. %is any $i\times i$ minor of $X$for every
\end{lemma}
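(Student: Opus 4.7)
The plan is to apply the description of $\Cr^{\perp}$ in~\eqref{eq:dualinrp} and check directly that $\langle \BtM, {\mathcal N}\rangle = 0$ for every $j \in \{0,1,\dots,r\}$ and every ${\mathcal N} \in \Delta_j(\ell,m)$. The trivial cases $i = 0$ and $\sigma = \epsilon$ give $\BtM = 0$, so I would assume $i \ge 1$ and $\sigma \ne \epsilon$.

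Expanding ${\mathcal N} = \sum_{\pi \in \Sj} \tpM$ and using bilinearity of $\langle\cdot,\cdot\rangle$ reduces the task to computing inner products of the form $\langle \full/\mu, \nu\rangle$, where $\mu$ and $\nu$ are squarefree divisors of $\full$ arising from the signed terms $\teM$, $\tsM$, and $\tpM$ (each of which equals $\pm 1$ times a squarefree monomial). By Proposition~\ref{prop:muPzero}, such an inner product equals $(-1)^{\delta}$ whenever $\overline{(\full/\mu)\nu} = \full$ and vanishes otherwise, and Lemma~\ref{lem:munuF} (applied with the squarefree $\nu$) identifies this condition as ``$\mu$ divides $\nu$,'' an inclusion that is forced to be an equality when $q > 2$.

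For $q > 2$, the equality of underlying monomials forces ${\mathcal N}$ to have the same row- and column-sets as ${\mathcal M}$ (so $j = i$ and ${\mathcal N} = {\mathcal M}$) and $\pi$ to equal $\epsilon$ or $\sigma$ depending on whether one is pairing with $\teM$ or $\tsM$. Thus the inner product vanishes except when ${\mathcal N} = {\mathcal M}$, and in that case exactly two terms survive; a short sign computation using $\sgn(\sigma)^{2} = 1$ shows both contribute $(-1)^{\delta}$, and they cancel because of the minus sign in $\BtM$. For $q = 2$ all signs in $\mathbb{F}_{q}$ are trivial, and cancellation instead uses characteristic two: a straightforward relabeling argument shows that for any ${\mathcal N}$, the number of $\pi \in \Sj$ for which $\teM$ divides $\tpM$ equals the analogous number for $\tsM$---both being $(j-i)!$ when the rows and columns of ${\mathcal M}$ are contained in those of ${\mathcal N}$, and zero otherwise---so the two contributions to $\langle\BtM, {\mathcal N}\rangle$ agree modulo $2$ and their difference vanishes. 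The hard part will be this $q = 2$ case, since Lemma~\ref{lem:munuF}'s stronger conclusion fails there and the cancellation must be engineered through characteristic two rather than through the symmetry of signs.
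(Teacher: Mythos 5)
Your proposal is correct and follows essentially the same route as the paper's own proof: expand ${\mathcal N}$ into its signed terms, reduce to Proposition~\ref{prop:muPzero} and the divisibility criterion of Lemma~\ref{lem:munuF}, and then split into the case $q>2$ (where only ${\mathcal N}={\mathcal M}$ with $\pi\in\{\epsilon,\sigma\}$ survives and the two contributions cancel by the sign computation $\sgn(\epsilon)^2-\sgn(\sigma)^2=0$) and the case $q=2$ (where each of $\teM$ and $\tsM$ divides $\tpM$ for exactly $(j-i)!$ permutations $\pi$, so the total is $2(j-i)!=0$ in characteristic two). No meaningful differences from the paper's argument.
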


\begin{proof}
%Let ${\mathcal M} \in \Delta_i(\ell,m)$ and $\sigma\in S_i^*$.
Clearly, it suffices to show that $\langle \BtM, \, {\mathcal N}\rangle = 0$ for all ${\mathcal N}\in \Delta(\ell,m;r)$.
So let us fix some $j\times j$ minor ${\mathcal N}$ of $X$, where $j\le r$.
Also let $\pi$ denote a permutation of $\{1,\dots , j\}$.
We will distinguish two cases.

{\bf Case 1:} $q>2$. %Let $\tilde{t}_{\pi}$ be a term of ${\mathcal N}$, where
%$\pi$ be any permutation of $\{1,\dots , j\}$. % and $\sigma$ be be a permutation of $\{1,\dots , i\}$.
%$\left({\full}{\tsM}\right) \tilde{t}_{\pi} = \full$ implies that ${\mathcal N} = {\mathcal M}$
%Observe that if $\left({\full}/{\tsM}\right)\tpM = \full$, then we must have $i=j$, ${\mathcal N} = {\mathcal M}$, and $\pi=\sigma$.
Since $\sgn(\pi)\tpM$ is a squarefree monomial, it follows from  Lemma \ref{lem:munuF} that
$\overline{\left({\full}/{\tsM}\right)\tpM} = \pm \full$ only when
$\tsM=\pm \tpM$, which, in turn, is possible only when %and therefore
$i=j$, $ {\mathcal M}={\mathcal N}$, and $\sigma=\pi$.
%$$
%\overline{\frac{\full}{\tsM}\tpM} = \full \Longrightarrow \tsM=\tpM \Longrightarrow i=s, \; {\mathcal N} = {\mathcal M}, \text{ and } \pi=\sigma.
%$$
%%%%%%%%% The percented part below changes as suggested by Peter %%%%%%%%%%%%%%%%%%%%
%Consequently, in view of Proposition \ref{prop:muPzero}, we obtain
%$$
%\langle \BtM, \, {\mathcal N}\rangle = \sum_{\pi\in \Sj} \langle \BtM, \, \tpM \rangle
%= \sgn(\epsilon)^2 (-1)^{\delta} - \sgn(\sigma)^2 (-1)^{\delta} =0 .
%$$
%%%%%%%%%%%%%%%%%%%%%%%%%%%%%%%%%%%%%%%%%%%%%%%%%%%%%%%%%%%%%%%%%%%%%%%%%%%%%%%%%%%
Consequently, in view of Proposition \ref{prop:muPzero}, we see that
$\langle \BtM, \, {\mathcal N}\rangle = 0$ if $\mathcal N \neq \mathcal M$, whereas
$$
\langle \BtM, \, {\mathcal M}\rangle = \sum_{\pi\in \Sj} \langle \BtM, \, {\mathsf{t}}_{\pi}({\mathcal{M}}) \rangle
= \sgn(\epsilon)^2 (-1)^{\delta} - \sgn(\sigma)^2 (-1)^{\delta} =0.
$$

{\bf Case 2:} $q=2$. In this case it follows from  Lemma \ref{lem:munuF} that
$\overline{\left({\full}/{\tsM}\right)\tpM} =  \full$ only when
$\tsM$ divides $\tpM$. Further, if $Y$ denotes the $j\times j$ submatrix of $X$ corresponding to the minor ${\mathcal N}$, then it is readily seen that $\tsM$ divides $\tpM$ if and only if
%, which, in turn, is possible only when %and therefore
$i\le j$, $\, {\mathcal M}=\det Y'$, %{\mathcal N}'$
and $\sigma= \pi '$, where $Y'$ is an $i\times i$ submatrix of  $Y$ and $\pi'$ is the restriction to $\pi$ to $\{1, \dots , i\}$.
Consequently, in view of Proposition \ref{prop:muPzero}, we see that $\langle {\full}/{\tsM}, \, \tpM \rangle =1$ for precisely $(j-i)!$ permutations $\pi\in \Sj$ obtained by extending $\sigma$ % \in \Si$
to $\{1, \dots , j\}$ by permuting  $i+1, \dots , j$ randomly. It follows that
$$
\langle \BtM, \, {\mathcal N}\rangle = 2 (j-i)! = 0.
$$
This completes the proof.
Consequently, in view of Proposition \ref{prop:muPzero}, we obtain
$\langle \BtM, \, {\mathcal N}\rangle = 0$ if $\mathcal N \neq \mathcal M$, while
$$
\langle \BtM, \, {\mathcal M}\rangle = \sum_{\pi\in \Sj} \langle \BtM, \, {\mathsf{t}}_{\pi}({\mathcal{M}}) \rangle
= \sgn(\epsilon)^2 (-1)^{\delta} - \sgn(\sigma)^2 (-1)^{\delta} =0.
$$
\end{proof}

We are now ready to describe an explicit basis for $\Cr^{\perp}$. In fact, this is given by the non-forbidden monomials and the binomials. More precisely, for a nonnegative integer $r\le \ell$, we let
$$
\basis := \left(\RMon \setminus \FM \right)\cup \left( \bigcup_{i=0}^r \left\{\BtM : {\mathcal M} \in \Delta_i(\ell,m) \text{ and } \sigma\in \Si^*\right\}
\right),
$$
where $\Si^*:=\Si\setminus\{\epsilon\}$ is the set of non-identity permutations of $\{1, \dots , i\}$; also let
$$
\fstar := \text{$\Fq$-linear span of }\basis .
$$
Note that $\fstar$ is a subspace of $\rp$ and, in particular, it is $\Fq$-isomorphic to its image in $\Fq^{q^{\delta}}$ under the evaluation map.
Now we have the following explicit description of the dual of an affine Grassmann code of any given level.

\begin{theorem} %For every nonnegative integer $r\le \ell$, we have
\label{dualdescription}
$\Cr^{\perp} = \Ev\left(\fstar\right)$ for $0\le r \le \ell$.
\end{theorem}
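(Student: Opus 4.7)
The plan is to combine the two lemmas immediately above with a direct dimension count. By Lemma~\ref{lem:NFM} (for non-forbidden reduced monomials) together with the preceding lemma (for the binomials $\BtM$), every element of $\basis$ already lies in the subspace of $\rp$ described in~\eqref{eq:dualinrp}, so the inclusion $\Ev(\fstar)\subseteq \Cr^\perp$ is immediate. Since $\Ev$ restricted to $\rp$ is an $\F$-linear isomorphism onto $\F^{q^{\delta}}$, I would then reduce the remaining task to exhibiting $\basis$ as an $\F$-linearly independent subset of $\rp$ of cardinality $q^{\delta}-\sum_{i=0}^{r}\binom{\ell}{i}\binom{\ell'}{i}$, which matches $\dim \Cr^\perp$ by~\eqref{eq:dimCrp}.

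For independence I would exploit the direct-sum decomposition $\rp = U_1\oplus U_2$, where $U_1$ is spanned by the non-forbidden reduced monomials $\RMon\setminus\FM$ and $U_2$ by the forbidden monomials $\FM$. Each binomial lies in $U_2$ by its very definition, and the non-forbidden monomials are a basis of $U_1$, so independence of $\basis$ reduces to independence of the binomial family inside $U_2$. The key structural fact I would then invoke is that a squarefree monomial $X_{i_1j_1}\cdots X_{i_sj_s}$ with distinct $i_k$ and distinct $j_k$ is a term of exactly one minor of $X$, namely the one cut out by the rows $\{i_1,\dots,i_s\}$ and columns $\{j_1,\dots,j_s\}$. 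Hence the forbidden monomials $\{\full/t : t\in \Term(\M)\}$ attached to distinct minors $\M$ are pairwise disjoint, and the independence question decouples minor by minor. For a fixed $\M\in \Delta_i(\ell,m)$ one may write each $\BtM$ in the form $m_\epsilon - \sgn(\sigma)m_\sigma$, where the $m_\sigma$ (as $\sigma$ ranges over $\Si$) are $i!$ pairwise distinct reduced monomials; any relation $\sum_{\sigma\in\Si^*}c_\sigma\BtM=0$ then forces each $c_\sigma$ to vanish by comparing coefficients of the distinct $m_\sigma$.

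The cardinality count follows from the same block structure: each of the $\binom{\ell}{i}\binom{\ell'}{i}$ minors in $\Delta_i(\ell,m)$ contributes $i!$ pairwise distinct forbidden monomials and $i!-1$ binomials, whence
\[
|\basis| = q^{\delta} - \sum_{i=0}^{r}i!\binom{\ell}{i}\binom{\ell'}{i} + \sum_{i=0}^{r}(i!-1)\binom{\ell}{i}\binom{\ell'}{i} = q^{\delta}-\sum_{i=0}^{r}\binom{\ell}{i}\binom{\ell'}{i},
\]
as required. The main obstacle is organisational rather than conceptual: one must carefully verify the ``block-diagonal'' picture in which distinct minors contribute disjoint families of forbidden monomials, which ultimately reduces to the observation that a minor is determined by its row and column index sets. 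Once this is in place, both the independence argument and the counting become essentially automatic.
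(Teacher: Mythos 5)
Your proposal is correct and follows essentially the same route as the paper: establish the inclusion $\Ev(\fstar)\subseteq \Cr^{\perp}$ from Lemma~\ref{lem:NFM} and the binomial lemma, then match $\#\basis$ against $\dim\Cr^{\perp}$ via the bijection between forbidden monomials and pairs (minor, term). Your explicit ``block-diagonal'' decoupling of the independence argument by minor is just a repackaging of the paper's observation that, after rewriting each $\BtM$ in terms of $\full/\tsM$ with $\sigma\in\Si$, the resulting expression is a combination of pairwise distinct monomials.
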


\begin{proof}
Fix a nonnegative integer $r\le \ell$. Let us first show that the elements of $\basis$ are linearly independent. Suppose
$$
\sum_{\mu} a_{\mu} \mu + \sum_{i=0}^r \sum_{\mathcal{M}\in \Delta_i(\ell,m)} \sum_{\sigma\in \Si^*} b_{i,\sigma, \mathcal{M}} \, \BtM = 0,
$$
for some $ a_{\mu}, \, b_{i,\sigma, \mathcal{M}} \in \F$, where $\mu$ varies over $\RMon \setminus \FM $. Then
\begin{equation}
\label{newlc}
\sum_{\mu} a_{\mu} \mu + \sum_{i=2}^r \sum_{\mathcal{M}\in \Delta_i(\ell,m)} \sum_{\sigma\in \Si} b_{i,\sigma, \mathcal{M}} \, \frac{\full}{\tsM} = 0,
\end{equation}
where, for $2\le i\le r$ and $\mathcal{M}\in \Delta_i(\ell,m)$, we have put
$b_{i,\epsilon, \mathcal{M}}:=- \sum_{\sigma\in \Si^*} b_{i,\sigma, \mathcal{M}}$. Now observe that \eqref{newlc} is a linear combination of distinct monomials. %and therefore,
Hence we must have $a_{\mu}=0$ and $b_{i,\sigma, \mathcal{M}}=0$ for all relevant parameters $\mu, i,\sigma,$ and $\mathcal{M}$.

To complete the proof, it suffices to show that the cardinality of $\basis$ coincides with the dimension of the subspace of $\rp$ given by
\eqref{eq:dualinrp}. %As noted earlier, the latter is equal to $\dim \Cr^{\perp}$ and is given by \eqref{eq:dimCrp}.
To this end, let us first note that a forbidden monomial is completely determined by an $i\times i$ minor of $X$
and by one of its $i!$ terms. Since $X$ has exactly $\binom{\ell}{i}\binom{\ell'}{i}$ minors, we see that
$$
\# \RMon \setminus \FM = q^{\delta} - \sum_{i=0}^r i! \binom{\ell}{i}\binom{\ell'}{i}.
$$
On the other hand, the binomials are determined by an $i\times i$ minor of $X$ and a non-identity permutation of $\{1,\dots , i\}$. Thus,
$$
\# \bigcup_{i=0}^r \left\{\BtM : {\mathcal M} \in \Delta_i(\ell,m) \text{ and } \sigma\in \Si^*\right\} = \sum_{i=0}^r (i!-1) \binom{\ell}{i}\binom{\ell'}{i}.
$$
Combining the last two equations, we see that $\# \basis$ is the expression on the right in \eqref{eq:dimCrp}, as desired.
%which, in turn, is the same as .
\end{proof}

We shall now proceed to determine the minimum distance of the dual of an affine Grassmann code.
As a warm-up, it may be noted that the Singleton bound shows already that for any nonnegative integer $r\le \ell$,
$$
d\left(\Cr^{\perp}\right) \le 1 + \sum_{i=0}^r  \binom{\ell}{i}\binom{\ell'}{i} \le  1+ \binom{m}{\ell}.
$$
This indicates that the minimum distance is rather small and it does not grow with $q$. In the trivial case $r=0$, we obtain $2$ as an
upper bound, and it is readily seen that this is attained. Indeed, ${C^\mathbb{A}(\ell,m;0)}$ is the one-dimensional code of length $q^{\delta}$
spanned by $(1,1,\dots , 1)$ and its dual contains no codeword of weight $1$. Another trivial case is when $q=2$ and $r=\ell=\ell'=1$. In this
case, ${C^\mathbb{A}(1,2;1)} = {\mathbb F}_2^2$, while ${C^\mathbb{A}(1,2;1)}^{\perp} = \{\mathbf{0}\}$. Barring these, it will be seen below that
the minimum distance is always $3$ or $4$.
% In fact, we have the following.

\begin{theorem}
\label{mindistCrdual}
%Assume that $\ell'>1$ and
Let $r$ be a positive integer $\le \ell$. Then the minimum distance of the $q$-ary code $\Cr^{\perp}$ is given by
$$
d\left(\Cr^{\perp}\right) = \begin{cases} 3 & \text{ if } \; q > 2, \\ 4 & \text{ if } \; q = 2 \, \text{ and } \, \ell'>1. \end{cases}
$$
Moreover, if $q>2$ and if $a_1, a_2$ are any distinct elements of $\Fq^*$, then $\Ev\left(g_{a_1, a_2}\right)$ is a minimum-weight codeword of $\Cr^{\perp}$, where
\begin{equation}
\label{eq:g}
g_{a_1, a_2}:= \frac{1}{\left(X_{\ell\ell'}-a_1\right) \left(X_{\ell\ell'}-a_2\right)} \prod_{(i,j)\in \rec} \left(X_{ij}^{q-1}-1\right).
\end{equation}
On the other hand, if %In case and If
$q=2$ and $\ell'>1$, then there are distinct $(i_1, j_1), (i_2, j_2)\in \rec$ such that $i_1=i_2$ or $j_1=j_2$, and moreover for any such
$(i_1, j_1), (i_2, j_2)$, if we let
\begin{equation}
\label{eq:h}
h := \frac{\full}{X_{i_1j_1}X_{i_2j_2}}. %{X_{\ell\ell'-1}X_{\ell\ell'}} .
\end{equation}
then $\Ev(h)$ is a minimum-weight codeword of $\Cr^{\perp}$.
\end{theorem}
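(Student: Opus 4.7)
My plan splits the proof into an upper-bound argument (exhibiting codewords of the claimed weight) and a lower-bound argument (ruling out smaller weights). Both rest on the observation that the $0\times 0$ minor $1$ and all $1\times 1$ minors $X_{ij}$ lie in $\Delta(\ell,m;r)$ for every $r\ge 1$, so any codeword of $\Cr^{\perp}$ must be orthogonal to the constant function and to every coordinate function.

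For the lower bound, I suppose $\Ev(f)$ has weight $w\ge 1$ with support $\{P_1,\dots,P_w\}$. Orthogonality to $\mathcal{M}=1$ forces $\sum_{s=1}^{w} f(P_s)=0$, ruling out $w=1$. If $w=2$, this together with orthogonality to each $X_{ij}$ gives $(p_{1,ij}-p_{2,ij})f(P_1)=0$ for every $(i,j)$, so $P_1=P_2$, a contradiction. Hence $d(\Cr^{\perp})\ge 3$ for all $q$. When $q=2$ every nonzero value of $f$ equals $1\in\mathbb F_2$, so the sum relation forces $w$ to be even and eliminates $w=3$, yielding $d(\Cr^{\perp})\ge 4$.

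For the upper bound with $q>2$, I would set $N:=\{P\in\Adelta : p_{ij}=0 \text{ for all } (i,j)\ne(\ell,\ell')\}$. The factor $\prod_{(i,j)\ne(\ell,\ell')}(X_{ij}^{q-1}-1)$ vanishes off $N$, so $g_{a_1,a_2}$ is supported in $N$; on $N$ the remaining factor equals $\phi(X_{\ell\ell'})$ with $\phi(t):=\prod_{a\in\Fq^*\setminus\{a_1,a_2\}}(t-a)$, which is nonzero exactly at $p_{\ell\ell'}\in\{0,a_1,a_2\}$, confirming weight $3$. To verify $\Ev(g_{a_1,a_2})\in\Cr^{\perp}$, fix $\mathcal{M}\in\Delta_i(\ell,m)$ with $0\le i\le r$: if $i\ge 2$, every term of $\mathcal{M}$ contains at least two distinct variables from distinct rows and columns, so at least one factor vanishes on $N$; if $i=1$ with $\mathcal{M}=X_{rc}$ and $(r,c)\ne(\ell,\ell')$, then $\mathcal{M}$ itself vanishes on $N$. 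Only $\mathcal{M}=1$ and $\mathcal{M}=X_{\ell\ell'}$ remain, and their inner products unwind (up to sign) to $\sum_{t\in\Fq}\phi(t)$ and $\sum_{t\in\Fq} t\,\phi(t)$; both vanish since $\phi$ has degree $q-3$ and $\sum_{t\in\Fq}t^k=0$ for $0\le k\le q-2$.

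For $q=2$ and $\ell'>1$, I would first note that a pair $(i_1,j_1)\ne(i_2,j_2)$ in $\rec$ sharing a coordinate exists (for example $(1,1)$ and $(1,2)$). For any such pair the monomial $X_{i_1j_1}X_{i_2j_2}$ is not a term of any minor of $X$, since a $2\times 2$ minor requires two distinct rows and two distinct columns; hence $h\notin\FM$ and Lemma~\ref{lem:NFM} yields $\Ev(h)\in\Cr^{\perp}$. Since $h$ is a squarefree product of $\delta-2$ variables over $\mathbb F_2$, $h(P)=1$ precisely on the $4$ points where the two omitted entries are arbitrary and all other entries equal $1$, giving weight $4$. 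The main obstacle I anticipate is the $q>2$ orthogonality check against $\mathcal{M}=1$ and $\mathcal{M}=X_{\ell\ell'}$; the key insight that makes it routine is that localizing the support of $g_{a_1,a_2}$ to $N$ collapses the two-dimensional sum to a one-variable power sum of degree at most $q-2$.
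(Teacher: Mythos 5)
Your proposal is correct, and both halves arrive at the paper's conclusion by noticeably different roads. For the lower bound, the paper first establishes the explicit basis of $\Cr^{\perp}$ (Theorem \ref{dualdescription}), observes that every basis element is a monomial or binomial of degree at most $\delta(q-1)-2$, and so embeds $\Cr^{\perp}$ into $\RM\left(\delta(q-1)-2,\,\delta\right)$, quoting the known minimum distance of that generalized Reed--Muller code from Proposition \ref{pro:RMbasics}(ii). You instead use only the containment $\RM(1,\delta)\subseteq \Cr$ --- that is, that $1$ and the $X_{ij}$ lie in the function space --- and rule out weights $1$ and $2$ (and $3$ when $q=2$) by hand; this is more elementary and self-contained, amounts to a direct proof that $\RM(1,\delta)^{\perp}$ has the stated distance, and does not lean on the duality theorem at all. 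For the membership of $\Ev\left(g_{a_1,a_2}\right)$ in the dual, the paper notes that $\deg_{X_{\ell\ell'}}g_{a_1,a_2}\le q-3$ while every forbidden monomial has degree at least $q-2$ in each variable, and invokes Lemma \ref{lem:NFM}; you check orthogonality against each minor directly by localizing the support to the points where all entries other than the $(\ell,\ell')$-th vanish and reducing to the power sums $\sum_{t\in\Fq}t^k=0$ for $0\le k\le q-2$, which is a valid and pleasantly short computation. What the paper's route buys is that the link to the forbidden-monomial description is made explicit, which is then reused in the section on generation by minimum-weight codewords; what your route buys is independence from Theorem \ref{dualdescription}, so the minimum distance could be stated and proved before the dual basis is known. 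The $q=2$ witness $h$ and its weight count are handled identically in both arguments.
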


\begin{proof}
Let us assume that either $q>2$ or that $q=2$ and $\ell'>1$. This ensures that $\delta(q-1) -2\ge 0$. %First,
Now, observe that %$\deg \full = \delta(q-1)$ and that
every element of $\basis$ is either a reduced monomial of degree $\le \delta(q-1) -2$ or a difference of two reduced monomials of degree $\le \delta(q-1) -2$. %Also note that if either $q>2$ or $q=2$ and $\ell'>1$, then $\delta(q-1) -2\ge 0$.
Hence it follows from Theorem \ref{dualdescription}
that $\Cr^{\perp}$ is a subcode of the generalized Reed-Muller code $\RM\left(\delta(q-1)-2, \, \delta\right)$. % whenever $q>2$ or $q=2$ and $\ell'>1$.
Consequently, from part (ii) of Proposition \ref{pro:RMbasics}, we see that
$$
d\left(\Cr^{\perp}\right) \ge d\left(\RM\left(\delta(q-1)-2, \, \delta\right)\right)=\begin{cases} 3 & \text{ if } \; q > 2, \\ 4 & \text{ if } \; q = 2 \, \text{ and } \, \ell'>1. \end{cases}
$$
To complete the proof, it suffices to show that the evaluations of \eqref{eq:g} and \eqref{eq:h} give codewords of (Hamming) weight $3$ and $4$, respectively.

To begin with, suppose $q>2$ and let $a_1, a_2$ be any distinct elements of $\Fq^*$. Since $X^{q-1} - 1 = \prod_{a\in \Fq^*}(X-a)$, it is clear that $g_{a_1,a_2}$ defined by \eqref{eq:g} is in $\Fq[X]$ and is, in fact, a reduced polynomial. Moreover,
$\deg_{X_{\ell\ell'}}g_{a_1,a_2} \le q-3$.
%any term of $g_{a_1,a_2}$ is of the form
%$$
%X_{i_1j_1}^{e_1} \cdots X_{i_sj_s}^{e_s}
%$$
%To this end, observe that
On the other hand, since the terms of any minor are squarefree monomials, every forbidden monomial $\mu\in \FM$ must satisfy $\deg_{X_{ij}}\mu \ge q-2$ for all
$i=1, \dots , \ell$ and $j=1, \dots , \ell'$. It follows that $g_{a_1,a_2}$ is a $\Fq$-linear combination of non-forbidden reduced monomials and in particular, it is in $\fstar$. Moreover, if $P=(p_{ij})\in \Adelta$, then $g_{a_1,a_2}(P)\ne 0$ if and only if $p_{ij}=0$ for all $(i,j)\ne (\ell,\ell')$ and $p_{\ell\ell'}\in \{0, a_1, a_2\}$. Thus we conclude that $\Ev\left(g_{a_1,a_2}\right)$ is a codeword of $\Cr^{\perp}$ of weight $3$.

Next, suppose $q=2$ and $\ell'>1$. The existence of distinct $(i_1, j_1), (i_2, j_2)\in \rec$ such that $i_1=i_2$ or $j_1=j_2$ is obvious; for example, we can take $i_1=i_2=\ell$, $j_1=\ell'-1$  and $j_2=\ell'$. Moreover, for any such $(i_1, j_1), (i_2, j_2)\in \rec$, the monomial
$X_{i_1j_1}X_{i_2j_2}$ contains two variables from the same row or from the same column, and hence it can never be the term of any minor of $X$. Consequently, the reduced monomial $h$ defined by $\eqref{eq:h}$ is non-forbidden and $\Ev(h)$ is a codeword of $\Cr^{\perp}$. Furthermore, if $P=(p_{ij})\in \Adelta$, then $h(P)\ne 0$ if and only if $p_{ij}=1$ for all $(i,j)\in \rec$ %\setminus\left\{(i_1, j_1), (i_2, j_2)\right\}$.
different from $(i_1, j_1)$ and $(i_2, j_2)$.
Thus we conclude that $\Ev(h)$ is of weight $4$.
%% Older Argument %%%%%
%Notice that $X_{\ell\ell'-1}X_{\ell\ell'}$ contains two variables from the same row and hence it can never be the term of any minor of $X$. Consequently, the reduced monomial $h$ defined by $\eqref{eq:h}$ is non-forbidden and $\Ev(h)$ is a codeword of $\Cr^{\perp}$. Moreover, if $P=(p_{ij})\in \Adelta$, then $h(P)\ne 0$ if and only if $p_{ij}=1$ for all $(i,j)$ different from $(\ell,\ell'-1)$ and $(\ell,\ell')$.
%Thus we conclude that $\Ev(h)$ is of weight $4$. %has exactly $4$ nonzero coordinates corresponding to arbitrary se
\end{proof}

\section{Generation by Minimum-weight Codewords}

In this section we will show that  the affine Grassmann codes as well as their duals have the property that the
codewords of minimum weight generate the code. %In other words, if we
%Let $\fmin$ (resp: $\fstarmin$) denote the set of all $f\in \flmr$
%(resp: $f\in \fstar$) such that $\Ev(f)$ is a minimum-weight codeword of $\Cr$ (resp: $\Cr^{\perp}$), and let $\elmr$ (resp: $\estarlmr$) denote the subspace of $\rp$ generated by $\fmin$ (resp: $\fstarmin$).
%
The case of affine Grassmann codes is easy and in fact, it is shown below that the result holds more generally for affine Grassmann codes
of any level.

\begin{proposition}
\label{AffGC}
Let $r$ be a nonnegative integer $\le \ell$. Then the minimum-weight codewords of $\Cr$ generate $\Cr$.
\end{proposition}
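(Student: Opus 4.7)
The plan is to realise $\Cr$ as the $\F$-span of the specific minimum-weight codewords
$\Ev\!\bigl(\Lr(BXA^{-1}+\mathbf u)\bigr)$ as $B\in\GL_\ell(\F)$, $A\in\GL_{\ell'}(\F)$, and $\mathbf u\in M_{\ell\times\ell'}(\F)$ vary. These are all of minimum weight: Theorem~\ref{thm:mindistClmr} shows $\Ev(\Lr)$ has the minimum weight, and Lemma~\ref{lem:sigmauAB} (used already in Theorem~\ref{thm:autom}) shows that the automorphism $\sigma_{\mathbf u,A,B}$ carries $\Ev(\Lr)$ to $\Ev(\Lr(BXA^{-1}+\mathbf u))$. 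By Proposition~\ref{pro:dimFmlr}, $\mathcal F(\ell,m;r)$ is spanned by the minors $\det(X_{I,J})$ for $0\le k\le r$ and $I\subseteq[\ell]$, $J\subseteq[\ell']$ with $|I|=|J|=k$, so it suffices to realise each such minor as an $\F$-linear combination of polynomials $\Lr(BXA^{-1}+\mathbf u)$.

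For a maximal-size minor (the case $k=r$) I would simply choose $B$ and $A$ to be permutation matrices that place the rows $I$ and the columns $J$ into positions $1,\dots,r$; then $\Lr(BXA^{-1})=\pm\det(X_{I,J})$. For a smaller minor (the case $k<r$), enlarge $I$ and $J$ to $r$-subsets $I\cup I'$ of $[\ell]$ and $J\cup J'$ of $[\ell']$ (possible since $r\le \ell \le \ell'$), and choose permutation matrices $B,A$ placing $I$ in positions $1,\dots,k$, $I'$ in positions $k+1,\dots,r$, and analogously for the columns. Take $\mathbf u\in M_{\ell\times\ell'}(\F)$ with $\mathbf u_{ii}=t_i$ for $k<i\le r$ and $\mathbf u_{ij}=0$ otherwise; applying the classical identity
$$
\det\!\bigl(M+\mathrm{diag}(s_1,\dots,s_r)\bigr)=\sum_{S\subseteq[r]}\det\!\bigl(M_{[r]\setminus S,\,[r]\setminus S}\bigr)\prod_{i\in S}s_i
$$
to $M=(BXA^{-1})_{[r],[r]}$ (with $s_i=0$ for $i\le k$ and $s_i=t_i$ otherwise) yields
$$
\Lr(BXA^{-1}+\mathbf u)=\sum_{S\subseteq\{k+1,\dots,r\}}\det\!\bigl(M_{[r]\setminus S,\,[r]\setminus S}\bigr)\prod_{i\in S}t_i,
$$
a multilinear polynomial in $t_{k+1},\dots,t_r$ whose top coefficient (at $S=\{k+1,\dots,r\}$) equals $\det(M_{[k],[k]})=\pm\det(X_{I,J})$.

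This top coefficient can be isolated by M\"obius inversion on the Boolean cube $\{0,1\}^{r-k}$: it is an explicit $\F$-linear combination of the $2^{r-k}$ specialisations of $\Lr(BXA^{-1}+\mathbf u)$ obtained by letting $(t_{k+1},\dots,t_r)$ run over $\{0,1\}^{r-k}$, and the procedure is valid over every $\F$ since only $0,1\in\F$ are used. Combining the two cases shows that each minor $\det(X_{I,J})$ with $|I|=|J|\le r$ lies in the $\F$-span of the selected minimum-weight codewords, so $\Cr=\Ev(\mathcal F(\ell,m;r))$ is generated by them. I expect that the only real bookkeeping nuisance is tracking the signs introduced by the permutation matrices, but since we merely claim membership in an $\F$-span, such signs do not affect the conclusion.
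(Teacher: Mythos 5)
Your proof is correct and follows essentially the same route as the paper: both arguments rest on Lemma~\ref{lem:sigmauAB} and Theorem~\ref{thm:mindistClmr} to exhibit the minimum-weight codewords $\Ev\left(\Lr(BXA^{-1}+\mathbf{u})\right)$ with $B,A$ permutation matrices, and both recover a $k\times k$ minor from larger minors by perturbing entries with $0$'s and $1$'s and taking differences. The only difference is presentational: the paper descends one size at a time by induction, writing $\mathcal{M}=\mathcal{N}'-\mathcal{N}$ with $\mathcal{N},\mathcal{N}'$ minors of size $k+1$, whereas you unroll this into a single inclusion--exclusion over the Boolean cube $\{0,1\}^{r-k}$.
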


\begin{proof}
%If $r=0$, then ${C^\mathbb{A}(\ell,m;0)} = \left\langle (1,1,\dots , 1)\right\rangle$ and
%the desired result holds trivially. % valid when $r=0$. The case of $r=0$ being trivial
%So let us assume that $r\ge 1$.
The code $\Cr$ is generated by $\Ev(\mathcal{M})$ as $\mathcal{M}$ varies over the $i\times i$ minors of $X$ for $0\le i \le r$. We proceed by decreasing induction on $i$ ($0\le i \le r$) to show that $\Ev\left(\mathcal{M}\right)$ is in the $\Fq$-linear span of minimum-weight codewords of $\Cr$ for every $i\times i$ minor $\mathcal{M}$ of $X$.  To begin with, if $i=r$, then $\mathcal{M}$ is the $r^{\rm th}$ leading principal minor of $Y:=BXA$ for some permutation matrices $B \in \GL_{\ell}(\F)$ and $A \in \GL_{\ell'}(\F)$. Hence
Lemma \ref{lem:sigmauAB} shows that $\Ev\left(\mathcal{M}\right)$ differs from $\Ev\left(\Lr\right)$ by an automorphism of $\Cr$; consequently, by
Theorem \ref{thm:mindistClmr}, $\Ev\left(\mathcal{M}\right)$ is itself a minimum-weight codeword of $\Cr$.
%the $r\times r$ minor $\mathcal{M}$ itself is in $\fmin$. %a minimum-weight codeword.
Now, suppose $i < r$ and the result holds for the $(i+1)\times (i+1)$ minors of $X$. % is in $\fmin$.
Let $a_1, \dots, a_i$ and $b_1, \dots , b_i$ denote, respectively, the row and column indices of $X$ corresponding to the $i\times i$ minor $\mathcal{M}$. Since $i < r$, we can choose a row index $\alpha$ distinct from $a_1, \dots , a_i$ and a column index $\beta$ distinct from $b_1, \dots , b_i$. Consider $X'=X+\mathbf{u}$, where $\mathbf{u}$ is the $\ell\times \ell'$ matrix whose $(\alpha, \beta)^{\rm th}$ entry is $1$ and all other entries are $0$. Let $\mathcal{N}$ (resp: $\mathcal{N}'$) be the $(i+1)\times (i+1)$ minor of $X$ (resp: $X'$) corresponding to the row indices $a_1, \dots , a_i, \alpha$ and column indices $b_1, \dots , b_i, \beta$. Observe that $\mathcal{M}=\mathcal{N}' - \mathcal{N}$.
From the induction hypothesis together with Lemma~\ref{lem:sigmauAB}, it follows that both $\Ev\left(\mathcal{N}\right)$ and $\Ev\left(\mathcal{N}'\right)$ are in the $\Fq$-linear span of minimum-weight codewords of $\Cr$ and therefore, so is $\Ev\left(\mathcal{M}\right)$.
\end{proof}

\begin{remark}
\label{rem:Grassmann}
%As the name suggests,
Affine Grassmann codes are closely related to Grassmann codes, and this connection was explained in Section VII of \cite{BGH}.
We remark here that a result analogous to Proposition~\ref{AffGC} holds for Grassmann codes as well. To see this, it suffices to note that by a result of Nogin (see, e.g., \cite[Cor. 19]{GPP}), the minimum-weight codewords of the Grassmann code $C(\ell, m)$ correspond precisely to the decomposable elements in the exterior power $\wedge^{m-\ell}\Fqm$ and evidently, these decomposable elements span the corresponding function space
${\mathcal G}(\ell,m) = (\wedge^{\ell}\Fqm)^* \simeq \wedge^{m-\ell}\Fqm$.
\end{remark}

As indicated in the Introduction, an analogous result for the dual of $\Cr$ is not true, in general. However, the minimum-weight codewords of $\C^{\perp}$ do generate $\C^{\perp}$. In other words, a result analogous to Proposition \ref{AffGC} holds  for the duals of affine Grassmann codes (of
level $\ell$). This, in fact, seems much harder to prove and we will need a number of auxiliary results, which will be spread over the next three subsections. The first subsection contains lemmas of a general nature concerning generating sets and bases for certain spaces of polynomials. Next, we show that the evaluations of certain non-forbidden monomials with respect to $\Cr$ are generated by the minimum-weight codewords. Finally, the binomials in $\basis$ are dealt with in the last subsection, where we conclude with the main result of this section. Wherever possible, we will consider affine Grassmann codes of an arbitrary level so as to make it clear what works in general and what goes wrong when $r< \ell$ as opposed to $r=\ell$.

Before proceeding with generalities, and as a warm-up, let us consider the case of $r=0$. Here
 ${C^\mathbb{A}(\ell,m;0)}$ is the one-dimensional code of length $n:=q^{\delta}$
spanned by $(1,1,\dots , 1)$ and ${C^\mathbb{A}(\ell,m;0)}^{\perp}= \{(c_1, \dots , c_n)\in \F^n: c_1+\dots + c_n= 0\}$. Thus,
if $\{\mathbf{e}_1, \dots, \mathbf{e}_n\}$ denotes the standard basis of $\F^n$, then
$\mathbf{e}_1 - \mathbf{e}_2, \dots , \mathbf{e}_1 - \mathbf{e}_n$ are minimum-weight codewords and these clearly generate
${C^\mathbb{A}(\ell,m;0)}^{\perp}$.

\subsection{Generators and Bases}
\label{subsec1}
Let $T$ be an indeterminate over $\Fq$ and $d$ a nonnegative integer. Denote by $\Fq[T]$ the space of polynomials in $T$ with coefficients in $\Fq$,
and by $\fqd$ the subspace of polynomials in $\Fq[T]$ of degree $\le d$. %Evidently, $\fqd$ is a $\Fq$-vector space with $\dim \fqd = d+1$.
Also, denote by $M_d[T]$ the set of monic polynomials in $\Fq[T]$ of degree $d$ having $d$ distinct roots in $\Fq$.

%\begin{lemma}
%\begin{enumerate}
%	\item[{\rm (i)}] Let $d$ be a nonnegative integer $< q$, and $V_d$ be the subspace of the polynomial ring $\Fq[T]$ consisting of polynomials of degree $\le d$. Also let $M_d[T]$ be the set of monic polynomials in $\Fq[T]$ of degree $d$ having $d$ distinct roots in $\Fq$. Then $M_d[T]$ generates $V_d$.
%	\item[{\rm (ii)}] Let $s$ be a positive integer and $d_1, \dots , d_s$ be a nonnegative integers such that $d_i< q$ for $i=1, \dots ,s$.  $V_d$ be the subspace of the polynomial ring $\Fq[T]$ consisting of polynomials of degree $\le d$. Also let $M_d[T]$ be the set of monic polynomials in $\Fq[T]$ of degree $d$ having $d$ distinct roots in $\Fq$. Then $M_d[T]$ generates $V_d$.
%\end{enumerate}
%\end{lemma}

\begin{lemma}
\label{lem:mdt}
Assume that $d< q$. % and let $M_d[T]$ be the set of monic polynomials in $\Fq[T]$ of degree $d$ having $d$ distinct roots in $\Fq$. If $d< q$,
Then $M_d[T]$ spans $\fqd$.
\end{lemma}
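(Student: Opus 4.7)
The plan is straightforward: exhibit an explicit basis of $\fqd$ consisting of $d+1$ elements of $M_d[T]$. Since $\dim_{\Fq} \fqd = d+1$, this will immediately imply that $M_d[T]$ spans $\fqd$.

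First I would use the hypothesis $d<q$ to select $d+1$ distinct elements $a_0, a_1, \ldots, a_d \in \Fq$. For each $i \in \{0,1,\dots,d\}$, set
$$
p_i(T) := \prod_{\substack{0\le j\le d \\ j\ne i}} (T - a_j).
$$
By construction, $p_i$ is monic of degree $d$ and its $d$ roots $a_0,\dots,\widehat{a_i},\dots,a_d$ are pairwise distinct elements of $\Fq$, so $p_i \in M_d[T]$.

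Next I would verify linear independence of $p_0,\ldots,p_d$ via a standard Lagrange-type argument: evaluating at $a_i$ gives $p_i(a_i) = \prod_{j\ne i}(a_i - a_j) \ne 0$, while $p_i(a_j) = 0$ for every $j\ne i$. Hence from any relation $\sum_{i=0}^{d} c_i p_i = 0$ in $\Fq[T]$, substituting $T=a_j$ yields $c_j p_j(a_j) = 0$, and therefore $c_j = 0$. Thus the $d+1$ elements $p_0,\ldots,p_d$ of $M_d[T]$ are linearly independent in the $(d+1)$-dimensional space $\fqd$, so they form a basis, and in particular $M_d[T]$ spans $\fqd$.

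There is no real obstacle here; the only point to keep in mind is that the hypothesis $d<q$ is used precisely to guarantee the existence of $d+1$ distinct scalars in $\Fq$, without which one could not produce $d+1$ distinct polynomials of the required form.
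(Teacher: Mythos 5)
Your proof is correct and follows essentially the same route as the paper: both choose $d+1$ distinct scalars (possible since $d<q$), form the Lagrange-type products omitting one root at a time, verify these lie in $M_d[T]$ and are linearly independent by evaluation, and conclude by comparing with $\dim_{\F}\fqd = d+1$. No issues.
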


\begin{proof}
%Below we denote by  $\F[T]_{\le d}$ the space of polynomials of degree at most $d$.
Since $d<q$, we can choose distinct elements $a_1,\dots,a_{d+1}$ from $\F$. The $d+1$ polynomials
%$p_i(T):=\prod_{j=1 \substack j \ne i}^{d+1}(T-a_j)$
$$
p_i(T):=\mathop{\prod_{1\le j \le d+1}}_{j \ne i} (T-a_j) \quad \text{ for } i=1, \dots , d+1,
$$
are elements of $M_d[T]$. Moreover, they are linearly independent, since a relation $\sum_j \beta_j p_j(T)=0$ implies, after substituting $a_i$ for $T$, that $\beta_i=0$ for  $i=1, \dots , d+1$. Since $M_d[T] \subset \F[T]_{\le d}$ and $\dim_{\F} \F[T]_{\le d}=d+1$, the lemma follows.
\end{proof}

\begin{corollary}
\label{cor:Mqminus1T}
$M_{q-1}[T] =\{(T-{\alpha})^{q-1}-1 : {\alpha} \in \F\}$ and moreover, $M_{q-1}[T]$ is a $\Fq$-basis of $\Fq[T]_{\le q-1}$.
\end{corollary}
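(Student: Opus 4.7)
The plan is to prove the two assertions separately: first the explicit description of $M_{q-1}[T]$, and then deduce the basis property by a dimension count combined with Lemma~\ref{lem:mdt}.

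For the set equality, I would start from the classical factorization $T^{q-1}-1 = \prod_{\beta \in \F^*}(T-\beta)$, which holds because the $q-1$ nonzero elements of $\F$ are precisely the roots of $T^{q-1}-1$. Substituting $T \mapsto T-\alpha$ for an arbitrary $\alpha \in \F$ gives
\[
(T-\alpha)^{q-1}-1 = \prod_{\beta \in \F^*}(T-\alpha-\beta) = \prod_{\gamma \in \F \setminus \{\alpha\}}(T-\gamma),
\]
which is monic of degree $q-1$ with $q-1$ distinct roots in $\F$, hence lies in $M_{q-1}[T]$. Conversely, any $p \in M_{q-1}[T]$ is monic of degree $q-1$ whose root set is a subset of $\F$ of size exactly $q-1$, i.e., the complement of some unique $\alpha \in \F$; then $p = \prod_{\gamma \in \F \setminus \{\alpha\}}(T-\gamma)$, which equals $(T-\alpha)^{q-1}-1$ by the identity above.

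For the basis claim, the map $\alpha \mapsto (T-\alpha)^{q-1}-1$ from $\F$ to $M_{q-1}[T]$ is clearly injective (distinct $\alpha$ yield polynomials with distinct root sets), so $\#M_{q-1}[T] = q$. Since $d = q-1$ satisfies $d<q$, Lemma~\ref{lem:mdt} applies and shows that $M_{q-1}[T]$ spans $\F[T]_{\le q-1}$. But $\dim_{\F}\F[T]_{\le q-1} = q = \#M_{q-1}[T]$, so a spanning set of this cardinality must be a basis.

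There is no real obstacle here: the corollary is essentially a direct consequence of the standard factorization of $T^{q-1}-1$ over $\F$ together with the previous lemma. The only mild point to check is that the hypothesis of Lemma~\ref{lem:mdt} is still met at the boundary $d = q-1$, which it is, since the proof of that lemma only needs $d+1 \le q$ distinct elements of $\F$ to construct the spanning polynomials $p_i(T)$.
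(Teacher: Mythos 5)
Your proof is correct and follows essentially the same route as the paper: identify the roots of $(T-\alpha)^{q-1}-1$ as $\F\setminus\{\alpha\}$ to get the set equality, then combine the cardinality count $\#M_{q-1}[T]=q=\dim_{\F}\F[T]_{\le q-1}$ with the spanning statement of Lemma~\ref{lem:mdt} to conclude it is a basis. The paper's version is merely terser, omitting the explicit converse inclusion that you spell out.
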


\begin{proof}
%Evidently
For each ${\alpha}\in \Fq$, the polynomial $(T-{\alpha})^{q-1}-1$ is clearly monic of degree $q-1$ and its roots are precisely the elements ${\beta}\in \Fq$ with ${\beta}\ne {\alpha}$. It follows that $M_{q-1}[T] =\{(T-{\alpha})^{q-1}-1 : {\alpha} \in \F\}$. In particular, $\# M_{q-1}[T] = q = \dim \Fq[T]_{\le q-1}$. Hence by Lemma \ref{lem:mdt},
$M_{q-1}[T]$ is a $\Fq$-basis of $\Fq[T]_{\le q-1}$.
\end{proof}

\begin{remark}
\label{rem:MdT}
For $0\le d < q$, the set $M_d[T]$ is a basis of $\F[T]_{\le d}$ if and only if $d=0$ or $d=q-1$. The case $d=0$ is trivial whereas $d=q-1$ was noted above. For the converse, it suffices to observe that $\# M_d[T] = {q\choose d} \ge q > d+1$ when $1\le d < q-1$.
% and  ${q\choose d} \ge q > d+1$ if $1\le d < q-1$.
In general, for $0\le d < q$, upon letting $e=q-1-d$, one can write
\begin{equation}
\label{eq:MdT}
M_{d}[T] =\left\{\frac{(T-{\alpha})^{q-1}-1}{(T-a_1) \cdots \left(T-a_{e}\right)} : \alpha, a_1, \dots , a_{e} \text{ distinct elements of } \F
 %{\alpha} \in \F \text{ and } a_1, \dots , a_{e}\in \F\setminus \{\alpha\}
  \right\}.
\end{equation}
This representation is particularly useful for large values of $d$. It may be noted, however, that for %$p(T)\in
a given polynomial in $M_d[T]$, %there are $q-d$ choices of
the corresponding $\alpha \in \Fq$ and the $e$-element subset $\{a_1, \dots , a_e\}$
of $\F\setminus \{\alpha\}$ %tuple $\left(\alpha, a_1, \dots , a_{e}\right)$ of distinct elements of $\F$
%such that
%$$
%p(T)= \frac{(T-{\alpha})^{q-1}-1}{(T-a_1) \cdots \left(T-a_{e}\right)} .
%$$
is not unique.
\end{remark}

We now derive a multivariable analogue of Lemma \ref{lem:mdt}. To this end, let $s$ be a positive integer and $T_1, \dots , T_s$ independent indeterminates over $\Fq$, and let $d_1, \dots , d_s$ be nonnegative integers. Denote by $\Fq[T_1, \dots , T_s]$ the space of polynomials in
$T_1, \dots , T_s$ with coefficients in $\Fq$ and by $\fqds$ the subspace of polynomials $f\in \Fq[T_1, \dots , T_s]$ with
$\deg_{X_j} f \le d_j$ for $j=1, \dots , s$. Also, let %denote by $\mdts$
%\begin{eqnarray*}
%\mdts &:=& M_{d_1}[T_1] \times \cdots \times M_{d_s}[T_s]  \\
%&= &\left\{f_1(T_1)\cdots f_s(T_s) : f_i(T_i)\in M_{d_i}[T_i] \text{ for } i=1, \dots , s\right\}.
%\end{eqnarray*}
$$
\mdts =\left\{\prod_{i=1}^s f_i(T_i) %f_1(T_1)\cdots f_s(T_s)
\; : \; f_i(T_i)\in M_{d_i}[T_i] \text{ for } i=1, \dots , s\right\}.
$$

\begin{lemma}
\label{lem:mdts}
%Assume that $d< q$ and let $M_d[T]$ be the set of monic polynomials in $\Fq[T]$ of degree $d$ having $d$ distinct roots in $\Fq$.
Assume that $d_i< q$ for $i=1, \dots , s$. Then $\mdts$ spans %generates
$\fqds$.
\end{lemma}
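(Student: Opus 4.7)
The plan is to deduce the multivariable statement from the univariate Lemma \ref{lem:mdt} by exploiting the product structure of the space $\fqds$. The key observation is that $\fqds$ has an $\Fq$-basis consisting of the monomials $T_1^{e_1}\cdots T_s^{e_s}$ with $0\le e_i\le d_i$ for $i=1,\dots,s$, so it suffices to show that each such monomial lies in the $\Fq$-linear span of $\mdts$.

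First I would fix exponents $e_1,\dots,e_s$ with $0\le e_i\le d_i$. By Lemma \ref{lem:mdt} applied to each variable $T_i$ separately (valid since $d_i<q$), we can write
\[
T_i^{e_i} = \sum_{k} c_{i,k}\, f_{i,k}(T_i)
\]
for suitable scalars $c_{i,k}\in\Fq$ and polynomials $f_{i,k}(T_i)\in M_{d_i}[T_i]$. Multiplying these expressions together and distributing yields
\[
T_1^{e_1}\cdots T_s^{e_s} \;=\; \sum_{k_1,\dots,k_s} \left(\prod_{i=1}^{s} c_{i,k_i}\right)\,\prod_{i=1}^{s} f_{i,k_i}(T_i),
\]
and each product $\prod_{i=1}^{s} f_{i,k_i}(T_i)$ is, by definition, an element of $\mdts$. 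Thus every basis monomial of $\fqds$ belongs to the span of $\mdts$, which establishes the claim.

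There is no real obstacle here beyond bookkeeping; the result is a routine tensor-product argument made possible by the fact that Lemma \ref{lem:mdt} is a pure spanning statement (not a linear independence one) applied one variable at a time. Alternatively, one could argue by induction on $s$: the case $s=1$ is Lemma \ref{lem:mdt}, and the inductive step observes that $\fqds \cong \Fq[T_1,\dots,T_{s-1}]_{\le(d_1,\dots,d_{s-1})}\otimes_{\Fq} \Fq[T_s]_{\le d_s}$, with a spanning set obtained as the product of the spanning sets provided by the inductive hypothesis and Lemma \ref{lem:mdt}. Both approaches yield the same conclusion in a few lines.
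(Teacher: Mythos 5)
Your argument is correct and is essentially the paper's own proof: reduce to the monomial generators $T_1^{e_1}\cdots T_s^{e_s}$ and apply Lemma \ref{lem:mdt} to each factor separately, then expand the product. The paper states this in one sentence; you have merely written out the bookkeeping explicitly.
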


\begin{proof}
%Elements of $\fqds$ are $\Fq$-linear combinations of
$\fqds$ is generated by monomials of the form $T_1^{e_1}\cdots T_s^{e_s}$ with $0\le e_i \le d_i$
for $i=1, \dots , s$, and by Lemma \ref{lem:mdt}, each factor $T_i^{e_i}$ of such a monomial is a $\Fq$-linear combination of
elements of $M_{d_i}[T_i]$.
\end{proof}

As in Remark \ref{rem:MdT}, it may be noted that $\mdts$ is a basis of $\fqds$ if and only if  $d_1=\cdots = d_s =0$ or
$d_1=\cdots = d_s =q-1$.  In particular, $M_{(q-1, \dots , q-1)}[T_1, \dots , T_s]$
is a basis of the space $\Fq[T_1, \dots , T_s]_{\le (q-1, \dots , q-1)}$ of all reduced polynomials in $T_1, \dots, T_s$ with coefficients in $\Fq$.
The following lemma gives several other bases for this space. As in Section \ref{sec:prelim}, for any $f\in \Fq[T_1, \dots , T_s]$, we denote
by $\overline{f}$ the reduced polynomial in $\Fq[T_1, \dots , T_s]$ corresponding to $f$. Note that if $L\in \Fq[T_1, \dots , T_s]$ is a homogeneous linear polynomial, i.e., if $L = a_1T_1 + \cdots + a_sT_s$ for some $a_1, \dots , a_s\in \F$, then $\overline{L} = L$. In particular, $L$ can be identified with the functional $\Fq^s \to \Fq$ that maps $w=(w_1, \dots , w_s)\in \F^s$ to $L(w)= a_1w_1 + \cdots + a_sw_s$.
%, and this functional will also be denoted by $L$.
%
%Finally, in this subsection, we

\begin{lemma}\label{lem:basis2}
Let $\{L_i : 1 \le i \le s \} \subseteq \F[T_1,\dots,T_s]$ be a set of $s$
linearly independent homogeneous linear polynomials.
%For any polynomial $p$, denote by $\overline{p}$ its reduced form.
Then the set
$$
{\mathfrak L} := \left\{\overline{L_1^{e_1}\cdots L_s^{e_s}} : 0 \le e_i \le q-1 \text{ for }  i=1, \dots , s \right\}
$$
is a basis of $\F[T_1,\dots,T_s]_{\le (q-1,\dots,q-1)}$.
\end{lemma}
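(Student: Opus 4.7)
The plan is to exhibit an $\Fq$-linear automorphism of the space $V := \Fq[T_1,\dots,T_s]_{\le(q-1,\dots,q-1)}$ of reduced polynomials that carries the standard monomial basis $\{T_1^{e_1}\cdots T_s^{e_s} : 0\le e_i\le q-1\}$ onto the proposed set $\mathfrak{L}$. Once such an automorphism is in hand, the statement follows at once: the image of a basis under an isomorphism is a basis.

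The construction of this automorphism relies on the identification of $V$ with the space of all functions $\Fq^s \to \Fq$, which follows from the injectivity of the evaluation map on reduced polynomials (exactly as noted for $\rp$ in Section \ref{sec:prelim}). Since $L_1,\dots,L_s$ are linearly independent homogeneous linear polynomials, the rule $\phi(w) := (L_1(w),\dots,L_s(w))$ defines a linear automorphism $\phi : \Fq^s \to \Fq^s$. I would then define
\[
\Psi : V \longrightarrow V, \qquad \Psi(f) := \overline{f(L_1,\dots,L_s)}.
\]
This map is $\Fq$-linear because both substitution and reduction are $\Fq$-linear. Viewed as a function on $\Fq^s$, $\Psi(f)$ is simply $f\circ\phi$; since $\phi$ is a bijection of $\Fq^s$, the operation $f\mapsto f\circ\phi$ is a bijection on $\Fq^{\Fq^s}$, and hence $\Psi$ is injective. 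Because $V$ is finite dimensional, $\Psi$ is an automorphism of $V$.

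To conclude, observe that $\Psi$ sends the reduced monomial $T_1^{e_1}\cdots T_s^{e_s}$ to $\overline{L_1^{e_1}\cdots L_s^{e_s}}$, so it maps the standard basis of $V$ bijectively onto $\mathfrak{L}$ (in particular, all elements of $\mathfrak{L}$ are distinct). Therefore $\mathfrak{L}$ is an $\Fq$-basis of $V$.

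The only subtle point, and what I expect to be the main obstacle, is the careful bookkeeping between a polynomial identity in $\Fq[T_1,\dots,T_s]$ and an identity of functions on $\Fq^s$; in particular, one must invoke the fact that the evaluation map restricted to $V$ is an isomorphism onto $\Fq^{\Fq^s}$ so that the equality $\Psi(f)=\overline{f\circ\phi}$ at the level of functions upgrades to equality of reduced polynomials. Everything else is routine linear algebra.
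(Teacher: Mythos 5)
Your proof is correct and rests on exactly the same two facts the paper uses: linear independence of the $L_i$ makes $w\mapsto\left(L_1(w),\dots,L_s(w)\right)$ a bijection of $\F^s$, and reduced polynomials are determined by their values on $\F^s$. Packaging the substitution $T_i\mapsto L_i$ as an automorphism $\Psi$ of $\F[T_1,\dots,T_s]_{\le(q-1,\dots,q-1)}$ is just a tidier phrasing of the paper's argument, which checks directly that a linear relation among the $\overline{L_1^{e_1}\cdots L_s^{e_s}}$ forces all coefficients to vanish and then concludes by counting.
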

\begin{proof}
Since the linear polynomials $L_1,\dots,L_s$ are linearly independent, the map given by $w\mapsto \left(L_1(w), \dots , L_s(w)\right)$ is a
$\Fq$-linear isomorphism of $\Fq^s$ onto $\F^s$. Hence given any $v \in \F^s$, there exists  $w_v \in \F^s$ such that
$\left(L_1(w_v),\dots,L_s(w_v)\right)=v$. Now let a relation
$\sum_{e_1,\dots,e_s}\alpha_{e_1,\dots,e_s}\overline{L_1^{e_1}\cdots L_s^{e_s}}=0$ be given, where % We need to show that all coefficients
$\alpha_{e_1,\dots,e_s} \in F$ for all $e_1, \dots e_s$ (with $0 \le e_i \le q-1 \text{ for }  i=1, \dots , s$).
Evaluating the given relation
at $w_v$, we find %that
$\sum_{e_1,\dots,e_s}\alpha_{e_1,\dots,e_s} \alpha_{e_1,\dots,e_s}v_1^{e_1}\cdots v_s^{e_s} =0$.
%Note that this relation is valid for any $v \in \F^s$, implying that
Consequently, the polynomial $\sum_{e_1,\dots,e_s}\alpha_{e_1,\dots,e_s} \alpha_{e_1,\dots,e_s}T_1^{e_1}\cdots T_s^{e_s}$ vanishes at all
points of $\F^s$. Since $0 \le e_i \le q-1$ for $i=1, \dots , s$, this is only possible if $\alpha_{e_1,\dots,e_s}=0$ for all $e_1,\dots,e_s$.
Thus,
%the set $\left\{\overline{L_1^{e_1}\cdots L_s^{e_s}} : 0 \le e_i \le q-1 \text{ for }  i=1, \dots , s \right\}$
${\mathfrak L}$ is linearly independent. Finally, since $\# \, {\mathfrak L} = q^s = \dim_{\F} \F[T_1,\dots,T_s]_{\le (q-1,\dots,q-1)}$,
the lemma is proved.
%note that the cardinality of this set is $q^s$, which %is $\dim_{\F}
%coincides with the vector space dimension of $\F[T_1,\dots,T_s]_{\le (q-1,\dots,q-1)}$. %, the lemma is proved.
\end{proof}

\subsection{Non-forbidden monomials}
\label{subsec2}
Let us fix a positive integer $r\le \ell$. From Theorem~\ref{dualdescription}, we know that
$\Cr^{\perp} = \Ev\left(\fstar\right)$, where $\fstar$ is the %subspace of $\rp$
space spanned by %$\basis$, which, in turn, comprises of
the non-forbidden monomials and the binomials, or more precisely, by $\basis$.
Let $\fstarmin$ denote the set of all $f\in \fstar$ such that $\Ev(f)$ is a minimum-weight codeword of $\Cr^{\perp}$, and let
%$\elmr$ (resp: $\estarlmr$)
$\left\langle \fstarmin \right\rangle$
denote the subspace of $\fstar$ %generated
spanned by $\fstarmin$.
%
%Recall from Section \ref{sec:dual} that for the code $\Cr^{\perp}$ is generated by the non-forbidden monomials and the binomials that constitute the set $\basis$.

We begin with a useful characterization of the non-forbidden monomials. To this end, let us first make a definition. We say that
a reduced monomial $\mu \in \RMon$ is  \emph{maximal non-forbidden} %monomial}
with respect to $\Cr$ if %either
%(i) $\mu = \full/t$ for some $t\in \Term\left(\mathcal{M}\right)$ where $\mathcal{M}\in \cup_{i=r+1}^{\ell} \Delta_i(\ell,m)$, i.e., $t$ is a term of a minor of size $\eg r+1$, i.e.,
%$r=1$ and
\begin{equation}
\label{eq:maxnfdlevelr}
{\rm (i)} \  \mu = \frac{\full}{t} \quad \text{ for some } t\in \Term\left(\mathcal{M}\right) \text{ and } \mathcal{M}\in \Delta_{r+1}(\ell,m), %\bigcup_{i=r+1}^{\ell} \Delta_i(\ell,m),
\end{equation}
or if there are $(i_1,j_1), (i_2,j_2) \in \rec$ %$i_1, i_2\in \{1, \dots , \ell\}$ and $j_1, j_2\in \{1, \dots , \ell'\}$
such that
\begin{equation}
\label{eq:maxnfd}
{\rm (ii)} \
\mu = \frac{\full}{X_{i_1j_1}X_{i_2j_2}} \quad
%\begin{array}{l} & \text{for some $i_1, i_2\in \{1, \dots , \ell\}$ and $j_1, j_2\in \{1, \dots , \ell'\}$} \\ &
\text{ with $i_1=i_2$ or $j_1=j_2$. } \qquad \qquad \qquad    %\end{array}
\end{equation}
%for some $i_1, i_2\in \{1, \dots , \ell\}$ and $j_1, j_2\in \{1, \dots , \ell'\}$
%or (ii) $r\ge 2$ and $\mu$ is as in \eqref{eq:maxnfd} % with the additional requirement that
%and moreover, $i_1=i_2$ or $j_1=j_2$.

It may be noted that in (ii) above, the possibility $(i_1,j_1) = (i_2,j_2)$ is not excluded except when $q=2$,
in which case it is automatically excluded since $\mu$ is a monomial to begin with. It may also be noted that when $r=\ell$, i.e., in the case of affine Grassmann codes, possibility (i) does not arise at all, whereas when $r=1$, we can combine (i) and (ii) to simply say that
$\mu$ is a reduced monomial of degree $\delta(q-1)-2$.
%satisfying (i) $i_1=i_2$ or $j_1=j_2$ and (ii) $(i_1,j_1) \ne (i_2,j_2)$ if $q=2$.
%Observe that a product of the form $X_{i_1j_1}X_{i_2j_2}$ satisfying (i) can never be the term of any minor of $X$, and hence
%a maximal non-forbidden monomial cannot be in $\FM$, i.e., it is indeed non-forbidden.
The terminology in the above definition is justified by the following. % lemma.

\begin{lemma}
\label{lem:nonforbdividesmaxnonforb}
%Every maximal non-forbidden monomial is indeed non-forbidden, i.e., it cannot be in %does not belong to
%$\FM$. Moreover,
%%Assume that $r\ge 2$. Then
%any non-forbidden monomial, i.e., any element of $\RMon \setminus \FM$, divides some maximal non-forbidden monomial.
A reduced monomial in $\RMon$ is non-forbidden %w.r.t.
with respect to $\Cr$ if and only if it divides some maximal non-forbidden monomial %w.r.t.
with respect to
$\Cr$.
\end{lemma}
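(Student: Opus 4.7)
The plan is to prove the biconditional in two steps.

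For the forward direction, I will first verify that each maximal non-forbidden monomial is itself non-forbidden. In case (\ref{eq:maxnfdlevelr}), the denominator $t$ is a term of a minor of order $r+1$ and hence has degree $r+1$, whereas every term of a minor in $\Delta(\ell,m;r)$ has degree at most $r$; so $t$ cannot equal any such term. In case (\ref{eq:maxnfd}), the denominator $X_{i_1j_1}X_{i_2j_2}$ is either not squarefree (when $(i_1,j_1)=(i_2,j_2)$, permitted only if $q>2$) or has two variables sharing a row or column; neither pattern occurs in any term of a minor of $X$, since such terms are squarefree products of variables from pairwise distinct rows and pairwise distinct columns. Second, I will show that any divisor of a non-forbidden reduced monomial is itself non-forbidden. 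By contraposition: if $\mu$ is forbidden, write $\mu=\full/t$ with $t\in \Term(\M)$ for some $\M\in \Delta_i(\ell,m)$, $i\le r$. Then for any reduced $\nu$ with $\mu\mid\nu$, a variable-by-variable exponent comparison forces $\nu=\full/t'$ for some divisor $t'$ of $t$; since every divisor of a term of an $i\times i$ minor is a term of some minor of order $\le i\le r$, the monomial $\nu$ is forbidden.

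For the reverse direction, given a non-forbidden $\mu$, set $\tau:=\full/\mu$. The hypothesis is equivalent to the statement that $\tau$ is not a term of any minor in $\Delta(\ell,m;r)$, i.e., at least one of the following holds: $\tau$ is not squarefree, $\tau$ has two distinct variables in a common row or column, or $\deg\tau\ge r+1$. It suffices to exhibit a divisor $t'$ of $\tau$ with $\full/t'$ maximal non-forbidden, for the equivalence $t'\mid\tau \Leftrightarrow \mu\mid\full/t'$ (same exponent comparison as in the forward direction) will then yield $\mu\mid\full/t'$. I split into three cases. First, if $q>2$ and some $X_{ij}$ occurs in $\tau$ with exponent $\ge 2$, take $t'=X_{ij}^2$; this gives case (\ref{eq:maxnfd}) with $(i_1,j_1)=(i_2,j_2)=(i,j)$. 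Second, if the first case does not apply then $\tau$ is squarefree (automatic when $q=2$); if $\tau$ contains two distinct variables $X_{i_1j_1},X_{i_2j_2}$ sharing a row or column, take $t'=X_{i_1j_1}X_{i_2j_2}$, which yields case (\ref{eq:maxnfd}). Third, in the remaining situation $\tau$ is squarefree with its variables in pairwise distinct rows and pairwise distinct columns; non-forbiddenness then forces $\deg\tau\ge r+1$, and picking any $r+1$ variables of $\tau$ produces a squarefree product $t'$ whose variables sit in distinct rows and columns, hence a term of some $(r+1)\times(r+1)$ minor, yielding case (\ref{eq:maxnfdlevelr}).

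No step is deep; the principal task is organizing the three-way case analysis cleanly. The only subtle point is the exclusion $(i_1,j_1)\ne(i_2,j_2)$ for $q=2$ in definition (\ref{eq:maxnfd}), but this causes no trouble, since when $q=2$ the monomial $\tau$ is automatically squarefree, the first case is vacuous, and only genuinely distinct pairs arise in the second case.
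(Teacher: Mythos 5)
Your proof is correct and follows essentially the same route as the paper's: both reduce the question to the combinatorial characterization of when $\full/\mu$ is a term of a minor of order at most $r$ (squarefree, degree at most $r$, at most one variable per row and per column of $X$) and then match each way this can fail to one of the two types of maximal non-forbidden monomial. The only difference is cosmetic: the paper phrases the failure conditions via the row- and column-degrees of $\mu$ itself, while you phrase them via the complementary monomial $\tau=\full/\mu$.
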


\begin{proof}
%First, observe that for every $i_1, i_2\in \{1, \dots , \ell\}$ and $j_1, j_2\in \{1, \dots , \ell'\}$
%such that $i_1=i_2$ or $j_1=j_2$, the product $X_{i_1j_1}X_{i_2j_2}$ can never be the term of any minor of $X$. Hence
%a maximal non-forbidden monomial %cannot be in $\FM$, i.e., it
%is indeed non-forbidden. Moreover, if
%The case $r=1$ is easy since ${C^\mathbb{A}(\ell,m;1)}^{\perp} = \RM\left((q-1)\delta -2 , \delta\right)$ and the forbidden monomials with respect to ${C^\mathbb{A}(\ell,m;1)}$ %in this case
%are precisely $\full$ and $\full/X_{ij}$ for $1\le i\le \ell$ and $1\le j \le \ell'$.
%
%Now assume that $2\le r \le \ell$.
For a monomial $\mu\in \Mon$ and for $1\le i\le \ell$ and $1\le j \le \ell'$, let us denote by $\mu_i$ the $i^{\rm th}$ row-degree of $\mu$ (i.e., the number of variables, counting multiplicities, from the $i^{\rm th}$ row of $X$ appearing in $\mu$) and by
$\mu^j$ the $j^{\rm th}$ column-degree of $\mu$.
%Observe that for any $\mu\in \Mon$, %it is readily seen that
%$$
%\mu \in \Term({\mathcal M}) \text{ for some } {\mathcal M}\in \Delta(\ell,m;r)
%%\text{ is a term of a minor of } X
%\Longleftrightarrow \deg(\mu)\le r, \ \mu_i \le 1 \text{ and } \mu^j \le 1 \text{ for all $i$ and $j$}.
%%$i=1, \dots , \ell$ and } j=1, \dots , \ell'.
%$$
Observe that a monomial $\mu\in \Mon$ is a term of a minor of size $\le r$, i.e., %it is readily seen that
$\mu \in \Term({\mathcal M})$ for some ${\mathcal M}\in \Delta(\ell,m;r)$, if and only if
$\deg(\mu)\le r$, $\mu_i \le 1$ for all $i=1, \dots , \ell$ and  $\mu^j \le 1$ for all $j=1, \dots , \ell'$.
%Consequently,
Hence if $\mu \in \RMon$ is a reduced monomial, then
$$
\mu \text{ is forbidden}  \Leftrightarrow \deg(\mu)\ge \delta(q-1)-r, \ \mu_i \ge \delta (q-1) - 1 \ \forall \, i, \text{ and } \mu^j \ge \delta(q-1)-1 \ \forall j. %\text{ for all $i$ and $j$}.
%=1, \dots , \ell$ and } j=1, \dots , \ell'.
$$
In other words, a reduced monomial $\mu \in \RMon$ is non-forbidden with respect to $\Cr$ if and only if (a) $\deg(\mu) \le \delta(q-1)-(r+1)$, or
(b) $\mu_i \le \ell' (q-1)-2$ for some $i \in \{1, \dots , \ell\}$, or (c) $\mu^j \le \ell (q-1)-2$ for some $j\in \{1, \dots , \ell'\}$.
%In case (b) or (c) holds, then it is readily seen that $\mu$ divides a monomial satisfying \eqref{eq:maxnfd}, whereas if neither (b) nor (c) holds and %if (a) holds, then it is easy to see that
%$\mu$ divides a monomial satisfying \eqref{eq:maxnfdlevelr}.
%To conclude, it suffices to observe that $\mu \in \RMon$ divides a monomial satisfying \eqref{eq:maxnfd} if and only if (b) or (c) holds,
%and that
%$\mu$ divides a monomial satisfying \eqref{eq:maxnfdlevelr} if and only if (a) holds but neither (b) nor (c) holds.
%It follows that $\mu \in \RMon\setminus \FM$ if and only if $\mu$ is relatively prime to $X_{ij_1}X_{ij_2}$ for some $i\in\{1, \dots , \ell\}$
%and $j_1, j_2\in \{1, \dots , \ell'\}$ or to $X_{i_1j}X_{i_2j}$ for some $i_1, i_2\in\{1, \dots , \ell\}$
%and $j\in \{1, \dots , \ell'\}$. This yields the desired result.
To conclude, it suffices to observe that for any $\mu \in \RMon$, we have the following.
If $\mu$ divides a monomial satisfying \eqref{eq:maxnfdlevelr}, then (a) holds. On the other hand, if  (a) holds but neither (b) nor (c) holds, then
$\mu$ divides a monomial satisfying \eqref{eq:maxnfdlevelr}. Finally, $\mu$  divides a monomial satisfying \eqref{eq:maxnfd} if and only if (b) or (c) holds.
\end{proof}

We will now proceed to show that non-forbidden monomials of type (ii), i.e., those that divide a maximal non-forbidden monomial given by \eqref{eq:maxnfd}, are generated by the minimum-weight codewords.
In what follows we will tacitly use the obvious fact that the (permutation) automorphisms of a code and its dual are identical and that
minimum-weight codewords are always preserved by an automorphism. Furthermore, we will make frequent use of the automorphisms of $\Cr$
given by Lemma~\ref{lem:sigmauAB}, % and Remark~\ref{rem:transposeaut},
i.e.,  the automorphisms induced by the transformation $X\mapsto BXA + {\mathbf{u}}$, where
$B\in \GL_{\ell}\left(\Fq\right)$, $A\in \GL_{\ell'}\left(\Fq\right)$ and  ${\mathbf{u}}\in M_{\ell\times\ell'}(\F)$.
%, or  by the transpose map $X\mapsto X^T$.
It is convenient to treat the binary and the non-binary cases separately.

\begin{lemma}
\label{lem:binnfd}
Assume that $q=2$ and $\ell'>1$. Suppose $\mu \in \RMon$ is as in \eqref{eq:maxnfd} and  $\nu \in \RMon$ divides $\mu$. Then
$\nu \in \left\langle \fstarmin \right\rangle$.
\end{lemma}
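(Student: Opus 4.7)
The plan is a downward induction on $\deg \nu$, with base case $\nu = \mu$. The hypothesis on $\mu$ together with Theorem~\ref{mindistCrdual} gives that $\Ev(\mu)$ is a minimum-weight codeword of $\Cr^{\perp}$, so $\mu \in \fstarmin \subseteq \langle \fstarmin \rangle$ and the base case is immediate.

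For the inductive step, suppose $\nu$ is a proper divisor of $\mu$ and that every divisor of $\mu$ of degree strictly greater than $\deg \nu$ already lies in $\langle \fstarmin \rangle$. Since $q=2$, all such divisors are squarefree; pick some $(k,l)\in\rec$ with $X_{kl} \mid \mu$ but $X_{kl} \nmid \nu$, and set $\tilde \nu := X_{kl}\nu$. Then $\tilde\nu$ still divides $\mu$ and has $\deg \tilde \nu = \deg \nu + 1$, so by the induction hypothesis $\tilde\nu \in \langle \fstarmin \rangle$.

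Next, apply the translation automorphism from Lemma~\ref{lem:sigmauAB} with $\mathbf{u} = E_{kl}\in M_{\ell\times\ell'}(\F)$ (entries all zero except a $1$ in position $(k,l)$), $A=I_{\ell'}$, and $B=I_{\ell}$. Under the identification of $\Cr^{\perp}$ with $\fstar$ via $\Ev$, the induced automorphism of $\fstar$ corresponds to the substitution $X_{kl} \mapsto X_{kl}+1$ with every other variable fixed. Since $X_{kl}$ does not appear in $\nu$, this substitution sends $\tilde\nu = X_{kl}\nu$ to $(X_{kl}+1)\nu = \tilde \nu + \nu$, which is already reduced. Because any permutation automorphism of $\Cr^{\perp}$ sends minimum-weight codewords to minimum-weight codewords, it preserves $\langle \fstarmin \rangle$; hence $\tilde \nu + \nu \in \langle \fstarmin \rangle$. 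Combining this with $\tilde\nu \in \langle \fstarmin \rangle$ and working in characteristic~$2$, one gets $\nu = \tilde \nu + (\tilde\nu + \nu) \in \langle \fstarmin \rangle$, closing the induction.

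The binary hypothesis $q=2$ is used crucially at two points: it ensures via Theorem~\ref{mindistCrdual} that $\mu$ itself is a minimum-weight codeword (so that the induction has a base case to start from), and it makes translation by a single variable produce the clean two-term identity $(X_{kl}+1)\nu = \tilde\nu + \nu$ without any cross-terms, so that $\nu$ is recovered as a sum of two elements already known to lie in $\langle \fstarmin \rangle$. Consequently no serious obstacle is anticipated here; the main difficulty will lie in the subsequent lemma handling the case $q>2$, where a single translation produces a polynomial in $X_{kl}$ of degree up to $q-1$ and the multivariable results of Subsection~\ref{subsec1} will be required to disentangle $\nu$.
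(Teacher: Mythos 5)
Your proof is correct and follows essentially the same route as the paper: a one-variable-at-a-time induction starting from $\mu\in\fstarmin$ (via Theorem~\ref{mindistCrdual}), using the translation automorphism $X\mapsto X+E_{kl}$ to turn $X_{kl}\nu$ into $X_{kl}\nu+\nu$ and then cancelling in characteristic $2$. The only cosmetic difference is that you induct downward on $\deg\nu$ while the paper inducts upward on $\deg(\mu/\nu)$, which is the same induction.
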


\begin{proof}
%Since $q=2$ and $\mu \in \RMon$ is as in \eqref{eq:maxnfd}, there are distinct $(i_1,j_1), (i_2,j_2)\in \rec$ such that $\mu = \full/X_{i_1j_1}X_{i_2j_2}$.
%First, observe that $\mu$ is transformed to \eqref{eq:h} by an automorphism induced by $X\mapsto BXA$, where $B\in \GL_{\ell}\left(\Fq\right)$ and $A\in \GL_{\ell'}\left(\Fq\right)$ are suitable permutation matrices and, if necessary, another automorphism induced by %the transpose map
%$X\mapsto X^T$. Hence from Theorem \ref{mindistCrdual}, we see that $\mu \in \fstarmin$.
First, observe that $\mu \in \fstarmin$, thanks to Theorem \ref{mindistCrdual}.
We use (finite) induction on $d:= \deg\left(\mu/\nu\right) = \deg(\mu) - \deg(\nu)$ to show that $\nu \in \left\langle \fstarmin \right\rangle$.   If $d=0$, then $\nu = \mu$ and there is nothing to prove. Assume that $d>0$ and the result holds for smaller values of $d$. Since $d>0$, there is a variable $X_{ij}$ that divides $\mu/\nu$. Write $\nu' = \nu X_{ij}$. By induction hypothesis $\nu' \in \left\langle \fstarmin \right\rangle$. Hence the polynomial, say $f'$, obtained from $\nu'$ when $X$ is changed to $X+\mathbf{u}$ is in  $\left\langle \fstarmin \right\rangle$ for every
${\mathbf{u}}\in M_{\ell\times\ell'}(\F)$. Now take ${\mathbf{u}}$ to be the $\ell\times\ell'$ matrix whose $(i,j)^{\rm th}$ entry is $1$ and all other entries are zero. %is $(i,j)=(r,s)$ and $0$ otherwise. In this case,
Then $f'=\nu'+\nu$, and so $\nu %= f'+\nu'
\in  \left\langle \fstarmin \right\rangle$. %, as desired.
\end{proof}

\begin{lemma}
\label{lem:nonbinfd}
Assume that $q>2$. If $f\in \rp$ is such that $\deg_{X_{ij}} f \le q-3$ for some $(i,j)\in \rec$, then $f \in  \left\langle \fstarmin \right\rangle$.
\end{lemma}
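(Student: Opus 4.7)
The strategy is to use the permutation automorphisms of $\Cr^\perp$ provided by Lemma \ref{lem:sigmauAB} to enlarge the explicit family of minimum-weight codewords obtained in Theorem \ref{mindistCrdual}, and then to invoke the basis results of Subsection \ref{subsec1} to conclude that the $\F$-span $\left\langle \fstarmin\right\rangle$ covers every reduced polynomial with $\deg_{X_{ij}} \le q-3$ in some coordinate $(i,j)\in \rec$.

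I would first apply $\sigma_{\mathbf{u},A,B}$, with $A, B$ permutation matrices and $\mathbf{u} \in M_{\ell\times\ell'}(\F)$ arbitrary, to the minimum-weight codewords $\Ev(g_{a_1,a_2})$ from Theorem \ref{mindistCrdual}. A direct expansion based on the identity $(T+u)^{q-1} - 1 = \prod_{c\in \F,\ c \ne -u}(T-c)$, together with the bijection $a\mapsto a-u$ of $\F$, shows that the resulting codewords have the form $\Ev(H)$ with
$$
H := \frac{(X_{ij}-c_{ij})^{q-1}-1}{(X_{ij}-b_1)(X_{ij}-b_2)}\,\prod_{(k,l)\ne(i,j)}\left((X_{kl}-c_{kl})^{q-1}-1\right),
$$
where $(i,j) \in \rec$ is arbitrary, $(c_{kl}) \in M_{\ell \times \ell'}(\F)$ is arbitrary, and $b_1, b_2$ are arbitrary distinct elements of $\F \setminus \{c_{ij}\}$. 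Each such $H$ is reduced and lies in $\fstarmin$.

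Next, by Remark \ref{rem:MdT} (with $d = q-3$ and $e = 2$), the first factor of $H$ traces out the entire set $M_{q-3}[X_{ij}]$ as $(c_{ij}, b_1, b_2)$ ranges over triples of distinct elements of $\F$. For each $(k,l) \ne (i,j)$, the factor $(X_{kl}-c_{kl})^{q-1}-1$ lies in $M_{q-1}[X_{kl}]$ as $c_{kl}$ varies over $\F$. Hence the family of all such $H$ is precisely $\mdts$ in the sense of Lemma \ref{lem:mdts}, with $d_{ij}=q-3$ and $d_{kl}=q-1$ for $(k,l)\ne (i,j)$. By that lemma, its $\F$-span is the entire space of reduced polynomials with $\deg_{X_{ij}} \le q-3$. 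Since the hypothesis places $f$ in exactly this space, we conclude $f \in \left\langle \fstarmin\right\rangle$.

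The main obstacle is the bookkeeping in the first step: one must verify that the combined translation-and-permutation transform of $g_{a_1,a_2}$ really collapses to the compact form $H$ above, and that the parameters $(c_{ij}, b_1, b_2)$ together with the remaining $(c_{kl})_{(k,l)\ne(i,j)}$ range freely (subject only to $c_{ij}, b_1, b_2$ being distinct) as $(a_1, a_2, \mathbf{u})$ and the permutation matrices $A, B$ vary. Once this explicit parametrization is established, the remainder is a direct appeal to Lemma \ref{lem:mdts}.
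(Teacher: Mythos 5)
Your proposal is correct and follows essentially the same route as the paper: reduce to a distinguished position via permutation automorphisms, use translations $X\mapsto X+\mathbf{u}$ of the minimum-weight codewords $\Ev(g_{a_1,a_2})$ to produce the products in $\mdts$ with $d_{ij}=q-3$ and $d_{kl}=q-1$ otherwise (via Corollary~\ref{cor:Mqminus1T} and Remark~\ref{rem:MdT}), and conclude by the spanning statement of Lemma~\ref{lem:mdts}. The only difference is presentational — you transport the codewords to the target space, whereas the paper decomposes the target space and recognizes each spanning element as a transported codeword — so no further comment is needed.
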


\begin{proof}
Applying an % suitable automorphism,
automorphism induced by $X\mapsto BXA$, where $B\in \GL_{\ell}\left(\Fq\right)$ and $A\in \GL_{\ell'}\left(\Fq\right)$ are suitable
permutation matrices, we may assume, without loss of generality, that $(i,j)= (\ell, \ell')$.
In view of Corollary~\ref{cor:Mqminus1T}, Remark~\ref{rem:MdT} and Lemma~\ref{lem:mdts}, we see that the space $\Fq[X]_{\le (q-1, q-1, \dots , q-1, q-3)}$ of all reduced polynomials of degree $\le q-3$ in the last variable $X_{\ell\ell'}$ is spanned by the
products of the form
\begin{equation*}
%\label{eq:gnew}
\frac{1}{\left(X_{\ell\ell'}-a_1\right) \left(X_{\ell\ell'}-a_2\right)} \prod_{(i,j)\in \rec} \left(X_{ij} - \alpha_{ij}\right) ^{q-1}-1,
\end{equation*}
where $\alpha_{ij}$ vary over $\Fq$ and $a_1, a_2$ vary over $\Fq\setminus\left\{ \alpha_{\ell\ell'}\right\}$.
But these products are precisely of the form
\eqref{eq:g} up to an automorphism induced by $X\mapsto X +\mathbf{u}$, where ${\mathbf{u}}\in M_{\ell\times\ell'}(\F)$. Hence from
Theorem \ref{mindistCrdual}, we obtain the desired result.
\end{proof}

%It may be noted that
The above lemma %\ref{lem:nonbinfd}
shows that if $q>2$ and if a reduced monomial $\nu$ divides a maximal non-forbidden monomial of the form
$\full/X_{ij}^2$ for some $(i,j)\in \rec$, then $\nu$ is generated by minimum-weight codewords of $\Cr^{\perp}$. This covers, in particular, the case when
$\ell'=1$ (so that $r = \ell = 1$).
It only remains to consider the case of reduced monomials dividing maximal non-forbidden monomials of the form ${\full}/{X_{i_1j_1}X_{i_2j_2}}$,
where $(i_1,j_1), (i_2,j_2)$ are distinct elements of $\rec$ and where $q>2$.

\begin{lemma}
\label{lem:nonbinfd2}
Assume that $q>2$ and $\ell'>1$.
Suppose $\mu$ is a maximal non-forbidden monomial of the form ${\full}/{X_{i_1j_1}X_{i_2j_2}}$,
where $(i_1,j_1), (i_2,j_2)$ are distinct elements of $\rec$ such that $i_1=i_2$ or $j_1=j_2$. Then
every divisor of $\mu$ is in $\left\langle \fstarmin \right\rangle$.
\end{lemma}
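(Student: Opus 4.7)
The plan is to reduce every divisor of $\mu$ to two cases handled separately. By permuting rows and columns via automorphisms in $\mathfrak{H}(\ell, m)$, I may assume $(i_1, j_1) = (\ell, \ell'-1)$ and $(i_2, j_2) = (\ell, \ell')$ (the case $j_1 = j_2$, which requires $\ell \ge 2$, is handled symmetrically using a row-shear below in place of the column-shear). Set $Y := X_{\ell, \ell'-1}$, $Z := X_{\ell, \ell'}$, and $M := \prod_{(i,j) \notin \{(\ell, \ell'-1), (\ell, \ell')\}} X_{ij}^{q-1}$, so $\mu = M \cdot Y^{q-2} Z^{q-2}$. A divisor $\nu$ of $\mu$ with $\deg_{X_{ij}} \nu \le q-3$ for some $(i,j)$ lies in $\langle \fstarmin \rangle$ by Lemma~\ref{lem:nonbinfd}; the only remaining divisors are the ``saturated'' monomials $\nu_S := \full / \prod_{(i,j) \in S} X_{ij}$ indexed by subsets $S \supseteq \{(\ell, \ell'-1), (\ell, \ell')\}$. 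I handle these by induction on $|S|$.

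For the base case $|S| = 2$, which amounts to $\mu \in \langle \fstarmin \rangle$, I fix $s \in \F_q^*$ and apply the shear automorphism $\psi$ induced by $X \mapsto XA^{-1}$ with $A := I + sE_{\ell'-1, \ell'}$ to the minimum-weight codeword
\[ c := \prod_{(i,j) \ne (\ell, \ell'-1)}(X_{ij}^{q-1} - 1) \cdot R_{a_1, a_2}(Y), \]
which lies in $\fstarmin$ by Theorem~\ref{mindistCrdual} after a column-swap automorphism. The key algebraic identity
\[ \prod_{j=1}^{\ell'}((XA^{-1})_{ij}^{q-1} - 1) = \prod_{j=1}^{\ell'}(X_{ij}^{q-1} - 1) \]
holds as reduced polynomials, since both sides equal $(-1)^{\ell'}$ times the indicator of ``row $i$ of $X$ is zero''. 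This cancels the $s$-dependence in every complete row (i.e., every row except $\ell$), leaving
\[ \psi(c) = \prod_{(i,j) \ne (\ell, \ell'-1),(\ell, \ell')}(X_{ij}^{q-1} - 1) \cdot \bigl((Z - sY)^{q-1} - 1\bigr) \cdot R_{a_1, a_2}(Y). \]
Working modulo $\langle \fstarmin \rangle$ (equivalently, discarding monomials with some $\deg_{X_{ij}} \le q-3$, which are in $\langle \fstarmin \rangle$ by Lemma~\ref{lem:nonbinfd}), the first product contributes exactly $M$, and expanding $(Z - sY)^{q-1} = \sum_k s^{q-1-k} Y^{q-1-k} Z^k$ shows that the only monomial in $((Z-sY)^{q-1} - 1) \cdot R_{a_1, a_2}(Y)$ with $\deg_Y, \deg_Z \ge q-2$ comes from $k = q-2$ paired with the leading $Y^{q-3}$ of $R_{a_1, a_2}$, contributing $s \cdot Y^{q-2} Z^{q-2}$. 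Hence $\psi(c) \equiv s\mu \pmod{\langle \fstarmin \rangle}$, and since $\psi(c) \in \fstarmin$ and $s \ne 0$, we conclude $\mu \in \langle \fstarmin \rangle$.

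For the inductive step $|S| \ge 3$, pick $(i_0, j_0) \in S \setminus \{(\ell, \ell'-1), (\ell, \ell')\}$, let $S' := S \setminus \{(i_0, j_0)\}$ and $N := \nu_{S'} / X_{i_0 j_0}^{q-1}$ (well-defined since $(i_0, j_0) \notin S'$), and assume $\nu_{S'} \in \langle \fstarmin \rangle$. A direct check yields $\nu_S = X_{i_0 j_0}^{q-2} N$. Using $\binom{q-1}{k} \equiv (-1)^k \pmod{p}$ in $\F_q$, the translation $X \mapsto X + c E_{(i_0, j_0)}$ (for $c \in \F_q^*$) applied to $\nu_{S'}$ gives
\[ \nu_{S'}(X + c E_{(i_0, j_0)}) = N \cdot (X_{i_0 j_0} + c)^{q-1} = (-1)^{q-1}\nu_{S'} + (-1)^{q-2} c\, \nu_S + \sum_{k=0}^{q-3} (-1)^k c^{q-1-k} N\, X_{i_0 j_0}^k. \]
Each summand in the last sum has $\deg_{X_{i_0 j_0}} \le q-3$ and hence lies in $\langle \fstarmin \rangle$ by Lemma~\ref{lem:nonbinfd}; the left-hand side lies in $\langle \fstarmin \rangle$ because translations preserve this span, and the first term on the right lies there by the induction hypothesis. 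Therefore $c\, \nu_S \in \langle \fstarmin \rangle$, so $\nu_S \in \langle \fstarmin \rangle$, closing the induction.

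The hard part will be the base-case calculation: one must identify the right min-weight codeword and the right shear so that, after invoking the row-product invariance identity and discarding all monomials with some $\deg_{X_{ij}} \le q-3$, what remains is exactly a nonzero scalar multiple of $\mu$. The careful bookkeeping of which $(k, \alpha)$ pairs in $\sum_{k,\alpha} s^{q-1-k}\, r_\alpha\, Y^{q-1-k+\alpha} Z^k$ survive the constraint $\deg_Y, \deg_Z \ge q-2$ is the delicate algebraic step.
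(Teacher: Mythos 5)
Your proof is correct, and while the core idea is the same as the paper's (use an elementary column operation mixing $X_{\ell,\ell'-1}$ and $X_{\ell\ell'}$, then strip away everything of low degree via Lemma~\ref{lem:nonbinfd} to isolate the $X_{\ell,\ell'-1}^{q-2}X_{\ell\ell'}^{q-2}$ term), the two arguments are organized quite differently. The paper seeds the shear with the \emph{monomial} $\nu(X)\,X_{\ell,\ell'-1}^{q-3}X_{\ell\ell'}^{q-1}$ --- already known to lie in $\left\langle \fstarmin \right\rangle$ by Lemma~\ref{lem:nonbinfd} --- where $\nu$ is an \emph{arbitrary} reduced monomial in the remaining variables; after the shear it isolates the $t=1$ term of the binomial expansion and then invokes Lemma~\ref{lem:basis2} to trade the resulting $\overline{\nu(Y)}$'s back for arbitrary monomials, so that all divisors of $\mu$ are obtained in a single pass. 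You instead seed the shear with the explicit minimum-weight codeword $g_{a_1,a_2}$ (column-swapped), exploit the invariance of complete row products $\prod_j(X_{ij}^{q-1}-1)$ under the shear, and extract only $s\mu$ itself; the remaining ``saturated'' divisors $\nu_S$ then require your separate downward induction via translations $X\mapsto X+cE_{(i_0,j_0)}$, which is essentially the $q>2$ analogue of the argument the paper uses in Lemma~\ref{lem:binnfd} for $q=2$. Your route avoids Lemma~\ref{lem:basis2} entirely and makes the role of the actual minimum-weight codewords more transparent, at the cost of the row-indicator identity and the extra induction; the bookkeeping in your base case (only the pair $(k,\alpha)=(q-2,q-3)$ survives the constraint $\deg_Y,\deg_Z\ge q-2$) checks out, including at $q=3$ where $R_{a_1,a_2}$ is constant.
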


\begin{proof}
First, suppose $i_1=i_2$.
Applying an automorphism induced by $X\mapsto BXA$, where $B\in \GL_{\ell}\left(\Fq\right)$ and $A\in \GL_{\ell'}\left(\Fq\right)$ are suitable permutation matrices, % and, if necessary, another automorphism induced by %the transpose map $X\mapsto X^T$,
we can and will assume that $\mu = \full/X_{\ell \ell'-1} X_{\ell \ell'}$. Let $\rec':= \rec \setminus \left\{ (\ell , \ell'-1), (\ell , \ell')\right\}$ and let
$$
\nu(X) = \prod_{(i,j)\in \rec'} X_{ij}^{e_{ij}} \qquad (0\le e_{ij} \le q-1)
$$
be any reduced monomial in the $\ell\ell' - 2$ variables $\left\{X_{ij} : (i,j)\in \rec'\right\}$.
By Lemma~\ref{lem:nonbinfd}, %we know that
$\nu(X)  X_{\ell \ell'-1}^{q-3} X_{\ell \ell'}^{q-1} \in \left\langle \fstarmin \right\rangle$. Consider the
$\ell \times \ell'$ matrix $Y= \left(Y_{ij}\right)$ obtained from $X$ by adding the $(\ell'-1)^{\rm th}$ column to the last column (so that
for $1\le i \le \ell$, $Y_{ij} = X_{ij}$ if $1\le j<\ell'$ and $Y_{i\ell'} = X_{i\ell'-1}+X_{i\ell'}$). Clearly $Y$ is obtained from $X$ upon multiplication by an elementary matrix in $\GL_{\ell'}(\Fq)$ on the right, and hence $X\mapsto Y$ induces an automorphism of $\Cr$. Consequently,
the corresponding reduced polynomial is generated by the minimum-weight codewords, i.e.,
$$
\overline{\nu(Y)  X_{\ell \ell'-1}^{q-3} \left(X_{\ell \ell'-1} + X_{\ell \ell'}\right)^{q-1}} \in \left\langle \fstarmin \right\rangle \quad
\text{where} \quad \nu(Y) = \prod_{(i,j)\in \rec'} Y_{ij}^{e_{ij}}.
$$
Now since $\nu(Y)$ and $X_{\ell \ell'-1}^{q-3} \left(X_{\ell \ell'-1} + X_{\ell \ell'}\right)^{q-1}$ are polynomials in disjoint sets of variables,
in view of Remark~\ref{multred} and the binomial theorem, the polynomial
$$
\sum_{t=0}^{q-1} \overline{\nu(Y)} {\binom{q-1}{t}} \overline{X_{\ell \ell'-1}^{q-3+t} X_{\ell \ell'}^{q-1-t}}
$$
is in $\left\langle \fstarmin \right\rangle$. Moreover, by Lemma~\ref{lem:nonbinfd}, each term in the above expansion, except possibly the term corresponding to $t=1$, is in $\left\langle \fstarmin \right\rangle$. It follows therefore that the the term corresponding to $t=1$ is also in $\left\langle \fstarmin \right\rangle$. In other words, $\overline{\nu(Y)}X_{\ell \ell'-1}^{q-2} X_{\ell \ell'}^{q-2} \in \left\langle \fstarmin \right\rangle$. Finally, since the $Y_{ij}$, $(i,j)\in \rec'$, are clearly linearly independent, it follows from Lemma~\ref{lem:basis2} that polynomials of the form $\overline{\nu(Y)}$ form a basis of the space of reduced polynomials in $\left\{X_{ij} : (i,j)\in \rec'\right\}$. Hence we conclude that
any divisor of $\mu$  is in $\left\langle \fstarmin \right\rangle$. The case when $j_1=j_2$ is proved similarly.
\end{proof}

\begin{corollary}
\label{cor:nonforbifull}
Every non-forbidden monomial with respect to $\C$ is in the $\Fq$-linear span of minimum-weight codewords of $\C$.
\end{corollary}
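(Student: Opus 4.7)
The plan is to combine Lemma~\ref{lem:nonforbdividesmaxnonforb} with the three preceding lemmas of this subsection through a short case analysis. Since $\C = \Cl$ corresponds to level $r = \ell$, there are no $(\ell+1)\times(\ell+1)$ minors of $X$, and so condition~(i) in the definition of a maximal non-forbidden monomial is vacuous. Hence every maximal non-forbidden monomial with respect to $\C$ is of the form $\mu = \full/(X_{i_1j_1}X_{i_2j_2})$ with $(i_1,j_1),(i_2,j_2)\in\rec$ satisfying $i_1=i_2$ or $j_1=j_2$, and by Lemma~\ref{lem:nonforbdividesmaxnonforb} any non-forbidden $\nu\in\RMon$ divides some such $\mu$.

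First I would dispose of the binary case. When $q = 2$, the reducedness of $\mu$ automatically forces $(i_1,j_1) \ne (i_2,j_2)$; if $\ell' > 1$, then Lemma~\ref{lem:binnfd} applies directly, while if $\ell' = 1$ then necessarily $\ell = 1$, and a direct check shows $\RMon \setminus \FM$ is empty, so the statement is vacuous. For $q > 2$ I would split into two subcases. If $(i_1,j_1) = (i_2,j_2) = (i,j)$, then $\mu = \full/X_{ij}^2$, and every divisor $\nu$ of $\mu$ satisfies $\deg_{X_{ij}} \nu \le q-3$, so Lemma~\ref{lem:nonbinfd} finishes this subcase; this in particular covers $\ell' = 1$, where $(i,j) = (1,1)$ is forced. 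If instead $(i_1,j_1) \ne (i_2,j_2)$, then $|\rec| \ge 2$ forces $\ell' > 1$, and Lemma~\ref{lem:nonbinfd2} applies verbatim.

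The substantive work is already contained in Lemmas~\ref{lem:binnfd}, \ref{lem:nonbinfd} and~\ref{lem:nonbinfd2}, so no serious obstacle remains: the corollary is essentially a bookkeeping step that packages these three cases. The only real subtlety is observing that case~(i) in the definition of a maximal non-forbidden monomial cannot arise when $r = \ell$, which is precisely what enables the three lemmas to jointly exhaust every possibility.
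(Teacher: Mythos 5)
Your proposal is correct and follows essentially the same route as the paper: observe that type~(i) maximal non-forbidden monomials cannot occur when $r=\ell$, so every non-forbidden monomial divides a type~(ii) maximal one by Lemma~\ref{lem:nonforbdividesmaxnonforb}, and then Lemmas~\ref{lem:binnfd}, \ref{lem:nonbinfd}, and~\ref{lem:nonbinfd2} cover all cases. Your extra attention to the degenerate subcases ($q=2$, $\ell'=1$ being vacuous, and $(i_1,j_1)=(i_2,j_2)$ when $q>2$) is a welcome bit of care that the paper leaves implicit.
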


\begin{proof}
We have noted already that when $r=\ell$, the only maximal non-forbidden monomials with respect to $\Cr$ are those of type (ii), i.e., those given by
\eqref{eq:maxnfd}. Hence the desired result follows from Lemmas \ref{lem:nonforbdividesmaxnonforb}, \ref{lem:binnfd}, \ref{lem:nonbinfd}, and \ref{lem:nonbinfd2}.
\end{proof}

\subsection{Binomials}
\label{subsec3}
Fix a positive integer $r\le \ell$. Recall that the basis $\basis$ of $\Cr^{\perp}$ consists of the non-forbidden monomials with respect to $\Cr$ and the binomials
$$
\BtM := \frac{\full}{\teM} - \frac{\full}{\tsM}, %  \quad \text{ for ${\mathcal M}\in \Delta_r(\ell,m)$ and } \sigma\in \Sr.
$$
where ${\mathcal M}$ varies over the minors of $X$ with $\deg\left({\mathcal M}\right)\le r$ and %over $\Delta_r(\ell,m)$ and
$\sigma$ varies over the nonidentity permutations of $\left\{1, 2, \dots , \deg\left({\mathcal M}\right)\right\}$.
The two monomials appearing in such a binomial are forbidden and therefore do not correspond to a codeword of $\Cr^{\perp}$. However,
the binomials themselves correspond to codewords of $\Cr^{\perp}$, and we will show that they are generated by the minimum-weight codewords. We begin with an elementary algebraic observation, which will be useful in the sequel.

\begin{lemma}
\label{lem:xyzw}
Let $x,y,z,w$ be independent indeterminates over $\Fq$. Consider the polynomial $f = (zw)^{q-2}(x+z)^{q-1}(y+w)^{q-1}$. Also let
\begin{equation}
\label{eq:Bnot}
\full_0:= \left(xyzw\right)^{q-1} \quad \text{and} \quad \Bnot:= \frac{\full_0}{xw} + \frac{\full_0}{yz}.
\end{equation}
Then the reduced polynomial corresponding to $f$ is given by
$$
\bar{f} = \frac{\full_0}{zw} + \frac{\full_0}{xy} - \Bnot + h,
$$
where $h\in \Fq[x,y,z,w]$
is a reduced polynomial %in $x,y,z,w$
such that every $\mu \in \Term(h)$ satisfies $\deg_x \mu \le q-3$ or $\deg_y \mu \le q-3$.
\end{lemma}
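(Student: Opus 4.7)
The plan is a direct expansion. First I would apply the binomial theorem twice to get
\[
f=(zw)^{q-2}\sum_{i,j=0}^{q-1}\binom{q-1}{i}\binom{q-1}{j}x^iy^jz^{q-1-i}w^{q-1-j}=\sum_{i,j=0}^{q-1}\binom{q-1}{i}\binom{q-1}{j}x^iy^jz^{2q-3-i}w^{2q-3-j}.
\]
Next I would apply the reduction map term by term. Since $0\le i,j\le q-1$, the exponents of $x$ and $y$ are already in range, so the reduced monomial arising from the $(i,j)$-term has $\deg_x$ equal to $i$ and $\deg_y$ equal to $j$. The $z$-exponent $2q-3-i$ equals $q-2$ when $i=q-1$, equals $q-1$ when $i=q-2$, and reduces to $q-2-i$ when $i\le q-3$ (via $2q-3-i=(q-1)+(q-2-i)$ with $1\le q-2-i\le q-2$); the $w$-exponent behaves identically as a function of $j$.

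I would then split the sum into a \emph{big} part indexed by $(i,j)\in\{q-2,q-1\}^2$ and a \emph{small} part where $i\le q-3$ or $j\le q-3$. Every reduced monomial arising from the small part has $\deg_x\le q-3$ or $\deg_y\le q-3$, so I take $h$ to be the reduction of the small part and the condition on $\Term(h)$ in the statement is automatic. The only substantive step is to evaluate the big part. Using $\binom{q-1}{q-1}=1$ and $\binom{q-1}{q-2}=q-1\equiv -1\pmod p$ (where $q=p^t$), the four contributions from $(q-1,q-1)$, $(q-1,q-2)$, $(q-2,q-1)$, and $(q-2,q-2)$ evaluate respectively to $\full_0/zw$, $-\full_0/yz$, $-\full_0/xw$, and $\full_0/xy$. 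Summing and recognising $\Bnot=\full_0/xw+\full_0/yz$ yields $\full_0/zw+\full_0/xy-\Bnot$, which is precisely the claimed formula for $\bar f-h$.

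There is no real obstacle here: the argument is pure bookkeeping once the exponent reductions are unwound, the only subtle point being the mod-$p$ identity $q-1\equiv -1$ that produces the minus signs in front of $\Bnot$. As a sanity check, when $q=2$ the small part is empty, forcing $h=0$, and the identity reads $\overline{(x+z)(y+w)}=xy+zw+xw+yz$, which is correct because $-\Bnot=\Bnot$ in characteristic $2$.
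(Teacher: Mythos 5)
Your proof is correct and follows essentially the same route as the paper: binomial expansion of $(x+z)^{q-1}(y+w)^{q-1}$, isolation of the four terms with $x$- and $y$-exponents in $\{q-2,q-1\}$ (the paper's $(s,t)\in\{0,1\}^2$ in its indexing), and absorbing the rest into $h$. Your explicit unwinding of the exponent reductions and the sign computation via $\binom{q-1}{q-2}\equiv -1$ just makes precise what the paper leaves as ``we readily obtain the desired result.''
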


\begin{proof}
Expanding $(x+z)^{q-1}$ and $(y+w)^{q-1}$ by the binomial theorem, we see that
$$
\bar{f} = \sum_{s=0}^{q-1} \sum_{t=0}^{q-1} {\binom{q-1}{s}} {\binom{q-1}{t}} \overline{x^{q-1-s} z^{q-2+s}y^{q-1-t} w^{q-2+t}}.
$$
%Now let $g$ denote the sum of the terms corresponding to $(s,t)=(0,0), (1,0), (0,1)$ and $(1,1)$ in the above expansion, and  Upon letting $h$ denote the sum of
Considering separately the terms in the double summation above %expansion %except those
corresponding to $(s,t)=(0,0), (1,0), (0,1)$ and $(1,1)$, and upon letting $h$ denote the sum of the remaining terms, we readily
obtain the desired result.
\end{proof}

\begin{lemma}
\label{lem:singletranspo}
Assume that $r>1$. Let ${\rho}$ be an integer such that $1<{\rho}\le r$ and let ${\mathcal M}\in \Delta_{\rho}(\ell,m)$. If $\sigma, \tau\in S_{\rho}$ are such that $\sigma^{-1}\tau$ is a transposition, then %the difference
$$
\frac{\full}{\tsM} - \frac{\full}{\ttM} \in \left\langle \fstarmin \right\rangle. %  \quad \text{ for ${\mathcal M}\in \Delta_r(\ell,m)$ and } \sigma\in \Sr.
$$
\end{lemma}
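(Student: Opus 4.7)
My plan is to decompose $\full/\tsM - \full/\ttM$ via Lemma~\ref{lem:xyzw}, and place each resulting summand in $\langle\fstarmin\rangle$ using the earlier lemmas of this section.

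Write $\sigma^{-1}\tau = (a,b)$, so that $\sgn(\tau) = -\sgn(\sigma)$. Let $x := X_{i_a, j_{\sigma(a)}}$, $y := X_{i_b, j_{\sigma(b)}}$, $z := X_{i_a, j_{\sigma(b)}}$, $w := X_{i_b, j_{\sigma(a)}}$ be the four variables in which the unsigned terms of $\tsM$ and $\ttM$ differ, and let $V$ be the product of the remaining $\rho-2$ common variables, so that $\tsM = \sgn(\sigma)\, V x y$ and $\ttM = -\sgn(\sigma)\, V z w$. Set $H := \prod_{(i,j) \in \rec \setminus \supp(V x y z w)} X_{ij}^{q-1}$, so that $\full = V^{q-1}(xyzw)^{q-1} H$. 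Then
\[
\frac{\full}{\tsM} - \frac{\full}{\ttM} = \sgn(\sigma)\, V^{q-2} H \left(\frac{\full_0}{xy} + \frac{\full_0}{zw}\right),
\]
with $\full_0 = (xyzw)^{q-1}$. Invoking Lemma~\ref{lem:xyzw} to rewrite the bracketed expression as $\bar{f} + \Bnot - h$, it remains to place each of $V^{q-2}H\bar{f}$, $V^{q-2}H\Bnot$, and $V^{q-2}Hh$ in $\langle\fstarmin\rangle$.

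The $\Bnot$-piece equals $\full/(Vxw) + \full/(Vyz)$. Since $x, w$ lie in the same column $j_{\sigma(a)}$ of $X$ and $y, z$ in the same column $j_{\sigma(b)}$, neither $Vxw$ nor $Vyz$ can be a term of a minor of $X$, so both summands are non-forbidden monomials dividing the maximal non-forbidden monomials $\full/(xw)$ and $\full/(yz)$ of type~\eqref{eq:maxnfd}; they therefore lie in $\langle\fstarmin\rangle$ by Lemma~\ref{lem:binnfd} when $q=2$ and by Lemma~\ref{lem:nonbinfd2} when $q>2$. For the $h$-piece, one checks that $h = 0$ when $q = 2$; when $q > 2$, every monomial of $h$ satisfies $\deg_x \le q-3$ or $\deg_y \le q-3$, and since $V^{q-2}H$ involves neither $x$ nor $y$, the analogous bounds hold after multiplication. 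Splitting $h = h_x + h_y$ according to which of these two alternatives holds, and applying Lemma~\ref{lem:nonbinfd} to each piece, disposes of $V^{q-2}Hh$.

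The main obstacle is to place $V^{q-2}H\bar{f}$ in $\langle\fstarmin\rangle$. My plan is to produce it, modulo terms already in $\langle\fstarmin\rangle$, as the reduced image of the polynomial $\nu_0 := V^{q-2}H(zw)^{q-2}$—which lies in $\langle\fstarmin\rangle$ by Lemma~\ref{lem:nonbinfd} (as $\deg_x\nu_0 = 0 \le q-3$)—under a combination of the elementary column-operation automorphisms of $\Cr$ provided by Lemma~\ref{lem:sigmauAB}; namely, adding column $j_{\sigma(b)}$ to column $j_{\sigma(a)}$ (which converts $w$ to $w+y$) and symmetrically adding column $j_{\sigma(a)}$ to column $j_{\sigma(b)}$ (which converts $z$ to $z+x$), in order to produce the factor $(x+z)^{q-1}(y+w)^{q-1}$ required for $\bar{f}$. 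These column operations unavoidably act also on the factors of $H$ supported in the two affected columns at rows different from $i_a$ and $i_b$, producing correction polynomials. Each such correction either has some variable $X_{ij}$ of degree $\le q-3$ after reduction or is a divisor of a maximal non-forbidden monomial of type~\eqref{eq:maxnfd}, so it is absorbed into $\langle\fstarmin\rangle$ by Lemmas~\ref{lem:nonbinfd}, \ref{lem:nonbinfd2}, or~\ref{lem:binnfd}. The characteristic-2 case runs in parallel, simplified by the facts that $V^{q-2} = 1$ and $h = 0$. The careful bookkeeping of these correction terms arising from the column-operations' action on $H$ is the most delicate step.
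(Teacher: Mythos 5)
Your overall architecture matches the paper's: isolate the four variables in which the two signed terms differ, invoke Lemma~\ref{lem:xyzw}, and dispatch the easy pieces via Lemmas~\ref{lem:binnfd}, \ref{lem:nonbinfd} and \ref{lem:nonbinfd2}; your treatment of the $\Bnot$-piece and the $h$-piece is correct. The gap is in the central step. You instantiate Lemma~\ref{lem:xyzw} with the identity assignment of letters, so that your target $\full_0/(xy)+\full_0/(zw)$ sits among the leading terms of $\bar f$ and $\bar f$ itself becomes the object you must produce. But $\bar f$ cannot be obtained, even modulo $\langle \fstarmin\rangle$, as the reduced image of $\nu_0=V^{q-2}H(zw)^{q-2}$ under column operations: a linear substitution preserves total degree, and reduction only lowers degrees by multiples of $q-1$, so every monomial of the reduced image of $\nu_0$ under any such substitution has degree $\deg(V^{q-2}H)+2q-4-k(q-1)$ with $k\ge 0$, whereas the forbidden monomial $V^{q-2}H\,x^{q-1}y^{q-1}z^{q-2}w^{q-2}=\pm\full/\tsM$, which must be produced since it lies in no other summand, has degree $\deg(V^{q-2}H)+4q-6$. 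The starting monomial of the correct degree, $(xy)^{q-1}(zw)^{q-2}V^{q-2}H$, is not available either: in your labelling it equals $\full/(Vzw)=\pm\full/\ttM$, which is forbidden.

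The source of the trouble is that in your assignment $\{x,y\}$ and $\{z,w\}$ are the two term-pairs of the minor, so in Lemma~\ref{lem:xyzw} they must play the role of the lemma's $\Bnot$, not of its leading terms. Instantiate the lemma with its $(x,y,z,w)$ taken to be your $(z,y,x,w)$: then the lemma's $\Bnot$ becomes $\full_0/(zw)+\full_0/(xy)$, exactly your target; its leading terms become $\full_0/(xw)+\full_0/(zy)$, the same-column pairs, hence easy; and its $f$ becomes $(xw)^{q-2}(z+x)^{q-1}(y+w)^{q-1}$, which is the image of the monomial $(xw)^{q-2}(zy)^{q-1}$ --- a divisor of $\full/(xw)$ and therefore in $\langle\fstarmin\rangle$ --- under the \emph{single} column operation adding column $j_{\sigma(a)}$ to column $j_{\sigma(b)}$ (this fixes $x$ and $w$ and sends $z\mapsto z+x$, $y\mapsto y+w$, so there is no need to compose two non-commuting operations). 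One then uses Lemma~\ref{lem:basis2} to restore the outside variables to an arbitrary reduced monomial, in particular to $V^{q-2}H$, after the substitution; this replaces your delicate bookkeeping of correction terms. With that relabelling your argument becomes exactly the paper's proof, which first normalizes to $\mathcal{M}=\mathcal{L}_\rho$, $\sigma=\epsilon$, $\tau=(st)$ and uses a row operation instead.
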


\begin{proof}
Applying an automorphism induced by
$X\mapsto BXA$, where $B\in \GL_{\ell}\left(\Fq\right)$ and $A\in \GL_{\ell'}\left(\Fq\right)$ are suitable permutation matrices, we may
assume that ${\mathcal M} = {\mathcal L}_{\rho}$, i.e., ${\mathcal M}$ is the ${\rho}^{\rm th}$ leading principal minor of $X$. Next, by a similar trick, we may assume that
$\sigma$ is the identity permutation and $\tau$ is a transposition in $S_{\rho}$, say $(st)$. Let us denote the indeterminates $X_{ss}, X_{st}, X_{ts}$ and $X_{tt}$ by
$x,y,z$ and $w$, respectively. Also let $\rec':= \rec \setminus\left\{(s,s), (s,t), (t,s), (t,t)\right\}$. With these simplifications and notations, %Note that then
\begin{equation}
\label{eq:Bnot1}
\frac{\full}{\tsM} - \frac{\full}{\ttM} = \Bnot \prod_{(i,j)\in \rec'} X_{ij}^{q-1},
\end{equation}
where $\Bnot$ is as in \eqref{eq:Bnot}. On the other hand, by Lemmas  \ref{lem:binnfd} and \ref{lem:nonbinfd2}, any divisor of $\full/zw$ is in
$\left\langle \fstarmin \right\rangle$. In particular,  $\left(zw\right)^{q-2} \left(xy\right)^{q-1} \nu(X) \in \left\langle \fstarmin \right\rangle$
for any reduced monomial $\nu (X)$ in the $\ell\ell' - 4$ variables $\left\{X_{ij} : (i,j)\in \rec'\right\}$. Now if $Z=\left(Z_{ij}\right)$ is the
$\ell \times \ell'$ matrix obtained from $X$ by adding the $t^{\rm th}$ row to the $s^{\rm th}$ row, then $X\mapsto Z$ induces an automorphism of
$\Cr$ and therefore in view of Remark~\ref{multred},
$$
\left(\overline{\left(zw\right)^{q-2} \left(x+z\right)^{q-1}\left(y+w\right)^{q-1}} \right)\overline{\nu(Z)}   \in \left\langle \fstarmin \right\rangle .
$$
Moreover, by Lemma~\ref{lem:basis2}, the reductions $\overline{\nu(Z)}$ form a basis of the space of reduced polynomials in $\left\{X_{ij} : (i,j)\in \rec'\right\}$. Consequently, $\overline{\nu(Z)}$ can be replaced by an arbitrary reduced monomial in
$\left\{X_{ij} : (i,j)\in \rec'\right\}$, and, in particular, by $\prod_{(i,j)\in \rec'} X_{ij}^{q-1}$. This, in view of Lemma~\ref{lem:xyzw}, shows that
\begin{equation}
\label{eq:Bnot2}
\frac{\full}{zw} + \frac{\full}{xy}  - \left(\Bnot \prod_{(i,j)\in \rec'} X_{ij}^{q-1}\right) + H \in \left\langle \fstarmin \right\rangle ,
\end{equation}
where $H\in \Fq[X]$ is a reduced polynomial each of whose term has $X_{ss}$-degree or $X_{st}$-degree $\le q-3$. By Lemmas  \ref{lem:binnfd} and \ref{lem:nonbinfd2}, the first two terms in the above sum  are in $\left\langle \fstarmin \right\rangle$ and moreover, so is $H$, %the fourth term,
thanks to Lemma~\ref{lem:nonbinfd}. It is now clear that \eqref{eq:Bnot1} and \eqref{eq:Bnot2} yield the desired result.
\end{proof}

An application of a classical result concerning permutations now yields the main result of this subsection.

\begin{lemma}
\label{lem:productoftranspo}
Let ${\rho}$ be a nonnegative integer $\le r$ and let ${\mathcal M}\in \Delta_{\rho}(\ell,m)$ and $\sigma \in S_{\rho}$. Then the binomial $\BtM$ is
in $\left\langle \fstarmin \right\rangle$.
%$$
%\frac{\full}{\tsM} - \frac{\full}{\ttM} \in \left\langle \fstarmin \right\rangle. %  \quad \text{ for ${\mathcal M}\in \Delta_r(\ell,m)$ and } \sigma\in \Sr.
%$$
\end{lemma}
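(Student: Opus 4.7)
The plan is to reduce the general case to the transposition case, which was already handled by Lemma~\ref{lem:singletranspo}, via a telescoping argument based on the classical fact that every permutation factors into transpositions.

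First I would dispose of the trivial cases. If $\rho = 0$ or $\rho = 1$, then $S_\rho = \{\epsilon\}$, so necessarily $\sigma = \epsilon$ and $\BtM = 0$, which trivially lies in $\left\langle \fstarmin \right\rangle$. Thus I may assume $\rho \ge 2$.

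Next I would invoke the classical theorem that $S_\rho$ is generated by transpositions to write $\sigma = \tau_1 \tau_2 \cdots \tau_k$ with each $\tau_i \in S_\rho$ a transposition. Define the chain of partial products
\[
\sigma_0 := \epsilon, \qquad \sigma_i := \tau_1 \tau_2 \cdots \tau_i \quad \text{for } 1 \le i \le k,
\]
so that $\sigma_k = \sigma$. For each $i$, one has $\sigma_{i-1}^{-1} \sigma_i = \tau_i$, which is a transposition. Hence Lemma~\ref{lem:singletranspo}, applied with $\sigma_{i-1}$ and $\sigma_i$ in place of the two permutations in that lemma's hypothesis, yields
\[
\frac{\full}{\mathsf{t}_{\sigma_{i-1}}({\mathcal M})} - \frac{\full}{\mathsf{t}_{\sigma_i}({\mathcal M})} \in \left\langle \fstarmin \right\rangle \quad \text{for each } 1 \le i \le k.
\]

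Finally, summing these relations telescopically would give
\[
\BtM = \frac{\full}{\teM} - \frac{\full}{\tsM} = \sum_{i=1}^{k} \left( \frac{\full}{\mathsf{t}_{\sigma_{i-1}}({\mathcal M})} - \frac{\full}{\mathsf{t}_{\sigma_i}({\mathcal M})} \right) \in \left\langle \fstarmin \right\rangle,
\]
which is the desired conclusion. There is no real obstacle here since all the genuine work is carried by Lemma~\ref{lem:singletranspo}; the only point to check is that Lemma~\ref{lem:singletranspo} is stated for \emph{any} $\sigma, \tau \in S_\rho$ whose ratio is a transposition (not merely when one of them is the identity), which is exactly what allows the intermediate steps of the telescoping chain to be covered.
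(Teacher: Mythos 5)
Your proposal is correct and follows essentially the same route as the paper's own proof: both handle $\rho\le 1$ trivially, factor $\sigma$ into transpositions, form the chain of partial products $\sigma_0=\epsilon,\sigma_1,\dots,\sigma_k=\sigma$, apply Lemma~\ref{lem:singletranspo} to each consecutive pair, and telescope. Nothing further is needed.
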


\begin{proof}
If $\rho \le 1$, then $\sigma$ is necessarily the identity permutation $\epsilon$ and $\BtM=0$. Now assume that $\rho > 1$ and $\sigma \ne \epsilon$. Then $\sigma$ is a nonempty product of transpositions in $S_{\rho}$, say $\sigma = \tau_1\tau_2\cdots \tau_t$. Define $\sigma_0: = \epsilon$
and  $\sigma_i = \tau_1\tau_2\cdots \tau_i$ for $1\le i \le t$. Then $\sigma_{i-1}^{-1}\sigma_i$ is a transposition for $1\le i \le t$, and hence using Lemma~\ref{lem:singletranspo}, we see that
$$
\BtM = \frac{\full}{\teM} - \frac{\full}{\tsM} = \sum_{i=1}^t \; \frac{\full}{{\mathsf{t}}_{\sigma_{i-1}}({\mathcal{M}})} - \frac{\full}{{\mathsf{t}}_{\sigma_{i}}({\mathcal{M}})} %\in \left\langle \fstarmin \right\rangle.
$$
is in $\left\langle \fstarmin \right\rangle$.
\end{proof}

%The above result together with the results of the previous two subsections yield
We are now ready to prove the main result of this section.

\begin{theorem}
\label{thm:DingKeyForAGC}
$\C^{\perp}$ is generated by its minimum-weight codewords.
\end{theorem}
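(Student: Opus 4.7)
The plan is to assemble this theorem as a direct corollary of the preceding structural results in the section, since nearly all the work has already been done. First I would invoke Theorem \ref{dualdescription} in the case $r=\ell$ to write $\C^{\perp} = \Ev\left(\fstar[\ell]\right)$, and use the proof of that theorem to recall that the set $\basis[\ell]$ is an $\F$-basis of $\fstar[\ell]$. By construction, every element of $\basis[\ell]$ is either a non-forbidden monomial with respect to $\C$ or a binomial of the form $\BtM$ for some minor $\mathcal{M}$ of $X$ and some non-identity permutation $\sigma$.

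Next I would handle the two types of basis elements separately by quoting the lemmas already proved. The non-forbidden monomials are covered by Corollary \ref{cor:nonforbifull}, which tells us that each such monomial lies in $\left\langle \fstarmin[\ell] \right\rangle$. The binomials are covered by Lemma \ref{lem:productoftranspo}, which shows that $\BtM \in \left\langle \fstarmin[\ell] \right\rangle$ for every minor $\mathcal{M}$ of $X$ (of size at most $\ell$) and every $\sigma \in S_{\rho}$, where $\rho = \deg(\mathcal{M})$.

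Combining these two observations, every element of the spanning set $\basis[\ell]$ of $\fstar[\ell]$ lies in $\left\langle \fstarmin[\ell] \right\rangle$, so in fact $\fstar[\ell] = \left\langle \fstarmin[\ell] \right\rangle$. Applying the evaluation map, $\C^{\perp} = \Ev\left(\fstar[\ell]\right)$ coincides with the $\Fq$-linear span of its minimum-weight codewords, as required.

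There is essentially no obstacle left at this stage; the theorem is purely a packaging statement. The genuine difficulties were resolved earlier: the case analysis characterizing the maximal non-forbidden monomials (Lemma \ref{lem:nonforbdividesmaxnonforb}) and the extraction of divisors of such monomials via row/column addition automorphisms (Lemmas \ref{lem:binnfd}, \ref{lem:nonbinfd}, \ref{lem:nonbinfd2}) handled the monomial part, while the trick in Lemma \ref{lem:xyzw} reducing $(zw)^{q-2}(x+z)^{q-1}(y+w)^{q-1}$ to a useful combination modulo $\left\langle \fstarmin \right\rangle$, together with the reduction of an arbitrary permutation to a product of transpositions, resolved the binomial part. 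The only subtlety worth flagging is that the argument uses $r=\ell$ crucially, since Corollary \ref{cor:nonforbifull} exploits the absence of type (i) maximal non-forbidden monomials; for $1<r<\ell$ this step fails, which is consistent with the remark in the introduction that the analogous statement for $\Cr^{\perp}$ does not hold in intermediate levels.
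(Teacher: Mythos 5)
Your proposal is correct and follows exactly the paper's route: the paper's proof of this theorem is the one-line observation that it follows from Corollary~\ref{cor:nonforbifull} (non-forbidden monomials) and Lemma~\ref{lem:productoftranspo} (binomials), which is precisely the assembly you describe. Your additional remarks on where the real work was done and why $r=\ell$ is essential are accurate and consistent with the paper's Remark on intermediate levels.
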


\begin{proof}
Follows from Corollary~\ref{cor:nonforbifull} and Lemma~\ref{lem:productoftranspo}.
\end{proof}

In the discussion before \S \ref{subsec1}, we have noted that
${C^\mathbb{A}(\ell,m;0)}^{\perp}$ is generated by its minimum-weight codewords. Moreover, analyzing the proofs of the results in this section,
it can be seen that  ${C^\mathbb{A}(\ell,m;1)}^{\perp}$ is generated by its minimum-weight codewords. It is, however, easier to
derive the result for ${C^\mathbb{A}(\ell,m;1)}^{\perp} = \RM(1, \delta)^{\perp} = \RM(\delta(q-1)-2, \, \delta)$ directly from
Theorem~\ref{thm:DingKeyForAGC} as shown below.

\begin{corollary}
For any positive integer $d$, the Reed-Muller codes $\RM(1, d)$ and $\RM(d(q-1)-2, \, d)$ are linear codes generated by their minimum-weight codewords.
\end{corollary}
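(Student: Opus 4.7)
The strategy is to specialize the main results of the paper to the trivial Grassmann setting where $\ell=1$. Fix the positive integer $d$ and take $\ell=1$ and $\ell'=d$, so that $m=d+1$ and $\delta=\ell\ell'=d$. Under this choice, the introduction's observation that $C^\mathbb{A}(\ell,m;1)=\RM(1,\delta)$ gives $C^\mathbb{A}(1,d+1;1)=\RM(1,d)$. Moreover, since $\ell=1$, the level $r=1$ coincides with the top level $r=\ell$, so this same code is also the affine Grassmann code $\C$ itself for these parameters.

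Having set things up, the first claim (that $\RM(1,d)$ is generated by its minimum-weight codewords) is immediate from Proposition~\ref{AffGC} applied with $r=\ell=1$, since that proposition asserts exactly that the minimum-weight codewords of $\Cr$ generate $\Cr$.

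For the second claim, apply Theorem~\ref{thm:DingKeyForAGC} to the same parameters: it tells us that $\C^{\perp}$ is generated by its minimum-weight codewords. By part (iv) of Proposition~\ref{pro:RMbasics}, the dual of $\RM(1,d)$ is $\RM(d(q-1)-2,\,d)$, so
\[
\C^{\perp} \;=\; \RM(1,d)^{\perp} \;=\; \RM(d(q-1)-2,\,d),
\]
and the conclusion follows.

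There is essentially no obstacle in this derivation beyond the bookkeeping of the specialization; the only thing to mention is the degenerate case $d=1,\,q=2$, where $d(q-1)-2=-1$ and $\RM(d(q-1)-2,d)$ is the zero code, for which the statement is vacuously true. All the work is already done in Proposition~\ref{AffGC} and Theorem~\ref{thm:DingKeyForAGC}; the corollary is an elegant repackaging for Reed--Muller codes, and this is precisely the ``cleaner'' derivation alluded to just before the statement.
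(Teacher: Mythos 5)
Your proposal is correct and follows exactly the paper's own derivation: specialize to $\ell=1$, $\ell'=d$, apply Proposition~\ref{AffGC} for $\RM(1,d)$ and Theorem~\ref{thm:DingKeyForAGC} together with Proposition~\ref{pro:RMbasics}(iv) for the dual. Your aside about the degenerate case $d=1$, $q=2$ is a reasonable extra precaution but does not change the argument.
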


\begin{proof}
Taking $r =\ell =1$ and $\ell'=d$ in Proposition~\ref{AffGC}, we see that $\RM(1, d)$ is generated by its minimum-weight codewords. Moreover taking
$\ell =1$ and $\ell'=d$ in Theorem~\ref{thm:DingKeyForAGC}, we see that $C^\mathbb{A}(1, d+1)^{\perp} = \RM(1,d)^{\perp} = \RM(d(q-1)-2, \, d)$
is generated by its minimum-weight codewords.
\end{proof}

\begin{remark}
\label{rem:otherlevels}
For the intermediate levels, generation by minimum-weight codewords is not true, in general. More precisely, if $1< r < \ell$, then
the minimum-weight codewords of $\Cr^{\perp}$ need not generate $\Cr^{\perp}$. For example, if $\ell=\ell'=3$ and $q=r=2$, then
the affine Grassmann code $C^\mathbb{A}(3,6;2)$ is a $[512,19,192]$-code, while its dual is a $[512,493,4]$-code, and
a computer verification shows that the number of codewords of weight 4 in $C^\mathbb{A}(3,6;2)^{\perp}$ and $C^\mathbb{A}(3,6;3)^{\perp}$
is the same! Hence the minimum-weight codewords of $C^\mathbb{A}(3,6;2)^{\perp}$ just generate $C^\mathbb{A}(3,6;3)^{\perp}$. In general, we have
$$
\C^{\perp} = C^\mathbb{A}(\ell,m;\ell)^{\perp} \subset C^\mathbb{A}(\ell,m;\ell-1)^{\perp}  \subset \cdots \subset
C^\mathbb{A}(\ell,m;2)^{\perp} \subset C^\mathbb{A}(\ell,m;1)^{\perp}
$$
and it seems plausible that for $1<r<\ell$, the minimum of weight codewords of $\Cr^{\perp}$ generate the smallest of these codes, namely, $\C^{\perp}$. In fact, the results of this section seem to show that the binomials and the non-forbidden monomials of type (ii) are generated by
the minimum-weight codewords of $\Cr^{\perp}$ for \emph{any} $r=1, \dots , \ell$. In particular, they are generated by the minimum-weight codewords of $\C^{\perp}$. The difficulty arises due to maximal non-forbidden monomials of type (i), i.e., those given by \eqref{eq:maxnfdlevelr}.
% and if $1< r < \ell$, then it is not clear
At any rate, a complete determination of the minimum-weight codewords of duals of affine Grassmann codes of any level and of the space generated by them could be an interesting problem.
\end{remark}

\section*{Acknowledgments}

\label{secAck}
\begin{small}
We are grateful to the Otto M{\o}nsted Foundation, which supported the visit of Sudhir Ghorpade to the Technical University of Denmark during
May-July 2010 when some of this work was carried out.
\end{small}

\end{document}